\documentclass[a4paper,onecolumn,unpublished]{quantumarticle}
\pdfoutput=1
\usepackage[utf8]{inputenc}
\usepackage[english]{babel}
\usepackage[T1]{fontenc}
\usepackage{float}
\usepackage[10pt]{moresize}
\usepackage{amsmath}
\usepackage{mathtools}
\usepackage{hyperref}
\usepackage{cleveref}

\input{pretex}

\usepackage[ruled, vlined]{algorithm2e}
\usepackage{pgfplots}
\pgfplotsset{compat=1.18} 
\usepgfplotslibrary{statistics}
\usepgfplotslibrary{colorbrewer} 

\newcommand{\indegree}[1]{\delta^-(#1)}
\newcommand{\outdegree}[1]{\delta^+(#1)}

\newcommand{\Mod}[1]{\ (\mathrm{mod}\ #1)}
\newcommand{\myrightarrow}[1]{\mathrel{\raisebox{-2pt}{$\xrightarrow{#1}$}}}

\begin{document}

\title{Dynamic quantum circuit compilation}

\author{Kun Fang}
\email{fangkun02@baidu.com}
\affiliation{Institute for Quantum Computing, Baidu Research, Beijing 100193, China}
\author{Munan Zhang}
\affiliation{Institute for Quantum Computing, Baidu Research, Beijing 100193, China}
\author{Ruqi Shi}
\affiliation{Institute for Quantum Computing, Baidu Research, Beijing 100193, China}
\author{Yinan Li}
\affiliation{School of Mathematics and Statistics and Hubei Key Laboratory of Computational Science, Wuhan University, Wuhan 430072, China}

\maketitle

\fontfamily{lmr}\selectfont

\begin{abstract}
Quantum computing has shown tremendous promise in addressing complex computational problems, yet its practical realization is hindered by the limited availability of qubits for computation. Recent advancements in quantum hardware have introduced mid-circuit measurements and resets, enabling the reuse of measured qubits and significantly reducing the qubit requirements for executing quantum algorithms. In this work, we present a systematic study of dynamic quantum circuit compilation, a process that transforms static quantum circuits into their dynamic equivalents with a reduced qubit count through qubit-reuse. We establish the first general framework for optimizing the dynamic circuit compilation via graph manipulation. In particular, we completely characterize the optimal quantum circuit compilation using binary integer programming, provide efficient algorithms for determining whether a given quantum circuit can be reduced to a smaller circuit and present heuristic algorithms for devising dynamic compilation schemes in general. Furthermore, we conduct a thorough analysis of quantum circuits with practical relevance, offering optimal compilations for well-known quantum algorithms in quantum computation, ansatz circuits utilized in quantum machine learning, and measurement-based quantum computation crucial for quantum networking. We also perform a comparative analysis against state-of-the-art approaches, demonstrating the superior performance of our methods in both structured and random quantum circuits. Our framework lays a rigorous foundation for comprehending dynamic quantum circuit compilation via qubit-reuse, bridging the gap between theoretical quantum algorithms and their physical implementation on quantum computers with limited resources.
\end{abstract}

\tableofcontents

\newpage
\section{Introduction}\label{sec:introduction}

Quantum computing has emerged as a promising avenue for solving intractable problems that are beyond the reach of classical computing, such as integer factoring~\cite{shor1994algorithms}, large database search~\cite{grover1996fast}, chemistry simulation~\cite{lanyon2010towards} and machine learning~\cite{biamonte2017quantum}. Nevertheless, existing quantum computers are limited by the number of qubits available for computation. To conclusively demonstrate the computational advantage of these devices compared to their classical counterparts in a variety of practical applications, we need to make efficient use of qubits when designing and executing quantum algorithms. 

A plethora of quantum algorithms have been traditionally formulated as \emph{static quantum circuits}, where the computation is performed on an initially prepared quantum state, and all measurements are applied at the end of the circuit to extract the computational results. However, the recent advancements in quantum hardware have paved the way for a more flexible approach, allowing for measurements and qubit resets to be executed in the midst of a quantum circuit. Moreover, these dynamic circuits permit the real-time evolution of the quantum circuit based on prior measurement outcomes~\cite{corcoles2021exploiting,pino2021demonstration,acharya2022suppressing}. This new paradigm of quantum computation, characterized by the ability to dynamically adjust the circuit, is referred to as a \emph{dynamic quantum circuit}, and plays a crucial role in the study of quantum error correction~\cite{fowler2012surface}, quantum communication~\cite{bennett1993teleporting}, and also measurement-based quantum computation~\cite{raussendorf2001one}.

\begin{figure}[!htb]
    \centering
    \begin{subfigure}[b]{.42\textwidth}
        \centering
        \includegraphics[width=0.7\textwidth]{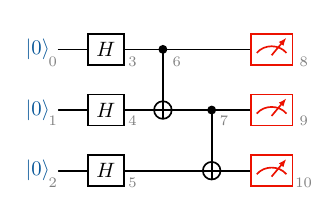}
        \caption{A static quantum circuit with $3$ qubits.}
        \label{fig:reducible example 1}
    \end{subfigure}
    \begin{subfigure}[b]{.55\textwidth}
        \centering
        \includegraphics[width=0.8\textwidth]{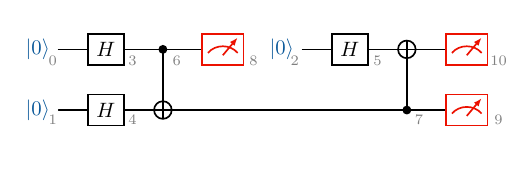}
        \caption{A dynamic quantum circuit with $2$ qubits.}
        \label{fig:reducible example 2}
    \end{subfigure}
    \caption{ An example of dynamic quantum circuit compilation from (a) to (b).}
    \label{fig:reducible example}
\end{figure}

This work investigates \emph{dynamic quantum circuit compilation}, which rewires static quantum circuits into equivalent dynamic circuits using fewer qubits through qubit-reuse strategies. An explicit example is illustrated in Figure~\ref{fig:reducible example}, where a 3-qubit static quantum circuit is compiled into a 2-qubit dynamic circuit by recycling the first qubit following its measurement instruction and subsequently reusing it for operations on the third qubit of the original static circuit. 
Dynamic quantum circuit compilation via qubit-reuse offers substantial advantages and addresses several critical facets of fault-tolerant quantum computation. Firstly, it efficiently reduces the number of qubits required for executing quantum algorithms, particularly valuable for implementing large-scale quantum circuits on near-term quantum computers and enhancing their classical simulation feasibility. Moreover, it effectively compacts the quantum circuit topology. This simplification minimizes the need for inserting swap gates and circumvents the allocation of faulty qubits when mapping logical circuits to specific quantum hardware devices, ultimately resulting in error reduction and fidelity improvement~\cite{Fei-CaQR, brandhofer2023optimal}. Furthermore, determining the minimum qubit requirement for executing specific quantum algorithms provides profound insights into the algorithms' complexity, aiding in the design of practical algorithms with quantum advantages.

The idea of compiling a quantum circuit via qubit-reuse was first introduced in~\cite{paler2016wire}, which employed the notion of wire recycling applied to predefined ancilla qubits. In~\cite{Fei-CaQR}, the authors developed a compiler-assisted tool and exploited the trade-off between qubit reuse, fidelity, gate count, and circuit duration. In~\cite{brandhofer2023optimal}, the authors introduced an SAT-based model for qubit-reuse optimization on near-term quantum devices. 
Previous studies have predominantly concentrated on superconducting quantum computers, known for their limited qubit connectivity and relatively short coherence time. But the situation is completely different for other quantum architectures, such as trapped-ion quantum computers. In these systems, qubits feature all-to-all connectivity and extremely long coherence time (on the order of tens of minutes~\cite{wang2021single}) but the scalability forms a primary bottleneck in their development. This motivates our emphasis on minimizing the number of qubits needed to execute a quantum circuit.

The work in~\cite{decross2022qubitreuse} studied the qubit-reuse compilation using the causal structure of the circuits and proposed a constraint programming optimization model capable of delivering exact solutions for small circuits, along with greedy algorithms designed to generate approximate solutions for large circuits. In this work, we conducted a thorough comparative analysis against their algorithms, demonstrating the superior performance of our methods in both structured and random quantum circuits. Notably, our framework successfully addresses an open challenge emphasized in their work, namely, the effective handling of quantum circuits with commutable structures and the ability to conduct compilation at the level of quantum algorithms, regardless of their specific quantum instruction sequences. This holds particular significance for practical applications in quantum machine learning~\cite{farhi2014quantum}, measurement-based quantum computation~\cite{briegel2009measurement}, and the pursuit of quantum supremacy~\cite{shepherd2009temporally}.

Quantum computers are expected to operate in a noisy environment with a limited number of qubits and without error correction in the near term. Therefore, circuit optimization is a crucial step to ensure the successful execution of quantum algorithms. This optimization can manifest in various forms, typically entailing modifications to the gate structure of a quantum circuit with the goal of enhancing result fidelity, such as reducing gate count by simplifying Clifford subcircuits~\cite{aaronson2004improved,bravyi2021clifford}, eliminating redundant gates~\cite{xu2022quartz}~\cite{qiskit,xu2023synthesizing}, and applying the Cartan decomposition to two-qubit gates~\cite{khaneja2000cartan}. Furthermore, circuit optimization may encompass the mapping of circuits to a specific quantum device architecture, aiming to minimize the number of inserted swap gates~\cite{qiskit,zulehner2018efficient} or the depth of the compiled circuit~\cite{Zhang-time-optimal} and schedule the braiding path between remote qubits~\cite{Fei-autobraid}. In contrast, dynamic circuit compilation is centered around the reordering of instructions and the reassignment of logical qubits, while preserving both the number and type of circuit instructions. This approach serves as a complementary strategy to other circuit optimization techniques and can be seamlessly integrated into the existing ones. For instance, it can be applied after the removal of redundant gates or before the mapping of the circuit to a specific quantum system. 

\paragraph{\textbf{Main contributions.}} In this work, we give a comprehensive investigation into dynamic quantum circuit compilation, a process that transforms a static quantum circuit into an equivalent dynamic circuit with a smaller size. In summary, we make the following contributions:
\begin{itemize}
    \item The first general framework for optimizing dynamic circuit compilation through graph manipulation and a rigorous mathematical model for optimal compilation in terms of qubit reuse. Our framework primarily targets qubit savings but is also adaptable to other scenarios, such as optimizing tradeoffs among circuit width, depth, and related factors.
    \item Efficient approaches to determine whether a given static quantum circuit can be reduced to a smaller circuit through qubit-reuse and heuristic algorithms to design dynamic circuit compilation schemes in general. Detailed time complexity analyses are also provided.
    \item Compiling quantum circuits with commutable structures, which is a significant feature in recent quantum applications. This successfully resolves an open challenge for circuit compilation in the prior study.
    \item Optimal compilations of quantum circuits with practical relevance, including well-known quantum algorithms in quantum computation, ansatz circuits applied in quantum machine learning, and measurement-based quantum computation crucial for quantum networking. These optimal compilations establish tight lower bounds and serve as benchmarks for other variants of qubit-reuse compilation methods.
    \item Numerical evaluations of our heuristic algorithms on both structured and random quantum circuits, highlighting the superior performance of our methods over existing approaches. In particular, experiments show that our approach outperforms in approximately $98.5\%$ of randomly generated quantum circuits and nearly $100\%$ of randomly generated instantaneous quantum polynomial (IQP) circuits. Noisy simulations also demonstrate a substantial enhancement in qubit reduction and the algorithm's performance for compiled circuits.
\end{itemize}

The rest of this work is structured as follows: Section~\ref{sec:preliminaries} and Section~\ref{sec: Quantum circuit and its representations} briefly introduce the notations and fundamental concepts employed throughout our study. Section~\ref{sec:Quantum circuit compilation via graph manipulation} establishes the core principles behind dynamic quantum circuit compilation through graph manipulation. Section~\ref{sec:determine reducibility} introduces various efficient approaches for determining the reducibility of a static quantum circuit. Section~\ref{sec:mathematical model} presents a rigorous mathematical model for optimal quantum circuit compilation. Section~\ref{sec:heuristic} introduces several heuristic algorithms for circuit compilation in general. Section~\ref{sec:examples} presents the analytical and numerical evaluations of the proposed methods and compares them with existing approaches. Section~\ref{sec:related work} reviews some related works. Finally, Section~\ref{sec:discussion} concludes our study and discusses related open problems. All proofs in the main text are delegated to the supplementary material. All algorithms in this work have been implemented in QNET, a quantum network toolkit developed by the Institute for Quantum Computing at Baidu. All the code referenced in this paper can be accessed online on GitHub~\footnote{\href{https://github.com/baidu/QCompute/tree/master/Extensions/QuantumNetwork}{https://github.com/baidu/QCompute/tree/master/Extensions/QuantumNetwork}}.

\section{Preliminaries}\label{sec:preliminaries}

\paragraph{Notation} Throughout this work, we denote $I_n$ as the identity matrix of size $n \times n$, $J_n$ as the all-one matrix of size $n \times n$, and $O_n$ as the zero matrix of size $n \times n$. We also use $E^k_{i,j}$ to represent a $k\times k$ matrix where the $(i,j)$-th entry is one, and zero otherwise. We use $n$ to represent the width of a quantum circuit, while $m$ signifies the number of quantum instructions. We use \textit{list}[$i$] to represent the $i$-th element of a list. Furthermore, we adopt the convention that all qubit registers in quantum circuits, along with the row and column indices of matrices, and the indices of lists in algorithms, all start from zero.

\paragraph{Boolean matrix} A Boolean matrix is a matrix whose entries are either $0$ or $1$. Let $A$ and $B$ be two Boolean matrices of size $k\times k$. Then their Boolean product is defined as $(A \odot B)_{ij} := \bigvee_{l=0}^{k-1} (A_{il} \wedge B_{lj})$ 
where $\vee$ is the logical OR and $\wedge$ is the logical AND. 

\paragraph{Directed Graph} 
A graph is an ordered pair $G = (V, E)$ comprising vertices $V$ and edges $E$. A graph is directed if its edges are ordered pairs of vertices. For any edge $(u,v)$ in a directed graph, $u$ is called the tail and $v$ is called the head of the edge. For a vertex $v$, the number of head ends adjacent to a vertex is called the indegree of the vertex which is denoted as $\indegree{v}$ and the number of tail ends adjacent to a vertex is its outdegree, which is denoted as $\outdegree{v}$. A vertex with zero indegree is called a root and a vertex with zero outdegree is called a terminal. A vertex $v$ is \emph{reachable} from another vertex $u$ if there exists a direct path from $u$ to $v$ in the graph. A \emph{directed acyclic graph} (DAG) is a directed graph with no directed cycles, which has been widely used in formulating the classical and quantum circuit model. A \emph{topological ordering} of a DAG is an ordering of its vertices into a sequence, such that for every edge the tail vertex of the edge occurs earlier in the sequence than the head vertex of the edge. This ordering can be efficiently obtained from its DAG but the result is not unique. A \emph{bipartite graph} is a graph whose vertices can be divided into two disjoint and independent sets $U$ and $V$ such that every edge connects a vertex in set $U$ to a vertex in set $V$. We write $G = (U, V, E)$ to denote a bipartite graph with parts $U$, $V$ and edges $E$. A bipartite graph is complete if every vertex in $U$ is connected to every vertex in $V$.

\paragraph{Matrix Representation of Graph}
Let $G = (V, E)$ be a directed graph where $V = \{v_0,\cdots, v_{k-1}\}$. Its \emph{adjacency matrix} is a Boolean matrix, denoted as $A(G)$, of size $k\times k$ whose $(i,j)$-th entry is one if the directed edge $(v_i, v_j) \in E$ and zero otherwise. If a bipartite graph $G = (U, V, E)$ with $U = \{u_0,\cdots,u_{k-1}\}$, $V = \{v_{0},\cdots, v_{k-1}\}$ and all edges pointing from $U$ to $V$, then its adjacency matrix can be written as $A(G) = 
    \bigl(\begin{smallmatrix}
    O_k & X\\
    O_k & O_k
    \end{smallmatrix}\bigr)$
where $X$ is a $k \times k$ Boolean matrix in which $X_{ij} = 1$ if $(u_i, v_j) \in E$. We call this submatrix the \emph{biadjacency matrix} of the bipartite graph and denote it as $B(G)$. The biadjacency matrix of a complete bipartite graph is an all-one matrix.
A matrix $A$ of size $k\times k$ is \emph{nilpotent} if there exists an integer $l$ with $1 \leq l \leq k$ such that $A^l = O_k$. A directed graph is acyclic if and only if its adjacency matrix is nilpotent (see~\cite[Theorem 9-17]{deo2017graph} or~\cite{li2022connections}).

\section{Quantum circuit and its representations}
\label{sec: Quantum circuit and its representations}

\subsection{Quantum computing basics}

\paragraph{Quantum states}
In quantum computing, a quantum bit or qubit is the fundamental unit of quantum information. A qubit can be in two \emph{computational basis states}, which are represented by the 2-dimensional vectors $\ket{0} = \bigl(\begin{smallmatrix}
1 \\
0 
\end{smallmatrix}\bigr) $ and $\ket{1} = \bigl(\begin{smallmatrix}
0 \\
1 
\end{smallmatrix}\bigr) $. Unlike classical bit, a qubit can be in a linear combination (\emph{superposition}) of the basis states, $\alpha \ket{0} + \beta \ket{1} = \bigl(\begin{smallmatrix}
\alpha \\
\beta 
\end{smallmatrix}\bigr)$, where $\alpha, \beta$ are complex numbers such that $|\alpha|^2 + |\beta|^2 = 1$. 

\paragraph{Quantum operations}
Quantum operations manipulate the state of qubits and encompass various actions such as quantum gates, quantum measurements and reset operations. Quantum gates are unitary operators that transform the state of the qubits. For example, the Hadamard gate $H$ can be used to create equal superposition: $H \ket{0} = (\ket{0} + \ket{1}) / \sqrt{2}$, while the controlled-NOT (CNOT) gate acts on $2$ qubits and maps the state $\ket{a, b}$ to $\ket{a, a \oplus b}$, where $\oplus$ denotes XOR operation. Quantum measurements extract classical information from quantum states, resulting in the collapse of the quantum state. For example, measuring a single qubit state $\alpha \ket{0} + \beta \ket{1}$ yields $\ket{0}$ with probability $|\alpha|^2$ or $\ket{1}$ with probability $|\beta|^2$. After a qubit is measured, it can be reset to a known state (typically the state $\ket{0}$) and to be used for subsequent computation.

\paragraph{Quantum circuits}
A quantum circuit is a mathematical and visual model used in quantum computing to represent and perform quantum computations. It is analogous to classical digital circuits used in classical computing. A quantum circuit consists of a series of quantum gates, each of which operates on one or more qubits. In this work, we focus on single-qubit and two-qubit gates without loss of generality and most of the results apply to multi-qubit gates with slight modification. A quantum circuit is \emph{static} if it does not contain mid-circuit reset operations and all measurements, are performed at the end of the circuit (e.g., Figure~\ref{fig:reducible example 1}). A quantum circuit is \emph{dynamic} if it contains mid-circuit measurements and reset operations (e.g., Figure~\ref{fig:reducible example 2}). A quantum circuit is \emph{reducible} if it can be written as an equivalent quantum circuit with fewer qubits. Otherwise, it is called \emph{irreducible}. A quantum circuit and its reduced circuit are considered equivalent if their measurement outcomes yield identical distributions. In the following discussion, a `quantum circuit' specifically refers to a circuit that incorporates a predetermined sequence of operations. However, for scenarios involving commutable gates or blocks, we utilize the term `quantum circuit with commutable structure'.

\subsection{Quantum circuit instructions}

Quantum circuits are conventionally depicted through circuit diagrams. Nevertheless, when considering circuit compilation, a more precise and streamlined approach involves employing quantum circuit instructions. This allows a quantum circuit to be presented as a list of instructions, structured according to the chronological order of their execution. Each entry within the instruction list corresponds to a distinct quantum operation within the circuit and is denoted by a quartet [ID, TYPE, QUBIT, PAR]. This quartet serves to uniquely identify the operation, specify its operation type, indicate the qubit(s) it acts upon, and provide any relevant parameters such as the rotation angle or the group tag in case where gates are commutable (set to `NONE' in cases where no parameters are applicable)~\cite{fang2023network}. For example, the instruction [$5$, RY, $1$, $\pi$] signifies the application of an $R_y$ rotation gate to qubit $q_1$, with the rotation angle set to $\pi$, and the corresponding instruction ID is $5$. Similarly, the instruction [$4$, CX, [$0$, $2$], NONE] represents the implementation of a controlled-NOT gate, with $q_0$ as the control qubit and $q_2$ as the target qubit, with an instruction ID of $4$. Likewise, the instruction [$7$, MEASURE, $2$, NONE] denotes the measurement of qubit $q_2$ in the computational basis, with an instruction ID of $7$. Finally, the instruction [$2$, RESET, 3, NONE] indicates the reset of qubit $q_3$ to the ground state, typically represented as the zero state $\ket{0}$, and carries an instruction ID of $2$. An example of quantum circuit instructions is given in Figure~\ref{fig:quantum circuit composition}(d).

\subsection{Graph representation of quantum circuit}

Given that a quantum circuit essentially constitutes a time-ordered sequence of quantum instructions, it is natural to employ a directed graph to represent the causal relationships among them. In this context, a quantum circuit finds its representation through a directed graph, effectively preserving the execution constraints of all quantum instructions. Specifically, each quantum instruction corresponds to a vertex within the graph. A directed edge starting from vertex $u$ and terminating at vertex $v$ signifies that the quantum instruction associated with vertex $u$ must be executed before the quantum instruction associated with vertex $v$. This directed graph is inherently acyclic, as the presence of any cycle would imply a dependency of past instructions on future ones, contravening the causal relationship inherent in the quantum circuit. Given this characteristic, we refer to this directed graph as the \emph{DAG representation} of the quantum circuit (see e.g. Figure~\ref{fig:quantum circuit compilation}).

In many instances of interest, the original circuit is not unique and can be represented with different ordering of gates. A typical example is given by the instantaneous quantum polynomial (IQP) circuits~\cite{shepherd2009temporally} in the form of $H^{\ox n} D H^{\ox n}$, where $H$ denotes the hadamard gate and $D$ constitutes a block of gates diagonal in the computational basis (e.g. constructed by randomly selecting gates from the set $\{\sqrt{CZ}, T\}$) and consequently can be applied in any temporal order. Other examples can be given by the quantum approximate optimization algorithm (QAOA) used in quantum machine learning~\cite{farhi2014quantum} and the preparation of graph states in the measurement-based quantum computation~\cite{briegel2009measurement}. These circuits do not have pre-determined structures, and imposing any dependencies among commuting gates may limit the opportunities for qubit-reuse. To handle circuits with commutable structure, we can avoid imposing dependencies between commutable gates and only establish edges between operations with pre-defined ordering when generating the DAG representation of the circuit. An illustrative example of circuits with commutable structure and the algorithm for generating the DAG representation is detailed in Appendix~\ref{app: further algorithms} of the supplemental material.

In this work, we will employ the DAG representation to investigate the dynamic quantum circuit compilation problem. To facilitate our analysis, all vertices within the DAG representation can be categorized into three distinct groups:

\begin{enumerate}
    \item Root vertices: vertices with zero indegree, which correspond to the first layer of quantum operations on each qubit (typically reset operations);
    \item Terminal vertices: vertices with zero outdegree, which correspond to the last layer of quantum operations on each qubit (typically quantum measurements);
    \item Internal vertices: vertices with nonzero indegree and outdegree, which correspond to the intermediate quantum operations (typically quantum gates).
\end{enumerate}

\begin{remark}
We assume that all static quantum circuits in this work start from reset operations and end with measurements. This implies that the circuit width is equal to the number of root vertices within the DAG representation.
\end{remark}

\BlankLine

Concerning our compilation problem, all essential information resides within the reachability from roots to terminals within the DAG representation, while the internal vertices facilitate and transmit this reachability. Consequently, we can further streamline the DAG representation, concentrating our focus on the \emph{simplified DAG representation} of the circuit. This simplified representation takes the form of a bipartite graph $(R, T, E)$, with $R$ and $T$ representing the sets of roots and terminals from the DAG representation. An edge $(r, t)$ in $E$ connects a root $r \in R$ to a terminal $t \in T$ if a directed path exists from $r$ to $t$ within the DAG representation. Figure~\ref{fig:quantum circuit compilation} showcases an example of a quantum circuit alongside its corresponding DAG and simplified DAG representations.

\subsection{Quantum circuit composition and subcircuit}

Quantum circuit composition stands as a fundamental technique for integrating modular components into complex quantum algorithms. An illustrative instance arises in the domain of quantum machine learning, exemplified by the Variational Quantum Eigensolver (VQE)~\cite{peruzzo2014variational} where specific quantum circuit patterns are joined sequentially. More specifically, given two quantum circuits of the same size, we define their \emph{circuit composition} as the sequential integration of quantum gate operations from a second circuit onto those of the first, while preserving the initialization in the first circuit and the measurement in the second. An illustration of quantum circuit composition and the sequence of circuit instructions is presented in Figure~\ref{fig:quantum circuit composition}. Conversely, we can also consider \emph{subcircuits} which are obtained by removing part of circuit instructions from the original circuit. This approach offers an alternative perspective for gaining insight into the essential structure of the circuit.

\begin{figure}[H]
    \centering
    \includegraphics[width=0.8\textwidth]{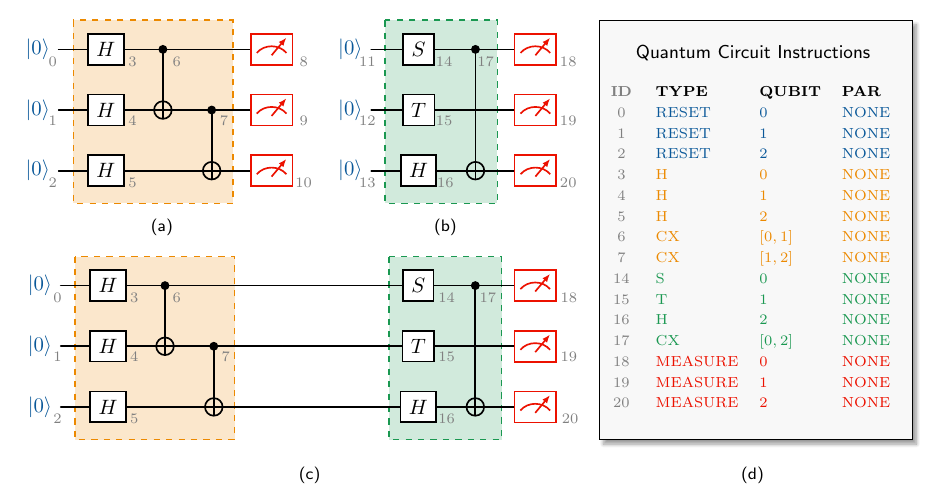}
    \vspace{-4mm}
    \caption{(Color online) An illustration of quantum circuit composition. (a), (b) quantum circuits; (c) composite quantum circuit; (d) the instructions of the composite quantum circuit.}
    \label{fig:quantum circuit composition}
\end{figure}

\section{Quantum circuit compilation via graph manipulation}
\label{sec:Quantum circuit compilation via graph manipulation}

In this section, we present the mathematical formulation of the quantum circuit compilation problem using its graph representation, which serves as a pivotal foundation for subsequent in-depth investigations. Drawing inspiration from the example illustrated in Figure~\ref{fig:reducible example}, dynamic quantum circuit compilation via qubit-reuse involves resetting a qubit after measurement, thereby deferring certain reset operations until after the measurement process. In the DAG representation, this corresponds to the addition of a directed edge from a terminal to a root, signifying that the corresponding reset operation occurs subsequent to the execution of the measurement.

For instance, Figure~\ref{fig:quantum circuit compilation 2} provides a DAG representation of the quantum circuit depicted in Figure~\ref{fig:reducible example 1} or~\ref{fig:quantum circuit compilation 1}. The qubit-reuse in Figure~\ref{fig:reducible example 2} is depicted by the addition of a new edge (marked as a dashed green line) from terminal $8$ to root $2$. With the inclusion of this new edge, the resulting DAG exactly mirrors the DAG representation of the dynamic quantum circuit depicted in Figure~\ref{fig:reducible example 2}. This new edge can also be integrated into the simplified DAG representation as shown in Figure~\ref{fig:quantum circuit compilation 3}.

\begin{figure}[!htb]
    \centering
    \begin{subfigure}[b]{.34\textwidth}
        \centering
        \includegraphics[width=0.8\textwidth]{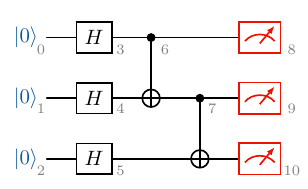}
        \caption{quantum circuit diagram}
        \label{fig:quantum circuit compilation 1}
    \end{subfigure}
    \hspace{0em}
    \begin{subfigure}[b]{.32\textwidth}
        \centering
        \includegraphics[width=0.8\textwidth]{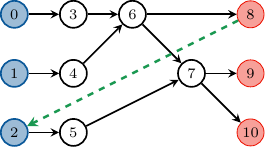}
        \caption{DAG with new edge}
        \label{fig:quantum circuit compilation 2}
    \end{subfigure}
    \hspace{-1em}
    \begin{subfigure}[b]{.3\textwidth}
        \centering
        \includegraphics[width=0.48\textwidth]{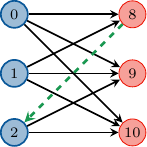}
        \caption{simplified DAG with new edge}
        \label{fig:quantum circuit compilation 3}
    \end{subfigure}
    \caption{(Color online) An illustration of quantum circuit compilation via graph manipulation. Root vertices are marked in blue, and terminal vertices in red. The newly added edge is marked in green.}
    \label{fig:quantum circuit compilation}
\end{figure}

The above idea of dynamic quantum circuit compilation is formally formulated as follows.

\begin{theorem}\label{thm: circuit compilation via graph manipulation}
Let $G = (R,T,E)$ be the simplified DAG of a static quantum circuit. Then compiling the quantum circuit via qubit-reuse is equivalent to adding edges to $G$ (Let $E'$ be the set of added edges) such that
\begin{itemize} 
    \item  $\forall\; (t,r) \in E'$, it has $t \in T$ and $r \in R$;
    \item $\forall\; (t,r) \in E'$, it has $\outdegree{t} = 1$ and $\indegree{r} = 1$;
    \item $G' = (R\cup T, E \cup E')$ is an acyclic graph.
\end{itemize}
The new graph $G'$ will be called the modified DAG in the subsequent discussion.
\end{theorem}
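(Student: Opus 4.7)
The plan is to establish the equivalence by interpreting each added edge $(t, r) \in E'$ as a faithful encoding of a single qubit-reuse event, and then verifying that the three stated conditions are exactly the necessary and sufficient constraints that such a collection of reuse events must satisfy.

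First I would fix the physical dictionary: adding an edge $(t, r)$ declares that once the measurement represented by terminal $t$ is performed, the hardware qubit on which it lives is reset and allocated to play the role of the reset instruction at root $r$. Under this dictionary, the first condition is simply the type constraint that a reuse can only route a measurement into a subsequent reset, so endpoints must lie in $T$ and $R$ respectively. The second and third conditions I would then argue are precisely the physical constraints of qubit-reuse: a single hardware qubit cannot simultaneously continue as the carrier of two distinct logical lines (so each reused terminal can feed at most one root, giving $\outdegree{t} = 1$), and a single root's reset instruction cannot be realized by two different measured qubits (giving $\indegree{r} = 1$). Acyclicity of $G'$ is then the requirement that the augmented causal order admits a valid execution schedule.

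For the forward (necessity) direction, I would start from any dynamic compilation of the given static circuit and read off $E'$ by pairing each reused measurement with the root whose reset it replaces. The three conditions then follow from per-hardware-qubit consistency: distinctness of the pairings yields the outdegree- and indegree-one constraints, and any cycle in $G'$ would force some instruction to precede itself in the schedule, contradicting executability. For the backward (sufficiency) direction, given $E'$ satisfying the three conditions I would construct a dynamic circuit explicitly: acyclicity provides a topological ordering of $G'$, while the outdegree/indegree constraints force the edges of $E'$ to decompose into disjoint directed chains on $R \cup T$. Each chain collapses into a single physical qubit wire onto which the corresponding logical operations of the static circuit are rescheduled in their original per-qubit order, producing an equivalent dynamic circuit with a reduced qubit count.

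The main obstacle I anticipate is the passage between the simplified bipartite graph $G$ and the full DAG of all circuit instructions, since executability is really a property of the latter. The key observation I would use is that every cycle in the augmented full DAG must traverse at least one edge of $E'$, whose endpoints lie entirely in $R \cup T$; by the root-to-terminal reachability relation that defines the edges of $G$, such a cycle projects down to a cycle in $G'$, and conversely any cycle in $G'$ lifts to a cycle in the augmented full DAG via the internal paths witnessing each edge of $E$. Consequently acyclicity of $G'$ is equivalent to acyclicity of the augmented full DAG, which closes the argument.
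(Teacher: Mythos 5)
Your proposal is correct and follows essentially the same route as the paper's proof: the same physical dictionary translating the three conditions into reuse constraints for the necessity direction, and a topological sort of the modified DAG to reschedule the instructions for the sufficiency direction. Your closing paragraph --- showing that any cycle in the augmented full instruction DAG must traverse an $E'$-edge and hence projects to a cycle in $G'$, and conversely that cycles in $G'$ lift via the internal paths witnessing each edge of $E$ --- is in fact more careful than the paper, which merely asserts without argument that the conditions are insensitive to whether one works with the full or the simplified DAG.
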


The first condition indicates the reuse of a qubit after another qubit has been measured. The second condition specifies that one qubit is reassigned to a single new qubit, and a reused qubit accommodates only one new qubit. The last condition ensures the equivalence of the compiled circuit. Conversely, any graph manipulation adhering to these conditions corresponds to a dynamic circuit compilation scheme. This is presented in Algorithm~\ref{alg:modified DAG to dynamic circuit} in the supplementary material.

\begin{remark}
\label{rem:ignore single-qubit gates}
Theorem~\ref{thm: circuit compilation via graph manipulation} shows that dynamic quantum circuit compilation only depends on the simplified DAG representation, which is independent of the single-qubit gates in the static quantum circuit to compile. So we will omit all single-qubit gates in the subsequent discussion.
\end{remark}

\section{Determine the reducibility of quantum circuit}
\label{sec:determine reducibility}

In this section, we present three distinct approaches for assessing the reducibility of a static quantum circuit. The first approach is rooted in the DAG representation of the quantum circuit, while the second approach involves a direct and comprehensive analysis of the circuit's structure, guided by a set of critical observations. Departing from these geometric perspectives, the third approach utilizes Boolean matrices to check reducibility from the subcircuits.

\subsection{Approach 1: determine the reducibility from graph}
\label{subsec:reducibility from graph}

It is clear from Theorem~\ref{thm: circuit compilation via graph manipulation} that a quantum circuit is reducible if and only if we can add at least one more edge to its DAG while adhering to the three specified conditions. This idea can be elaborated upon as follows:

\begin{proposition}\label{prop:reducibility from dag}
A static quantum circuit is irreducible if and only if its simplified DAG is a complete bipartite graph.
\end{proposition}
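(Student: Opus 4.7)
The plan is to prove each direction separately using Theorem~\ref{thm: circuit compilation via graph manipulation}, which characterizes reducibility in terms of permissible edge additions to the simplified DAG. The key observation I would emphasize is that in the simplified DAG $G = (R, T, E)$, all existing edges point from $R$ to $T$, so any directed path from a root $r$ to a terminal $t$ in $G$ is necessarily the single edge $(r, t)$ itself. This collapses the acyclicity condition on edge-additions into a simple local check.

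For the ``if'' direction, I would assume the simplified DAG is complete bipartite and show no edge can be legally added. Pick any candidate edge $(t, r)$ with $t \in T$, $r \in R$. Since $G$ is complete bipartite, $(r, t) \in E$ already, and together with the candidate edge this yields the directed cycle $r \to t \to r$. Hence the third condition of Theorem~\ref{thm: circuit compilation via graph manipulation} fails for every possible edge, so the circuit is irreducible.

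For the ``only if'' direction, I would contrapose: assume $G$ is not complete bipartite and exhibit a valid edge-addition. Non-completeness provides $r \in R$ and $t \in T$ with $(r, t) \notin E$. I would then add the edge $(t, r)$ and verify all three conditions of Theorem~\ref{thm: circuit compilation via graph manipulation}: it points from $T$ to $R$ by construction; since $t$ was a terminal (outdegree $0$) and $r$ was a root (indegree $0$), the degree constraints $\outdegree{t} = 1$ and $\indegree{r} = 1$ are satisfied; and acyclicity holds because any new directed cycle would have to traverse the new edge and therefore require a pre-existing path from $r$ back to $t$ in $G$, which in the bipartite structure means $(r, t) \in E$, contradicting the choice of $(r,t)$.

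I do not expect a serious obstacle here; the main subtlety is justifying that paths from roots to terminals in the simplified DAG are single edges (rather than longer alternating walks), which is immediate from the bipartite orientation but worth stating explicitly so that the acyclicity argument is airtight. Everything else reduces to a direct application of the conditions in Theorem~\ref{thm: circuit compilation via graph manipulation}.
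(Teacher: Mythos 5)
Your proof is correct and follows essentially the same route as the paper's: completeness forces any added terminal-to-root edge to close a cycle with the existing root-to-terminal edge, while a missing pair $(r,t)$ admits the edge $(t,r)$ without creating a cycle. Your explicit verification of the degree conditions and the observation that root-to-terminal paths in the simplified DAG are single edges are welcome refinements of the same argument.
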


This result shows that the reducibility of a static quantum circuit can be completely determined by its simplified DAG. In particular, we only need to check if the biadjacency matrix of the simplified DAG is an all-one matrix. For this, we can exploit Depth-First Search (DFS) algorithm to identify all paths from roots to terminals for a given DAG. The detailed algorithm of this approach along with its time complexity analysis is given in Algorithm~\ref{alg:reachability through dfs} and Proposition~\ref{prop:reducibility through DFS}.


\begin{algorithm}[!htb]
\small
\caption{Determining reducibility from graph}\label{alg:reachability through dfs}
\LinesNumbered

\KwIn{\\  \begin{tabular}{p{2.5cm}l}
     \textit{StaticCircuit} &  the instruction list of a static quantum circuit
\end{tabular}}

\BlankLine

\KwOut{\\  \begin{tabular}{p{2.5cm}l}
     \textit{True} or \textit{False} & whether the static quantum circuit is reducible
\end{tabular}}

\BlankLine
\BlankLine

Run Algorithm~\ref{alg:Converting static quantum circuit to DAG} in Appendix~\ref{app: further algorithms} to obtain the DAG of \textit{StaticCircuit} and record it as \textit{Digraph}\;

Let \textit{Roots} and \textit{Terminals} be the set of roots and terminals in \textit{Digraph}, respectively\;

Let $n$ be the length of \textit{Roots} and \textit{Terminals}\;

Initialize a $n \times n$ zero matrix $B$\;

\BlankLine

\For{$i=0$ \KwTo $n-1$}{
    \For{$j=0$ \KwTo $n-1$}{
        Apply DFS algorithm to search if there is a path from \textit{Roots}[$i$] to \textit{Terminals}[$j$]\;
        \If{such a path exists}{
            Set the $(i, j)$ entry of matrix $B$ to $1$\;
        }
    }
}

\BlankLine

\uIf{$B$ is all-one matrix}{
\Return \textit{False}\;
}
\lOther{\Return \textit{True}}

\end{algorithm}


\begin{proposition}\label{prop:reducibility through DFS}
For a given static quantum circuit with $n$ qubits and $m$ operations, its reducibility can be efficiently determined by Algorithm~\ref{alg:reachability through dfs} with a worst-case time complexity of $O(m n^2)$.
\end{proposition}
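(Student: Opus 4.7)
The plan is to split the argument into two parts: correctness and complexity. Correctness is essentially immediate from Proposition~\ref{prop:reducibility from dag}, since the matrix $B$ constructed in lines 5--10 of Algorithm~\ref{alg:reachability through dfs} is, by construction, precisely the biadjacency matrix of the simplified DAG associated with \textit{StaticCircuit}; hence the final check "$B$ is all-one" is equivalent to the simplified DAG being complete bipartite, which by Proposition~\ref{prop:reducibility from dag} is equivalent to irreducibility. I would state this explicitly at the start so that the rest of the proof can focus purely on runtime.

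For the complexity analysis, I would bound each block of Algorithm~\ref{alg:reachability through dfs} separately against the parameters $n$ (number of qubits) and $m$ (number of quantum instructions). First, running Algorithm~\ref{alg:Converting static quantum circuit to DAG} produces the DAG in time $O(m)$, and the resulting DAG has exactly $m$ vertices, while each vertex has outdegree at most two (since we consider only single- and two-qubit gates, measurements, and resets, as noted in Remark~\ref{rem:ignore single-qubit gates}), so the number of edges is also $O(m)$. Under the convention that each static circuit starts with $n$ resets and ends with $n$ measurements, identifying \textit{Roots} and \textit{Terminals} by scanning vertex in/out-degrees takes $O(m)$, and $|\text{\textit{Roots}}| = |\text{\textit{Terminals}}| = n$, so initializing $B$ costs $O(n^2)$.

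The dominant cost is the nested loop in lines 5--10. For each of the $n^2$ root--terminal pairs, a single DFS call on the DAG runs in $O(|V|+|E|) = O(m)$ time, giving a total of $O(m n^2)$ for this block. Finally, checking whether $B$ equals the all-one matrix costs $O(n^2)$. Summing all contributions gives $O(m) + O(n^2) + O(m n^2) + O(n^2) = O(m n^2)$, which is the claimed bound.

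The only mildly subtle point, and the one I would emphasize, is the $O(m)$ edge bound on the DAG: without it, the per-pair DFS cost could in principle blow up to $O(m^2)$ and destroy the stated complexity. This follows from the fact that each instruction vertex is linked only to the most recent prior instruction on each of its (at most two) qubits, so the edge count is at most $2m$. Everything else is routine bookkeeping, and I would not grind through the constants.
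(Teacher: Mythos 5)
Your proposal is correct and follows essentially the same route as the paper's own proof: construct the DAG in $O(m)$ time, observe that $|V|=m$ and $|E|=O(m)$ because each instruction contributes at most two edges, run one $O(m)$ DFS per root--terminal pair for $O(mn^2)$ total, and finish with the $O(n^2)$ all-one check. The explicit correctness step via Proposition~\ref{prop:reducibility from dag} is left implicit in the paper but is a harmless (and welcome) addition.
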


\begin{remark}
Note that this approach can be used to determine the reducibility of quantum circuits with commutable structures.
\end{remark}

\subsection{Approach 2: determine the reducibility from qubit reachability}
\label{subsec:reducibility from circuit sturcture}

It is worth noting that within the DAG representation of a quantum circuit, the connections between roots and terminals on distinct qubits are effectively established by vertices that correspond to two-qubit gates in the circuit. With this in mind, we introduce the concept of qubit reachability within a quantum circuit and present our second approach to determining its reducibility.

\begin{definition}[Reachability between qubits]
\label{def:reachability of qubits}
Given an instruction list of an $n$-qubit quantum circuit acting on the set of qubits $Q = \{q_0,q_1,\cdots,q_{n-1}\}$. Then any two-qubit instruction introduces two reachability relations $q_i \myrightarrow{k} q_j $ and $q_j \myrightarrow{k} q_i$
where $q_i$ and $q_j$ are the qubits upon which the instruction operates and $k$ is the order index of the instruction within the instruction list.
A qubit $q_i$ reaches $q_j$ (or, equivalently, $q_j$ is reachable from $q_i$), denoted as $q_i \to q_j$, if there exists a sequence of relations $q_i \myrightarrow{k_1} q_{l_1},\; q_{l_1} \myrightarrow{k_2} q_{l_2},\; \cdots,\; q_{l_s} \myrightarrow{k_{s+1}} q_{j}$ 
such that $k_1 \leq k_2\leq \cdots \leq k_{s+1}$. Moreover, qubits $q_i$ and $q_j$ are mutually reachable if $q_i$ reaches $q_j$ and vice versa.
\end{definition}

To illustrate this definition, consider the quantum circuit in Figure~\ref{fig:quantum circuit compilation}(a). Here the order indices of each double-qubit instructions are the same as their IDs. For the first CNOT instruction acting on $q_0$ and $q_1$, it introduces two qubit relations $q_0 \myrightarrow{6} q_1$ and $q_1 \myrightarrow{6} q_0$. So we have $q_0 \to q_1$ and $q_1\to q_0$, that is, they are mutually reachable. We can also see that $q_0 \to q_2$ because we have relation $q_0 \myrightarrow{6} q_1$ and $q_1 \myrightarrow{7} q_2$. But the reverse direction $q_2 \to q_0$ does not hold.

The following result demonstrates that qubit reachability constitutes a necessary and sufficient condition for determining circuit reducibility.

\begin{proposition}\label{prop:reducibility from qubit reachability}
A static quantum circuit is irreducible if and only if any two qubits of this quantum circuit are mutually reachable.
\end{proposition}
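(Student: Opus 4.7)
The plan is to deduce this from Proposition~\ref{prop:reducibility from dag}. By that proposition, irreducibility is equivalent to the simplified DAG $(R,T,E)$ being a complete bipartite graph. Since each qubit of the static circuit begins with a reset and ends with a measurement, we have $|R|=|T|=n$ and can index $R=\{r_0,\ldots,r_{n-1}\}$, $T=\{t_0,\ldots,t_{n-1}\}$ by qubits. The proposition then reduces to proving: for every pair of distinct qubits $q_i$ and $q_j$, the DAG contains both a directed path from $r_i$ to $t_j$ and from $r_j$ to $t_i$ if and only if $q_i$ and $q_j$ are mutually reachable in the sense of Definition~\ref{def:reachability of qubits}. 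The case $i=j$ is trivial on both sides: $r_i\to t_i$ along qubit $q_i$'s own wire always exists.

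For the forward direction, I would take a directed path $r_i\to v_1\to\cdots\to v_s\to t_j$ in the DAG. Consecutive vertices on such a path must share at least one qubit, since that is the only source of edges in the DAG. A single-qubit vertex preserves the active qubit, so any change of qubit along the path must occur precisely at a two-qubit gate vertex. Recording the indices of these gates yields a qubit reachability chain $q_i\myrightarrow{k_1}q_{l_1}\myrightarrow{k_2}\cdots\myrightarrow{k_{s+1}}q_j$ whose indices are monotonically nondecreasing because the DAG is acyclic and ordered by execution time. This gives $q_i\to q_j$, and the symmetric argument applied to a path from $r_j$ to $t_i$ gives $q_j\to q_i$.

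For the converse, I would start from a reachability chain $q_i\myrightarrow{k_1}q_{l_1}\myrightarrow{k_2}\cdots\myrightarrow{k_{s+1}}q_j$ and glue together wire segments in the DAG: from $r_i$ follow $q_i$'s wire up to the two-qubit gate indexed $k_1$, switch to $q_{l_1}$ at that gate, follow $q_{l_1}$'s wire to the gate indexed $k_2$, and so on, finishing by following $q_j$'s wire from the gate indexed $k_{s+1}$ to $t_j$. The monotonicity $k_1\leq k_2\leq\cdots\leq k_{s+1}$ ensures that every inter-gate segment traverses DAG edges in the correct forward direction, producing a valid directed path from $r_i$ to $t_j$. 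Applying this construction to both orderings of each qubit pair shows that mutual reachability implies the simplified DAG is complete bipartite.

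The main technical obstacle is the geometric observation underlying both directions: qubit switches along a DAG path are pinned to two-qubit gate vertices, and conversely any reachability chain of two-qubit gates can be lifted to a DAG path by walking along the connecting single-qubit wires. Once this lifting is made precise, using Remark~\ref{rem:ignore single-qubit gates} to absorb single-qubit operations and reduce to the essential skeleton of two-qubit gates, the remaining arguments amount to verifying index monotonicity and invoking Proposition~\ref{prop:reducibility from dag}.
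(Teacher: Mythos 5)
Your proposal is correct and follows essentially the same route as the paper: the paper's proof likewise observes that a qubit reachability chain corresponds exactly to a directed path in the DAG from the $i$-th root to the $j$-th terminal, so that mutual reachability of all qubit pairs is equivalent to the simplified DAG being complete bipartite, and then concludes via Proposition~\ref{prop:reducibility from dag}. Your write-up merely spells out the two directions of that correspondence (qubit switches pinned to two-qubit gate vertices, and index monotonicity from the topological ordering) in more detail than the paper does.
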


Compared to the approach in Section~\ref{subsec:reducibility from graph}, the qubit reachability approach does not require the explicit construction of the DAG or the use of the DFS algorithm to derive the simplified DAG. Instead, we can establish qubit reachability by traversing the circuit instructions only once, progressively building up the reachability through a transitive rule. 

For convenience, define the reachable set of $q_i$ by $Q_i := \{q_j \in Q: q_j \to q_i\} \cup \{q_i\}$.
This set is updated with each double-qubit instruction. Taking Figure~\ref{fig:quantum circuit compilation}(a) as an example, before the second CNOT instruction on $q_1$ and $q_2$, the reachable sets are $Q_1 = \{q_0, q_1\}$ and $Q_2 = \{q_2\}$. Subsequently, after this instruction, any qubit in the set $Q_1 \cup Q_2 = \{q_0, q_1, q_2\}$ can reach both $q_1$ and $q_2$. So the reachable sets are updated to $Q_1 = \{q_0, q_1, q_2\}$, and $Q_2 = \{q_0, q_1, q_2\}$.

By this transitive rule, we can effectively determine circuit reducibility. The algorithm for this procedure is outlined in Algorithm~\ref{alg:reducibility qubit reachability}, followed by its time complexity in Proposition~\ref{prop:reducibility through structure}.


\begin{algorithm}[!htb]
\small
\caption{Determining reducibility from qubit reachability}\label{alg:reducibility qubit reachability}
\LinesNumbered
\KwIn{\\  \begin{tabular}{p{2.5cm}l}
     \textit{StaticCircuit} &  the instruction list of a static quantum circuit
\end{tabular}}

\BlankLine

\KwOut{\\  \begin{tabular}{p{2.5cm}l}
     \textit{True} or \textit{False} & whether the static quantum circuit is reducible
\end{tabular}}

\BlankLine
\BlankLine

Let $n$ be the quantum circuit width\;

Initialize a list \textit{ReachableSets} of length $n$, with each element initialized as an empty set\;

\BlankLine

\lFor{$i=0$ \KwTo $n-1$}{add $q_i$ to \textit{ReachableSets}[$i$]}

\ForEach{Instruction {\rm \textbf{in}} StaticCircuit}{
\If{Instruction is a double qubit gate}{
Record the QUBIT values of \textit{Instruction} as $i$ and $j$ respectively\;
Calculate \textit{Union} $=$ \textit{ReachableSets}[$i$] $\cup$ \textit{ReachableSets}[$j$]\;
Set \textit{ReachableSets}[$i$] and \textit{ReachableSets}[$j$] to \textit{Union}\;

}
}

\BlankLine

\ForEach{ReachableSet {\rm \textbf{in}} ReachableSets}{
\If{the length of ReachableSet is less than $n$}{
\Return \textit{Ture}}
}
\lOther{\Return \textit{False}}
\end{algorithm}

\begin{proposition}\label{prop:reducibility through structure}
For a given static quantum circuit with $n$ qubits and $m$ operations, its reducibility can be efficiently determined by Algorithm~\ref{alg:reducibility qubit reachability} with a worst-case time complexity of $O(m n)$.
\end{proposition}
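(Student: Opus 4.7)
The plan is to establish correctness and then read off the time bound from the pseudocode. For correctness, I would appeal to Proposition~\ref{prop:reducibility from qubit reachability}: it suffices to show that after processing all instructions, \textit{ReachableSets}[$i$] equals the set $Q_i = \{q_j : q_j \to q_i\} \cup \{q_i\}$ for every $i$. Then the test in the final loop (each reachable set has size $n$) is precisely the mutual-reachability condition.

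To prove the set-identity, I would proceed by induction on the number of two-qubit instructions processed so far. The base case is immediate from the initialization ($Q_i = \{q_i\}$ when no instructions have been seen). For the inductive step, suppose we are about to process the $(k{+}1)$-th two-qubit instruction acting on qubits $q_i, q_j$. By Definition~\ref{def:reachability of qubits}, a qubit $q_l$ reaches $q_i$ (respectively $q_j$) after this instruction iff either (a) $q_l$ already reached $q_i$ or $q_j$ via an earlier instruction, in which case the new instruction can be appended to extend the reaching sequence, or (b) $q_l$ already reached $q_i$ or $q_j$ before this instruction. Both cases are captured by taking the union of the two previous reachable sets, which is exactly what the algorithm does. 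Conversely, no other qubit can become reaching through this single instruction, since Definition~\ref{def:reachability of qubits} only propagates reachability along two-qubit instructions. Single-qubit instructions, measurements and resets contribute nothing to the reachability relation and are correctly ignored by the algorithm.

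For the complexity, note that the main loop traverses \textit{StaticCircuit}, whose length is $m$. For each two-qubit instruction, the algorithm computes the union of two subsets of $Q$ and assigns the result to two entries of \textit{ReachableSets}. Since $|Q| = n$, each such union and assignment costs $O(n)$ time (using a suitable data structure such as a bit-vector of length $n$, the union is a componentwise OR). Summed over at most $m$ instructions this contributes $O(mn)$. The final loop inspects $n$ sets each of size at most $n$, costing $O(n^2)$. Since every qubit must undergo at least a reset and a measurement, we have $m \geq 2n$, so $O(mn + n^2) = O(mn)$.

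The main obstacle is the inductive correctness argument, specifically verifying that the transitive closure is built up correctly via the single-pass update rule, rather than requiring a second pass: one must check that when two sets are merged at instruction $k{+}1$, any future merge involving either of them will automatically inherit the full merged contents, so that no reachability relation is ever ``lost'' or needs to be recomputed. This follows from the invariant that \textit{ReachableSets}[$i$] and \textit{ReachableSets}[$j$] become the \emph{same} set after the merge (both pointers carry the union), and subsequent updates to either will monotonically grow this shared set. Once this invariant is stated cleanly, the induction and the complexity bound both fall out directly.
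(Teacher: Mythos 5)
Your complexity accounting matches the paper's proof almost line for line: $O(n)$ initialization, $O(n)$ per union over $m$ instructions giving $O(mn)$, and a final loop that is subsumed (the paper charges it $O(n)$ since only set \emph{lengths} are inspected; your more conservative $O(n^2)$ plus $m\ge 2n$ also works). The paper's proof in fact consists \emph{only} of this complexity analysis — correctness is delegated to the surrounding discussion and Proposition~\ref{prop:reducibility from qubit reachability} — so your explicit induction that \textit{ReachableSets}$[i]=Q_i$ after each instruction is a welcome addition rather than a divergence. The induction itself is sound (modulo the fact that your cases (a) and (b) as written say the same thing; the intended dichotomy is presumably ``$q_l$ already reached the \emph{other} qubit $q_j$, and the new relation $q_j\myrightarrow{k+1}q_i$ extends the sequence'' versus ``$q_l$ already reached $q_i$ directly''), and it correctly uses the non-decreasing-index requirement of Definition~\ref{def:reachability of qubits} to justify that a single left-to-right pass suffices.

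However, the invariant you state in your final paragraph is wrong as written, and if it held the algorithm would be incorrect. You claim that after merging, \textit{ReachableSets}$[i]$ and \textit{ReachableSets}$[j]$ ``become the same set'' and that ``subsequent updates to either will monotonically grow this shared set.'' They must \emph{not} remain a shared, jointly-growing set: in the paper's own worked example (Figure~\ref{fig:quantum circuit compilation}(a)), after the first CNOT one has $Q_0=Q_1=\{q_0,q_1\}$, and after the second CNOT on $q_1,q_2$ the set $Q_1$ grows to $\{q_0,q_1,q_2\}$ while $Q_0$ must stay $\{q_0,q_1\}$, precisely because $q_2\not\to q_0$ (a reaching sequence would need to use the earlier instruction after the later one, violating $k_1\le k_2$). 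An implementation that aliased the two entries and mutated in place would compute the full undirected transitive closure instead of the time-ordered reachability, and would wrongly declare such circuits irreducible. The correct justification for ``nothing is lost'' is simply your induction hypothesis: at the moment of the $(k{+}1)$-th merge, each entry holds exactly $Q^{(k)}_\cdot$ by value, the union is assigned as the new value of the two touched entries only, and all other entries are untouched — which is exactly the recurrence $Q^{(k+1)}_i=Q^{(k+1)}_j=Q^{(k)}_i\cup Q^{(k)}_j$ and $Q^{(k+1)}_l=Q^{(k)}_l$ otherwise. With the final paragraph replaced by this statement, your proof is complete and consistent with the paper's.
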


\subsection{Approach 3: determine the reducibility from matrix}
\label{subsec:reducibility Boolean matrix}

The preceding approach by qubit reachability can be further formulated via Boolean matrix manipulation, providing greater convenience for analytical studies, particularly for demonstrating the optimal compilations detailed in Section~\ref{sec:examples}.

Let $\cC_1 \circ \cC_2$ be a composition of quantum circuits $\cC_1$ and $\cC_2$ and let $B(\cC)$ be the biadjacency matrix of the simplified DAG of the quantum circuit $\cC$. Then we have the following relation for quantum circuit composition.

\begin{proposition}\label{prop: biadjacency matrix relation of composite circuit}
Let $\cC_1$ and $\cC_2$ be two static quantum circuits. Then $B(\cC_1 \circ \cC_2) = B(\cC_1) \odot B(\cC_2)$.
\end{proposition}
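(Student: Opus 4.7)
The plan is to reduce the identity to a straightforward path-decomposition argument on the DAG representation, leveraging that the composition operation concatenates the two circuits qubit-by-qubit at a well-defined interface.

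First I would unpack the structure of the DAG representation of $\cC_1 \circ \cC_2$. By the definition of circuit composition, the measurement terminals of $\cC_1$ on qubit $l$ are discarded in favor of the subsequent instructions of $\cC_2$ on qubit $l$, which in turn no longer begin with a reset root. Equivalently, one may keep both the terminal of $\cC_1$ on qubit $l$ and the root of $\cC_2$ on qubit $l$ as auxiliary vertices and insert a directed edge between them, since the causal ordering forces every operation of $\cC_2$ on qubit $l$ to occur after every operation of $\cC_1$ on qubit $l$. This yields a DAG for $\cC_1 \circ \cC_2$ whose set of roots is exactly the roots of $\cC_1$ and whose set of terminals is exactly the terminals of $\cC_2$, so that $B(\cC_1 \circ \cC_2)$ is indeed a square matrix of the same size as $B(\cC_1)$ and $B(\cC_2)$.

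Next I would analyze the reachability structure. By definition, $(B(\cC_1 \circ \cC_2))_{ij} = 1$ precisely when there is a directed path in the DAG of the composite circuit from the $i$-th root to the $j$-th terminal. Any such path, read in order, must cross the interface at least once; let $l$ denote the qubit index at which it first crosses. Then the prefix of the path lies entirely within the DAG of $\cC_1$ and connects the $i$-th root of $\cC_1$ to the $l$-th terminal of $\cC_1$, while the suffix lies within the DAG of $\cC_2$ and connects the $l$-th root of $\cC_2$ to the $j$-th terminal of $\cC_2$. Conversely, given a pair of such sub-paths, their concatenation via the interface edge yields a valid root-to-terminal path in $\cC_1 \circ \cC_2$. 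Hence $(B(\cC_1 \circ \cC_2))_{ij} = 1$ if and only if there exists some index $l$ with $(B(\cC_1))_{il} = 1$ and $(B(\cC_2))_{lj} = 1$, which is exactly the Boolean product formula $(B(\cC_1) \odot B(\cC_2))_{ij} = \bigvee_{l} (B(\cC_1))_{il} \wedge (B(\cC_2))_{lj}$.

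The argument is conceptually simple, so I do not anticipate a hard obstacle; the only subtle point is being careful about the possibility that a path crosses the interface multiple times (selecting the first crossing removes any ambiguity) and confirming that the convention on root/terminal indexing is consistent across $\cC_1$, $\cC_2$, and their composition, so that the qubit used at the interface is indexed identically on both sides. With these bookkeeping details settled, the equivalence transfers directly to the matrix level and gives the claimed identity.
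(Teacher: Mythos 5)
Your proof is correct and takes essentially the same route as the paper's: both glue the two representations at the terminal-of-$\cC_1$/root-of-$\cC_2$ interface and decompose a root-to-terminal path of the composite according to the qubit index at which it crosses that interface, which yields exactly the Boolean product $\bigvee_{l} (B(\cC_1))_{il} \wedge (B(\cC_2))_{lj}$. The only cosmetic difference is that you work on the full DAG and explicitly handle the (in fact unique) interface crossing, whereas the paper argues directly on the simplified DAGs.
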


Note that if an $n$-qubit quantum circuit only contains one two-qubit gate, e.g., a CNOT gate acting on the $i$-th and $j$-th qubit, its biadjacency matrix is given by $B_{ij} = I_n + E^n_{ij} + E^n_{ji}$
where $E^n_{ij}$ is the matrix whose $(i,j)$ entry is one and zero otherwise. Since any quantum circuit can be seen as the composition of subcircuits containing only a single two-qubit gate, as per  Proposition~\ref{prop: biadjacency matrix relation of composite circuit}, the computation of a quantum circuit's biadjacency matrix entails the product of a sequence of matrices in the form of $B_{ij}$.

A more in-depth examination of the Boolean product reveals that there is no need to perform explicit matrix multiplication. That is, for any matrix $A$, we have $A \odot B_{ij} = (a_1, \cdots, a_i \lor a_j, \cdots, a_j \lor a_i, \cdots, a_n)$
where $a_i$ is the $i$-th column of the matrix $A$ and $a_i \lor a_j$ represents entrywise OR of $a_i$ and $a_j$.
In other words, the impact of multiplying a matrix $B_{ij}$ is equivalent to replacing the $i$-th and $j$-th columns of $A$ with $a_i \lor a_j$. This gives the following algorithm for checking the reducibility.

\begin{algorithm}[!htb]
\small
\caption{Determining reducibility from matrix}\label{alg:reducibility through Boolean matrix}
\LinesNumbered
\KwIn{\\  \begin{tabular}{p{2.5cm}l}
     \textit{StaticCircuit} &  the instruction list of a static quantum circuit
\end{tabular}}

\BlankLine

\KwOut{\\  \begin{tabular}{p{2.5cm}l}
     \textit{True} or \textit{False} & whether the static quantum circuit is reducible
\end{tabular}}

\BlankLine
\BlankLine

Let $n$ be the quantum circuit width\;

Initialize $B$ as an $n\times n$ identity matrix.\;

\BlankLine

\ForEach{Instruction {\rm \textbf{in}} StaticCircuit}{
    \If{Instruction is a two-qubit gate}{
        Record the QUBIT values of \textit{Instruction} as $i$ and $j$ respectively\;
        Calculate $B' = B[i] \lor B[j]$, where $B[k]$ is the $k$-th column of matrix $B$\;
        Set $B[i] = B'$ and $B[j] = B'$\;
    }
}

\uIf{$B$ is all-one matrix}{
\Return \textit{False}\;
}
\lOther{\Return \textit{True}}
\end{algorithm}


\begin{proposition}\label{prop:reducibility through matrix}
For a given static quantum circuit with $n$ qubits and $m$ operations, its reducibility can be efficiently determined by Algorithm~\ref{alg:reducibility through Boolean matrix} with a worst-case time complexity of $O(m n)$.
\end{proposition}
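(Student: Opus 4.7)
The plan is to prove correctness of Algorithm~\ref{alg:reducibility through Boolean matrix} first, and then perform a straightforward accounting of its running time. By Proposition~\ref{prop:reducibility from dag}, it suffices to show that the Boolean matrix $B$ returned by the algorithm is exactly the biadjacency matrix $B(\cC)$ of the simplified DAG of the input static circuit $\cC$; then the final all-one check is precisely the reducibility criterion.

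I would establish $B = B(\cC)$ by induction on the prefix of \textit{StaticCircuit} processed so far. The base case corresponds to the ``empty'' circuit comprising only the initial resets and final measurements on the $n$ wires; its simplified DAG matches each root to its own terminal, so its biadjacency matrix is $I_n$, matching the initialization of $B$. For the inductive step, suppose the prefix $\cC'$ has been processed correctly with $B = B(\cC')$. Processing a single-qubit gate is a no-op in the algorithm, which is consistent with Remark~\ref{rem:ignore single-qubit gates} and with the fact that inserting a single-qubit gate cannot change any root-to-terminal reachability. Processing a two-qubit instruction acting on qubits $i,j$ corresponds, at the circuit level, to composing $\cC'$ with a subcircuit $\cC_{ij}$ whose only two-qubit gate is precisely this instruction, so by Proposition~\ref{prop: biadjacency matrix relation of composite circuit} the new biadjacency matrix equals $B(\cC') \odot B(\cC_{ij}) = B(\cC') \odot (I_n + E^n_{ij} + E^n_{ji})$. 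Using the column-update identity $A \odot B_{ij} = (a_0,\ldots,a_i\lor a_j,\ldots,a_j\lor a_i,\ldots,a_{n-1})$ recorded in the paragraph preceding the algorithm, this is exactly the column replacement carried out in lines 5--6. Hence after processing the entire instruction list we have $B = B(\cC)$, and Proposition~\ref{prop:reducibility from dag} finishes the correctness argument.

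For the running time, initializing $B$ as the $n\times n$ identity matrix costs $O(n^2)$. For each of at most $m$ instructions, the algorithm spends $O(1)$ deciding whether it is a two-qubit gate, and, if so, $O(n)$ to compute the entrywise OR of two length-$n$ columns and overwrite two columns of $B$. The final all-one test takes $O(n^2)$. This yields a total cost of $O(n^2 + mn)$. Under the standing assumption (recorded in Remark immediately after Theorem~\ref{thm: circuit compilation via graph manipulation}) that every circuit begins with a reset and ends with a measurement on each wire, we have $m \geq 2n$, so $n^2 = O(mn)$ and the overall complexity is $O(mn)$.

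The main technical subtlety is cleanly verifying the inductive step: one must check both that skipping single-qubit instructions (as well as resets and measurements) is justified by their triviality in the simplified DAG, and that every two-qubit gate corresponds to composition with a subcircuit whose biadjacency matrix has the specific form $I_n + E^n_{ij} + E^n_{ji}$. These are essentially bookkeeping verifications against the definitions in Section~\ref{sec: Quantum circuit and its representations} rather than deep obstacles, with the column-update shortcut highlighted before the algorithm doing most of the work.
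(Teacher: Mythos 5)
Your proposal is correct and the running-time accounting is essentially identical to the paper's proof, which likewise charges $O(1)$ per instruction except for the $O(n)$ entrywise OR on two-qubit gates and $O(n^2)$ for the final all-one test, concluding $O(mn)+O(n^2)=O(mn)$. The only difference is that the paper's proof addresses complexity alone, leaving correctness to the main-text discussion (Proposition~\ref{prop: biadjacency matrix relation of composite circuit} and the column-update identity for $A\odot B_{ij}$), whereas you formalize that same discussion into an explicit induction on the instruction prefix and give a cleaner justification ($m\ge 2n$ from the reset/measurement convention) for absorbing the $O(n^2)$ terms; both additions are sound and consistent with the paper.
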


\section{Optimal quantum circuit compilation}
\label{sec:mathematical model}

In this section, we introduce the mathematical model that characterizes optimal quantum circuit compilation. This model serves as the foundation for deriving heuristic algorithms and analyzing optimal compilation schemes for specific quantum circuits in the subsequent sections.

\begin{proposition}\label{prop:compilation via BIP}
Finding the optimal dynamic circuit compilation scheme of a static quantum circuit with $n$ qubits is equivalent to solving the following binary integer programming problem with $n^2$ Boolean variables $F_{ij}$: 
\begin{subequations}
\allowdisplaybreaks
\begin{align}
\alpha := \max \;& \sum_{i,j=0}^{n-1} F_{ij} \\
\text{\rm s.t.}\; 
& F_{ij}  \leq \bar{B}_{ij},\, \forall i,j \in \{0, 1, ..., n-1\} \label{eq: submatrix condition}\\
& \sum_{j=0}^{n-1} F_{ij} \le 1, \, \forall i \in \{0, 1, ..., n-1\}  \label{eq: row sum condition} \\
& \sum_{i=0}^{n-1} F_{ij} \le 1, \, \forall j \in \{0, 1, ..., n-1\}  \label{eq: col sum condition} \\
& \begin{pmatrix}
 O_n & B \\
 F & O_n
\end{pmatrix} \text{ nilpotent },\label{eq: nilpotent condition}
\end{align}
\label{eq: exact optimization}
\end{subequations}
where $B$ is the biadjacency matrix of the simplified DAG and $\bar{B} = ^\neg (B^{\top})$, referred to as the candidate matrix, is the logical NOT of the transpose of matrix $B$.  Furthermore, the width of the compiled quantum circuit is given by $n - \alpha$.
\end{proposition}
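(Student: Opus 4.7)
The plan is to establish an explicit bijection between the valid sets $E'$ of added edges characterized by Theorem~\ref{thm: circuit compilation via graph manipulation} and the Boolean matrices $F \in \{0,1\}^{n \times n}$ satisfying the constraints of \eqref{eq: exact optimization}, under which $|E'| = \sum_{i,j} F_{ij}$. Indexing the $n$ terminals and the $n$ roots, set $F_{ij} = 1$ iff the added edge from terminal $t_i$ to root $r_j$ belongs to $E'$. The first bullet of Theorem~\ref{thm: circuit compilation via graph manipulation} is automatic, since every entry of $F$ already represents a terminal-to-root edge, and the degree bounds in the second bullet translate directly into the row- and column-sum inequalities \eqref{eq: row sum condition}--\eqref{eq: col sum condition}.

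For the acyclicity requirement, I would compute the adjacency matrix of the modified graph $G' = (R \cup T, E \cup E')$. Ordering vertices as roots first and terminals second, edges of $E$ contribute the upper-right off-diagonal block $B$ (from $R$ to $T$), and edges of $E'$ contribute the lower-left off-diagonal block $F$ (from $T$ to $R$), giving the block form $\bigl(\begin{smallmatrix} O_n & B \\ F & O_n \end{smallmatrix}\bigr)$. Invoking the fact from the preliminaries that a directed graph is acyclic iff its adjacency matrix is nilpotent yields exactly condition \eqref{eq: nilpotent condition}.

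The candidate-matrix constraint \eqref{eq: submatrix condition} can then be understood as capturing the minimal cycle obstruction: a 2-cycle $r_j \to t_i \to r_j$ exists iff $B_{ji} = 1$ and $F_{ij} = 1$, so nilpotency already forces $F_{ij} \le 1 - B_{ji} = \bar{B}_{ij}$. Thus \eqref{eq: submatrix condition} is a logical consequence of \eqref{eq: nilpotent condition} and including it does not alter the feasible region, but it provides a cheap per-entry bound that is heavily used by the heuristics and analytical case studies later in the paper.

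Finally I would justify the objective and the width formula. The row- and column-sum constraints force $E'$ to form a partial matching between terminals and roots; viewing this as a graph on the $n$ original qubit-wires, each wire acquires in-degree and out-degree at most one. Combined with the acyclicity of $G'$, which rules out wire-cycles, the added edges decompose the $n$ wires into vertex-disjoint paths; each such path is realized on a single physical qubit via successive measure-and-reset operations. Component counting then gives $n - |E'|$ paths, so the compiled width equals $n - \sum_{i,j} F_{ij}$, and maximizing the objective yields the minimal width $n - \alpha$. The main subtlety, and arguably the hardest piece to make rigorous, is this ``no wire-cycle'' step: I would verify that any cycle formed on the wire graph by added edges lifts to a directed cycle in $G'$ (alternating along-wire paths in the original DAG with reuse edges), so that nilpotency genuinely prevents it and the $n - |E'|$ count is tight rather than merely an upper bound on the qubit savings.
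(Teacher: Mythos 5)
Your proposal is correct and follows essentially the same route as the paper: a direct translation of the three conditions of Theorem~\ref{thm: circuit compilation via graph manipulation} into the objective, the row/column-sum constraints, and the nilpotency condition via the adjacency matrix of the modified DAG, with the observation that \eqref{eq: submatrix condition} is implied by \eqref{eq: nilpotent condition} matching the paper's own remark. The one place you go beyond the paper is the final width count, where your path-decomposition argument on the wire graph (using $B_{ii}=1$ to lift wire-cycles to directed cycles in $G'$) makes rigorous what the paper dispatches in a single sentence about the remaining number of roots.
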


This result can be seen as a translation of Theorem~\ref{thm: circuit compilation via graph manipulation} from graph manipulation to matrix optimization. Particularly, the variable $F$ here represents a feasible solution in the graph manipulation. The block matrix in Eq.~\eqref{eq: nilpotent condition} is the adjacency matrix of the modified DAG, which can be used to compile the quantum circuit.

\begin{remark}
It is worth noting that condition~\eqref{eq: submatrix condition} is indeed implied by condition~\eqref{eq: nilpotent condition}. However, we explicitly enforce this condition as it proves helpful in the analysis of the optimal solution for quantum circuits in Section~\ref{sec:examples}. Furthermore, there are other equivalent variants of the optimization problem. For instance, the second condition can be omitted, allowing different terminals to connect to the same root. Nevertheless, this modification does not contribute to a reduction in the number of qubits. In such cases, the objective function should be adjusted accordingly.
\end{remark}

\begin{remark}
The difficulty in solving the binary integer programming arises from the presence of the nilpotent constraint. Due to the non-convex nature of the set of nilpotent matrices, the optimization problem in Proposition~\ref{prop:compilation via BIP} is inherently non-convex.
\end{remark}

\section{Heuristic algorithms}\label{sec:heuristic}

The previous section demonstrated that the quantum circuit compilation problem is essentially a binary integer optimization problem with an exponentially increased solution space. While checking the reducibility of a quantum circuit is a polynomial-time task, as analyzed in Section \ref{sec:determine reducibility}, finding the optimal compilation scheme could potentially require exponential time. 

In this section, we introduce several efficient heuristic algorithms designed to address the optimization problem within polynomial time. All algorithms are presented to compile a circuit into the smallest size. However, users retain the flexibility to halt the main loop based on specific criteria, enabling the investigation of tradeoffs among circuit width, depth, and related factors.

\subsection{Algorithm 1: Minimum Remaining Values Heuristic} 

The Minimum Remaining Values (MRV) heuristic is a commonly employed technique in constraint satisfaction problems~\cite{stuart2009artificial}, which designates the variable with the fewest valid values (i.e., the minimum remaining values) as the next one for value assignment. This approach finds extensive application in problems such as Sudoku solvers and map coloring.

\begin{figure}[!htb]
    \centering
    \includegraphics[width=0.8\textwidth]{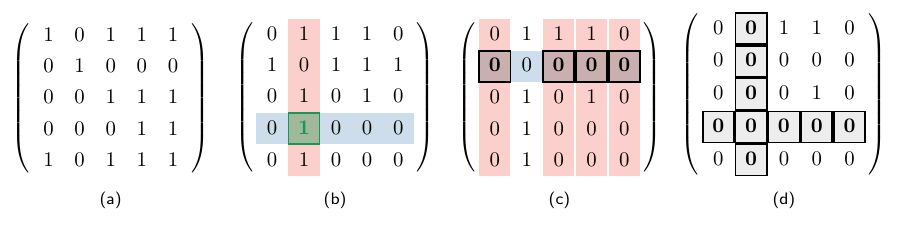}
    \vspace{-3mm}
    \caption{(Color online) An illustration of MRV heuristic algorithm. Suppose the row and column of the matrix are indexed by $\{t_0, \cdots, t_4\}$ and $\{r_0, \cdots, r_4\}$. (a) the biadjacency matrix $B$ of the simplified DAG of a $5$-qubit Bernstein–Vazirani algorithm; (b) the candidate matrix $\bar B = ^\neg (B^{\top})$ whereas the candidate edge is chosen as $(t_3, r_1)$ marked in green; (c) set all entries in $T_3 \times R_1$ to zero , where $T_3 = \{t_1\}$ and $R_1 = \{r_0, r_2, r_3, r_4\}$; (d) set all entries in the $t_3$ row  and all entries in the $r_1$ column to zero.}
    \label{fig:mrv example}
\end{figure}

In the context of our dynamic circuit compilation problem, we can implement the MRV heuristic algorithm as follows. Consider a $5$-qubit Bernstein–Vazirani algorithm, with its biadjacency matrix $B$ provided in Figure~\ref{fig:mrv example}(a), and the candidate matrix $\bar B = ^\neg (B^{\top})$ given in Figure~\ref{fig:mrv example}(b). Let's assume that the rows and columns are indexed by $\{t_0, \cdots, t_4\}$ and $\{r_0, \cdots, r_4\}$. During each iteration, we first identify the terminal $t_i$ with the fewest candidate roots according to the candidate matrix and connect it to a candidate root $r_j$ with the least number of choices of terminals. In this example, the candidate edge is selected as $(t_3, r_1)$, as depicted in green. After adding this edge, we need to update the candidate matrix. Prior to the incorporation of this edge, the set $R_i$ encompasses all roots capable of reaching terminal $t_i$, while the set $T_j$ represents all terminals that are reachable from root $r_j$. In this example, we have $R_1 = \{r_0, r_2, r_3, r_4\}$ and $T_3 = \{t_1\}$. Upon adding this edge, any root $r$ in the set $R_i$ will gain the ability to reach any terminal $t$ in the set $T_j$, which indicates that all edges $(t, r)$ where $t \in T_j$ and $r \in R_i$ are no longer candidate edges. Consequently, we need to update all these entries in the candidate matrix to zero. That is, set entries $(t_1,r_0), (t_1,r_2), (t_1,r_3), (t_1,r_4)$ to zero, as depicted in Figure~\ref{fig:mrv example}(c). Furthermore, to ensure that the added edges do not share common vertices, it is necessary to update all entries in the $t_i$ row and all entries in the $r_j$ column of the candidate matrix to zero. In this example, set all entries in the $t_3$ row and all entries in the $r_1$ column to zero, which is depicted in Figure~\ref{fig:mrv example}(d). 
The complete MRV algorithm is given in Algorithm~\ref{alg:mrv}.

\begin{algorithm}[!htb]
\small
\caption{MRV heuristic algorithm}\label{alg:mrv}
\LinesNumbered

\KwIn{\\  \begin{tabular}{p{3cm}l}
\textit{StaticCircuit} &  the instruction list of a static quantum circuit to compile\\ 
\end{tabular}}

\BlankLine

\KwOut{\\  \begin{tabular}{p{3cm}l}
\textit{DynamicCircuit} & the instruction list of the compiled dynamic quantum circuit\\
\end{tabular}}

\BlankLine

\BlankLine

Run Algorithm~\ref{alg:Converting static quantum circuit to DAG} in Appendix~\ref{app: further algorithms} to get the DAG representation \textit{Digraph} of the circuit\;

Let \textit{Roots} and \textit{Terminals} be the set of roots and terminals, respectively\;

Run Algorithm~\ref{alg:reducibility through Boolean matrix} to get the biadjacency matrix $B$ of the simplified DAG of the circuit\;

Calculate the candidate matrix \textit{CandidateMatrix} as $^\neg (B^{\top})$\;

\BlankLine

Initialize two empty lists \textit{CandidatesNum} and \textit{AddedEdges}\;

\BlankLine

\While{CandidateMatrix is not zero matrix}{

    Calculate the sum of each row of \textit{CandidateMatrix} and record it in \textit{CandidatesNum}\;

    Identify the smallest non-zero element in \textit{CandidatesNum} and record its index as $t$\;

    Identify the non-zero element in the $t$-th row of \textit{CandidateMatrix} with the smallest column sum and record its column index as $r$\;

    Append the edge (\textit{Terminals}[$t$], \textit{Roots}[$r$]) to \textit{AddedEdges}\;

    Let $R$ be the column indices of zero elements in the $t$-th row of \textit{CandidateMatrix}\;

    Let $T$ be the row indices of zero elements in the $r$-th column of \textit{CandidateMatrix}\;

    \ForEach{pair $(u, v)$ where $u \in T$ and $v \in R$}{
        Set the entry $(u, v)$ of \textit{CandidateMatrix} to zero\;
    }

    Set all entries in the $t$-th row and the $r$-th column of \textit{CandidateMatrix} to zero\;
}

\BlankLine

Add \textit{AddedEdges} to \textit{Digraph} and get the \textit{ModifiedGraph}\;

Run Algorithm~\ref{alg:modified DAG to dynamic circuit} in Appendix~\ref{app: further algorithms} to get the compiled circuit \textit{DynamicCircuit}\;

\Return \textit{DynamicCircuit}

\end{algorithm}

\BlankLine

\begin{remark}
Note that the role of root and terminal in Algorithm~\ref{alg:mrv} can be exchanged. That is, we can identify the root with the fewest candidate terminals and connect it to a candidate terminal with least choice of roots. In practice, we can run both algorithms and choose a better result.
\end{remark}

\begin{proposition}\label{prop:time complexity of mrv algorithm}
For a static quantum circuit with $n$ qubits and $m$ operations, Algorithm~\ref{alg:mrv} has a worst-case time complexity of $O(mn + n^3)$.
\end{proposition}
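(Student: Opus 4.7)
The plan is to bound the running time of Algorithm~\ref{alg:mrv} by splitting it into a preprocessing stage (converting the static circuit into a DAG, extracting the biadjacency matrix $B$, and forming the candidate matrix $\bar{B}$) and the main \texttt{while} loop that greedily adds edges, then bounding each part separately and summing. Finally, I would account for the post-processing step that turns the modified DAG back into a dynamic instruction list.

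For the preprocessing, Algorithm~\ref{alg:Converting static quantum circuit to DAG} iterates once over the $m$ instructions, doing only constant work per instruction together with bookkeeping on at most $n$ qubit wires, giving $O(m+n)$; Algorithm~\ref{alg:reducibility through Boolean matrix} produces $B$ in $O(mn)$ by Proposition~\ref{prop:reducibility through matrix}; and forming $\bar{B}=\,^\neg(B^{\top})$ costs $O(n^{2})$. Since $m \geq n$ in any nontrivial circuit (and in any case we may absorb $n^{2}$ into the loop bound below), the preprocessing is $O(mn)$.

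For the main \texttt{while} loop, the critical observation is that the number of iterations is at most $n$: each non-terminating iteration appends an edge to \textit{AddedEdges}, and by the zeroing of the $t$-th row and $r$-th column of \textit{CandidateMatrix} no terminal or root index can be reused, so at most $n$ edges are ever added before the candidate matrix becomes the zero matrix and the loop exits. I would then account the cost of one iteration: computing row sums is $O(n^{2})$; locating the smallest non-zero row-sum index $t$ is $O(n)$; scanning the $t$-th row of \textit{CandidateMatrix} and comparing column sums to pick $r$ is $O(n^{2})$; the nested update over pairs $(u,v)\in T\times R$ touches at most $n^{2}$ entries; and zeroing the $t$-th row and $r$-th column is $O(n)$. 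So each iteration is $O(n^{2})$ and the loop totals $O(n^{3})$. The post-processing via Algorithm~\ref{alg:modified DAG to dynamic circuit} runs over the modified DAG (which has $m+n$ vertices and $O(m+n)$ edges) in $O(m+n)$ time.

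Adding the three contributions yields the claimed $O(mn + n^{3})$ bound. The only step that requires a small argument beyond direct inspection is the bound of $n$ on the loop iteration count; I expect this to be the main (albeit mild) obstacle, since one must observe both that each successful iteration strictly consumes a fresh terminal/root pair and that when \textit{CandidateMatrix} becomes zero the loop terminates. Everything else is a straightforward accounting of nested loops over at most $n\times n$ arrays.
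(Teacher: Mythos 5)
Your proposal is correct and follows essentially the same decomposition as the paper's proof: $O(mn)$ preprocessing via Algorithms~\ref{alg:Converting static quantum circuit to DAG} and~\ref{alg:reducibility through Boolean matrix}, at most $n$ iterations of the while loop (bounded by the consumption of a fresh terminal/root pair per added edge) at $O(n^2)$ each for a total of $O(n^3)$, and a final conversion step. The only slight inaccuracy is your $O(m+n)$ estimate for Algorithm~\ref{alg:modified DAG to dynamic circuit}; the paper bounds that step by $O(mn)$ because each of the up to $n-1$ added edges triggers a pass over the full instruction list to relabel qubits, but this cost is absorbed into the existing $O(mn)$ term, so the final bound is unaffected.
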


\subsection{Algorithm 2: Greedy Heuristic Algorithm}

Greedy algorithms are time-efficient heuristic strategy that makes a locally optimal choice at each iterative step. This idea can be readily applied to address the dynamic circuit compilation as follows: in each iteration, the algorithm evaluates the potential impact of adding each candidate edge and selects the one that maximizes the possibility to add more edges in subsequent steps. Specifically, for each candidate edge, the algorithm temporarily integrates this edge into the simplified DAG and updates the candidate matrix following the rules outlined in Algorithm \ref{alg:mrv}. The summation of all elements within the updated candidate matrix serves as the score for the candidate edge, which is then stored in the corresponding entry of a matrix called the \emph{score matrix}. After evaluating all candidate edges, the algorithm identifies the candidate edge with the highest score as the optimal choice for inclusion in the current iteration and updates the candidate matrix accordingly before proceeding to the next iteration. In cases where multiple entries share the highest score, one edge is \emph{randomly} selected from among them. An illustrative example of the scoring procedure is depicted in Figure~\ref{fig:greedy example}. The complete greedy algorithm is given in Algorithm~\ref{alg:greedy}. 

\begin{figure}[!htb]
    \centering
    \includegraphics[width=0.8\textwidth]{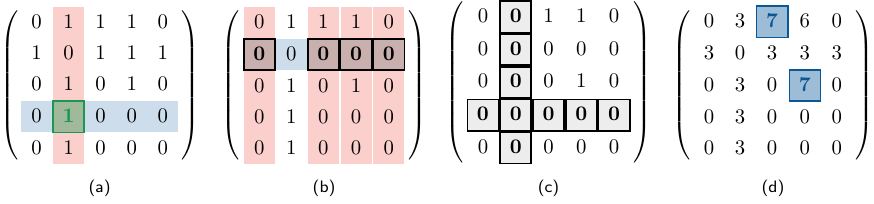}
    \vspace{-2mm}
    \caption{(Color online) An illustration of the greedy heuristic algorithm. Suppose the row and column of the matrix are indexed by $\{t_0, \cdots, t_4\}$ and $\{r_0, \cdots, r_4\}$. (a) the candidate matrix of the simplified DAG of a $5$-qubit Bernstein–Vazirani algorithm $\bar B = ^\neg (B^{\top})$ whereas the candidate edge under evaluation is $(t_3, r_1)$ marked in green; (b) set all entries in $T_3 \times R_1$ to zero , where $T_3 = \{t_1\}$ and $R_1 = \{r_0, r_2, r_3, r_4\}$; (c) set all entries in the $t_3$ row  and all entries in the $r_1$ column to zero, the summation of all elements in the updated candidate matrix serves as the score of the candidate edge $(t_3, r_1)$; (d) the score matrix after the first iteration, where the edge for inclusion in this round is randomly selected from the entries with the highest score (marked in blue).}
    \label{fig:greedy example}
\end{figure}


\begin{algorithm}[!htb]
\small
\caption{Greedy heuristic algorithm}\label{alg:greedy}
\LinesNumbered

\KwIn{\\  \begin{tabular}{p{3cm}l}
\textit{StaticCircuit} &  the instruction list of a static quantum circuit to compile\\
\end{tabular}}

\BlankLine

\KwOut{\\  \begin{tabular}{p{3cm}l}
\textit{DynamicCircuit} & the instruction list of the compiled dynamic quantum circuit\\
\end{tabular}}

\BlankLine

\BlankLine

Run Algorithm~\ref{alg:Converting static quantum circuit to DAG} in Appendix~\ref{app: further algorithms} to get the DAG representation \textit{Digraph} of the circuit\;

Let \textit{Roots} and \textit{Terminals} be the set of roots and terminals, respectively\;

Run Algorithm~\ref{alg:reducibility through Boolean matrix} to get the biadjacency matrix $B$ of the simplified DAG of the circuit\;

Calculate the candidate matrix \textit{CandidateMatrix} as $^\neg (B^{\top})$\;

\BlankLine

Initialize an empty list \textit{AddedEdges}\;

\BlankLine

\While{CandidateMatrix is not zero matrix}{
    Initialize a $n \times n$ zero matrix \textit{ScoresMatrix}\;
    \ForEach{non-zero entry $(i, j)$ in CandidateMatrix}{
        Initialize a matrix $\bar{B}_{i,j}$ as \textit{CandidateMatrix}\;        

        Let $R_i$ be the column indices of zero elements in the $i$-th row of $\bar{B}_{i,j}$\;

        Let $T_j$ be the row indices of zero elements in the $j$-th column of $\bar{B}_{i,j}$\;
    
        \ForEach{pair $(u, v)$ where $u \in T_j$ and $v \in R_i$}{
            Set the entry $(u, v)$ of $\bar{B}_{i,j}$ to zero\;
        }
    
        Set all entries in the $i$-th row and all entries in the $j$-th column of $\bar{B}_{i,j}$ to zero\;
        
        Set the entry $(i, j)$ in \textit{ScoresMatrix} to the sum of all entries in $\bar{B}_{i,j}$ plus one\;
    }

        Identify all entries with the largest score in \textit{ScoresMatrix} as \textit{MaxScore}\;

        Randomly select one entry in \textit{MaxScore} and record its index $(t,r)$\;

        Append the edge (\textit{Terminals}[$t$], \textit{Roots}[$r$]) to \textit{AddedEdges}\;

        Update \textit{CandidateMatrix} to $\bar{B}_{t,r}$
}

\BlankLine

Add \textit{AddedEdges} to \textit{Digraph} and get the \textit{ModifiedGraph}\;

Run Algorithm~\ref{alg:modified DAG to dynamic circuit} in Appendix~\ref{app: further algorithms} to get the compiled circuit \textit{DynamicCircuit}\;

\Return \textit{DynamicCircuit}

\end{algorithm}

\begin{remark}
The scoring rule in the greedy algorithm is flexible and can be replaced with alternative approaches based on specific objectives. One such approach involves evaluating the impact on the circuit depth for adding a candidate edge to the graph by calculating the length of the critical path in the graph. Then the candidate edge can be scored by a predefined cost function that deliberates the tradeoff between resultant circuit width and depth, alongside other pertinent factors.
\end{remark}

\begin{remark}
In scenarios where multiple candidate edges attain the same highest score in an iterative step, a straightforward approach is to choose the one associated with a fixed rule (e.g. the smallest index). However, adopting such a deterministic procedure could inadvertently tie the compilation process to specific qubit labels, potentially restricting the algorithm's performance. To address this constraint, the integration of a random selection process becomes crucial. This stochastic method holds the potential to uncover enhanced solutions by running the algorithm multiple times, a strategy that has proven quite useful in our numerical experiments.
\end{remark} 

\begin{proposition}\label{prop:time complexity of greedy algorithm}
For a static quantum circuit with $n$ qubits and $m$ operations, Algorithm~\ref{alg:greedy} has a worst-case time complexity of $O(mn + n^5)$.
\end{proposition}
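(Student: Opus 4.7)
The plan is to charge the total runtime of Algorithm~\ref{alg:greedy} to three phases: (i) preprocessing that builds the DAG, the biadjacency matrix $B$, and the initial candidate matrix; (ii) the main \textbf{while} loop that scores and adds candidate edges; and (iii) postprocessing that converts the modified DAG back into a dynamic instruction list. I will argue that phases (i) and (iii) contribute $O(mn)$ and the main loop contributes $O(n^5)$, which together give the claimed bound.

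For phase (i), running Algorithm~\ref{alg:Converting static quantum circuit to DAG} and Algorithm~\ref{alg:reducibility through Boolean matrix} each cost $O(mn)$ (the latter by Proposition~\ref{prop:reducibility through matrix}); forming the candidate matrix $\bar{B} = {}^\neg(B^\top)$ is an $O(n^2)$ entrywise operation, absorbed into $O(mn)$. For phase (iii), the modified DAG has at most $O(n)$ extra edges over the original DAG, so stitching it back into a dynamic circuit via Algorithm~\ref{alg:modified DAG to dynamic circuit} costs $O(m+n)$, again absorbed into $O(mn)$.

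The heart of the argument is bounding the main loop. First, because the algorithm appends exactly one edge per iteration and the final zeroing step of each iteration blanks out an entire previously nonzero row and column of \textit{CandidateMatrix}, the outer loop executes at most $n$ times. Within a single iteration, the inner \textbf{foreach} visits each of the at most $n^2$ nonzero entries $(i,j)$ of \textit{CandidateMatrix}. For each such entry, identifying $R_i$ and $T_j$ takes $O(n)$; zeroing the block $T_j\times R_i$ inside the scratch matrix $\bar{B}_{i,j}$ costs $O(|T_j|\cdot|R_i|)=O(n^2)$; blanking row $i$ and column $j$ is $O(n)$; and summing all entries of $\bar{B}_{i,j}$ to obtain the score is $O(n^2)$. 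Hence each candidate entry contributes $O(n^2)$ work, the inner loop costs $O(n^4)$ per iteration, and the final pass to identify the maximum score together with the random tiebreaker is a dominated $O(n^2)$. Multiplying by $O(n)$ outer iterations yields $O(n^5)$ for phase (ii), and summing the three phases gives $O(mn+n^5)$.

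The only subtle point in this analysis is avoiding a sloppy charge for the block-zeroing step: one might fear that updating $\bar{B}_{i,j}$ requires enumerating pairs in $T_j \times R_i$ together with some auxiliary traversal, driving the per-entry cost to $O(n^4)$ and the overall bound to $O(n^6)$. The key observation is that $T_j \times R_i$ is a submatrix of the $n\times n$ candidate matrix and therefore has at most $n^2$ entries regardless of the structure of \textit{CandidateMatrix}, so each of the $O(n^2)$ scored candidates truly incurs only $O(n^2)$ work. This single counting bound is what pins the complexity at the claimed $O(mn+n^5)$.
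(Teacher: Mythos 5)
Your proof is correct and follows essentially the same decomposition and counting as the paper's: $O(mn)$ for pre- and post-processing, at most $n$ outer iterations each scoring up to $n^2$ candidate entries at $O(n^2)$ work apiece (block zeroing plus summation), giving $O(n^5)$ and hence $O(mn+n^5)$ overall. The only minor slip is asserting that Algorithm~\ref{alg:modified DAG to dynamic circuit} costs $O(m+n)$ --- the paper's own analysis puts it at $O(mn)$, since each of the up to $n-1$ added edges triggers a pass over the $m$ instructions to relabel qubits --- but as you absorb this into the $O(mn)$ term anyway, the final bound is unaffected.
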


\subsection{Algorithm 3: Hybrid Algorithm}

To further enhance the performance of the heuristic algorithms, we introduce a hybrid algorithm by combining the MRV heuristic (also works for other heuristics) and brute force search. Initially, the hybrid algorithm employs a brute force search on a designated subset of terminals, denoted as $T_E \subseteq T$, which exhaustively enumerates all feasible edge additions pertinent to terminals within this subset. Subsequently, the MRV heuristic algorithm is employed to identify edges that can be added to the remaining terminals in $T \backslash T_E$. Let $L$ denote the cardinality of the subset $L = |T_E|$. Notably, at $L=0$ the hybrid algorithm aligns precisely with the the MRV algorithm. As $L$ increases, the hybrid algorithm progressively approximates the characteristics of the brute force search. Upon reaching $L=n$, the hybrid algorithm becomes the brute force search. This hierarchical variation of the hybrid algorithm, characterized by different values of $L$, which we referred as \emph{the hierarchy level}, provides the opportunity to tradeoff between the optimality of the solution and the computational time complexity. Detailed discussions and numerical experiments can be found in Appendix~\ref{sec:hybrid} of the supplementary material.

\section{Analytical and numerical evaluations}
\label{sec:examples}

In this section, we conduct a thorough analysis of quantum circuits with practical relevance, offering optimal compilations for well-known quantum algorithms in quantum computation, ansatz circuits utilized in quantum machine learning, and measurement-based quantum computation crucial for quantum networking. We also perform a comparative analysis against state-of-the-art approaches, demonstrating the superior performance of our methods in both structured and random quantum circuits. A brief summary of the quantum circuits explored in this study is provided in Table~\ref{tab:summary of circuits}. Optimal compilations for the frequently used quantum algorithms and their proofs can be found in Appendix~\ref{app: Optimal compilations for important quantum circuits} of the supplementary material.

\setlength\extrarowheight{1pt}
\begin{table}[!htb]
\footnotesize
\caption{(Color online) Quantum circuits studied in this work. The case in blue indicates irreducible circuits, while the case in red means the compiled width is optimal. Numerical studies are provided for cases in green. 
}
\label{tab:summary of circuits}
\centering
\begin{tabular}{lcc}
\toprule[1pt]
Quantum circuits & Original & Compiled \\
\midrule
Deutsch-Jozsa algorithm & $n$  & \cellcolor{myred!25} $2$ ($1$) \\
Bernstein-Vazirani algorithm & $n$ & \cellcolor{myred!25} $2$ ($1$)\\
Simon's algorithm & $2n$ & \cellcolor{myred!25} $3$ \\
Quantum Fourier transform & $n$ & \cellcolor{myblue!25} $n$ \\
Quantum phase estimation & $n$ & \cellcolor{myblue!25} $n$ \\
Shor's algorithm & $n$ & \cellcolor{myblue!25} $n$ \\
Grover's algorithm & $n$ & \cellcolor{myblue!25} $n$ \\
Quantum counting algorithm & $n$ & \cellcolor{myblue!25} $n$ \\
Linearly entangled circuit with $l$ layers & $n$  & \cellcolor{myred!25} $l+1$ \\
Circularly entangled circuit with $l$ layers & $n$ & \cellcolor{myred!25} $3$ \\
Pairwisely entangled circuit with $l$ layers & $n$ & \cellcolor{myred!25} $2l+1$ \\
Fully entangled circuit & $n$ & \cellcolor{myblue!25} $n$ \\
Diamond-structured quantum circuit & $2n$ & \cellcolor{myred!25} $n + 1$ \\ 
MBQC with cluster state of size $(w,d)$ & $wd$ & \cellcolor{myred!25} $w+1$ \\ 
MBQC with brickwork state of size $(w,d)$ & $wd$ & \cellcolor{myred!25} $w+1$ \\
Quantum ripple carry adder circuit & $n$ ($4$) & \cellcolor{myred!25} $4$ ($3$) \\
Quantum supremacy circuits & - & \cellcolor{mygreen!25} - \\
GRCS circuits & - & \cellcolor{mygreen!25} - \\
QAOA circuits for max-cut & - & \cellcolor{mygreen!25} - \\
Random (IQP) circuits & - & \cellcolor{mygreen!25} - \\
\bottomrule[1pt]  
\end{tabular}
\end{table}

To compare the reducibility of quantum circuits as well as the performance of different compilation methods, we define the \emph{reducibility factor} of a quantum circuit as
\begin{align}
    r = 1 - \frac{n'}{n} \in [0, 1),
\end{align}
where $n$ is the width of the original circuit and $n'$ is the width of the compiled circuit. This factor characterizes the extent to which the circuit width can be reduced by a certain algorithm, which is zero if the circuit is not reducible.

\subsection{Quantum supremacy circuits}

\label{sec: Quantum supremacy circuits}

In this section, we analyze the reducibility factor of quantum circuits used to claim quantum supremacy, including those executed on Sycamore and Zuchongzhi quantum computers. Figure~\ref{fig:supermacy circuit}(a) displays the reducibility factor of these circuits with varying numbers of cycles, determined using the greedy heuristic algorithm (Algorithm~\ref{alg:greedy}).

An interesting observation is that quantum circuits with $70$ qubits and $24$ cycles running on Sycamore~\cite{morvan2023phase}, $56$ qubits and $20$ cycles on Zuchongzhi~\cite{wu2021strong}, and $60$ qubits and $24$ cycles on Zuchongzhi~\cite{zhu2022quantum} are used to claim quantum supremacy, but they all belong to irreducible circuits. This observation reveals a fundamental tradeoff between the technical challenges associated with running deep circuits (a limitation of current quantum computers) and the structural complexity of these circuits (a limitation of classical computers). It highlights the delicate balance required when designing quantum circuits to showcase quantum supremacy in the near term.

Furthermore, another noteworthy point is the quantum circuit with $53$ qubits and $20$ cycles executed on Sycamore~\cite{arute2019quantum}, which possesses a reducibility factor of $0.02$. This result indicates that the circuit can be compiled into a quantum circuit with $52$ qubits. This observation aligns with the historical fact that one qubit on the Sycamore chip is non-functional, thereby breaking the complexity of the circuit and leaving the room for its compilation.

\begin{figure}[!htb]
    \begin{subfigure}[t]{0.48\textwidth}
        \includegraphics[width=\textwidth]{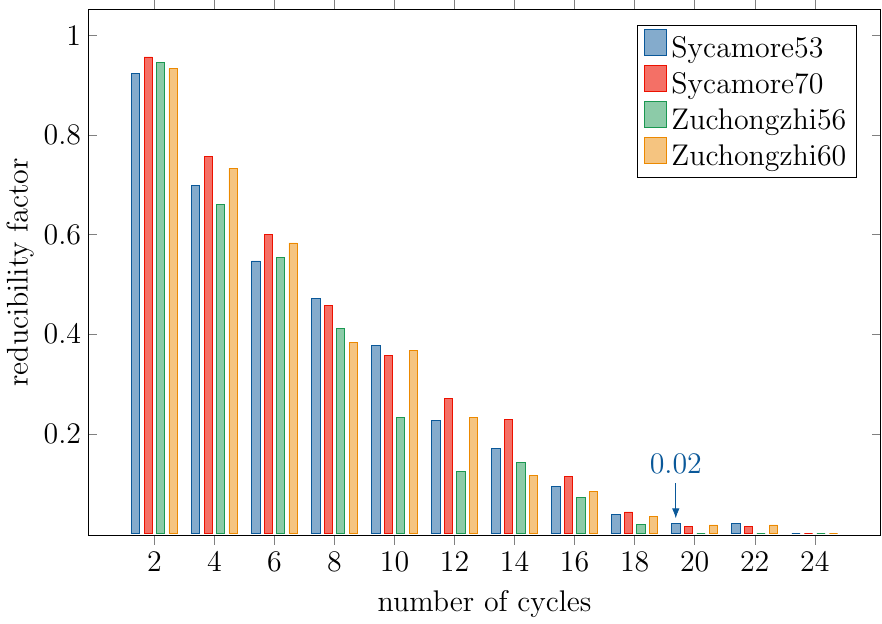}
        \vspace{-5mm}
        \caption*{\scriptsize (a)}
    \end{subfigure}
    \hspace{2mm}
    \begin{subfigure}[t]{0.48\textwidth}
        \includegraphics[width=\textwidth]{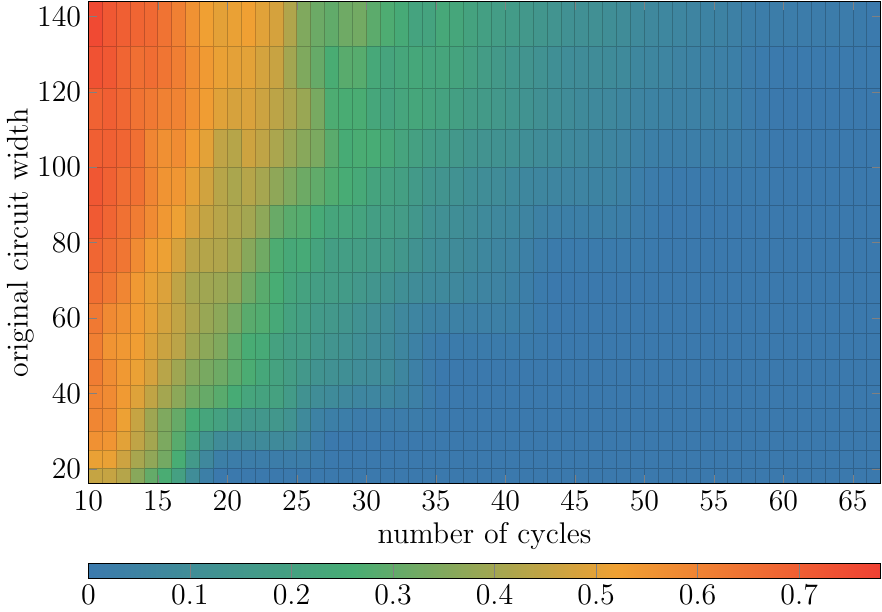}
        \vspace{-5mm}
        \caption*{\scriptsize (b)}
    \end{subfigure}
    \vspace{-1mm}
    \caption{(Color online) (a) The reducibility factor of different quantum supremacy circuits using the greedy heuristic algorithm~\ref{alg:greedy}; (b) The reducibility factor of GRCS circuits located in the GitHub directory `inst/rectangular' with different circuit widths and numbers of cycles using the greedy heuristic algorithm~\ref{alg:greedy}.}
    \label{fig:supermacy circuit}
\end{figure}

In 2018, Google proposed a series of random quantum circuits (\href{https://github.com/sboixo/GRCS}{GRCS})~\cite{boixo2018characterizing}. Due to the hardness of simulation, GRCS is frequently used as benchmark to test the performance of classical simulators. Each instance in GRCS is a random quantum circuit designed for qubits configured in an $n \times m$ lattice. These circuits are composed of multiple cycles of quantum gates. Figure~\ref{fig:supermacy circuit}(b) demonstrates the reducibility factor of GRCS circuits through a single running of the greedy heuristic algorithm~\ref{alg:greedy}. It further validates the earlier observation that the larger the number of cycles (depth), the more difficult for a quantum circuit to reduce.

\subsection{Algorithm benchmarking}
\label{sec: Algorithm benchmarking}

In this section, we conduct a numerical analysis to assess the performance of different algorithms on a range of benchmark circuits, including quantum adders, quantum approximate optimization algorithm (QAOA), random quantum circuits and random IQP circuits. Our primary focus centers around three distinct algorithms for dynamic circuit compilation: the MRV heuristic algorithm (Algorithm~\ref{alg:mrv}), the greedy heuristic algorithm (Algorithm~\ref{alg:greedy}) and the greedy algorithms proposed in~\cite{decross2022qubitreuse} (referred to as DCKF in the subsequent discussion). As the source code for the DCKF algorithms is not publicly available, we have implemented these algorithms based on our understanding of the paper~\cite{decross2022qubitreuse}. Further details regarding our implementation of the DCKF algorithms can be found in Appendix~\ref{sec:dckf implementation} of the supplementary material. For the MRV algorithm, we perform two separate runs, exchanging the roles of roots and terminals within the algorithm for each run. Then, we select the dynamic circuit with the smaller circuit width as the final output. We have also utilized the stochastic nature of our greedy heuristic algorithm by running it multiple times to improve its performance.

\subsubsection{Quantum Ripple Carry Adders}

Quantum adder is a quantum circuit designed for performing addition operation between two bit strings. For example, if we compute `$1+2=3$', then we represent the input string as `01' and `10', and the expected output bit string is `11'. Here we focus on the Quantum Ripple Carry Adders initially proposed in~\cite{chakrabarti2008designing}. 

In Figure~\ref{fig:quantum adder and qaoa}(a), we present the results of the numerical experiments conducted using three different algorithms. It is evident that both MRV and the greedy algorithm successfully find the optimal compilation. In contrast, the results obtained from the DCKF algorithms display linear scaling in the original circuit width, indicating a deficiency in its performance. This limitation is likely linked to a specific implementation aspect. That is, the process of establishing measurement orders within the DCKF algorithms sometimes generates multiple local optima within a single iteration. Consequently, deterministic selections in these scenarios might lead to unfavorable measurement orders, significantly affecting overall performance.  This justifies the reasoning behind our inclusion of randomness within our greedy heuristic algorithm (Algorithm~\ref{alg:greedy}).

\begin{figure}[H]
    \begin{subfigure}[t]{0.48\textwidth}
        \includegraphics[width=\textwidth]{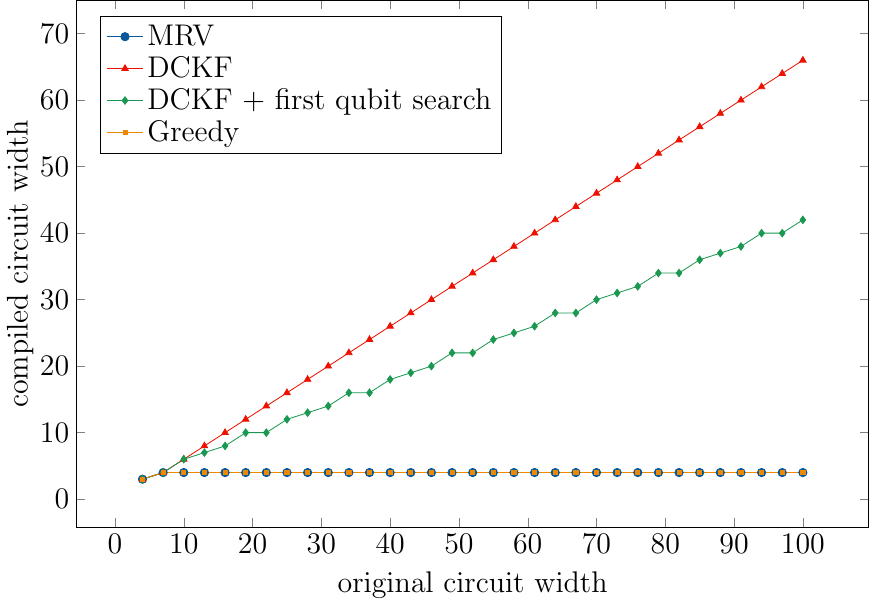}
        \vspace{-5mm}
        \caption*{\scriptsize (a)}
    \end{subfigure}
    \hspace{2mm}
    \begin{subfigure}[t]{0.48\textwidth}
        \includegraphics[width=\textwidth]{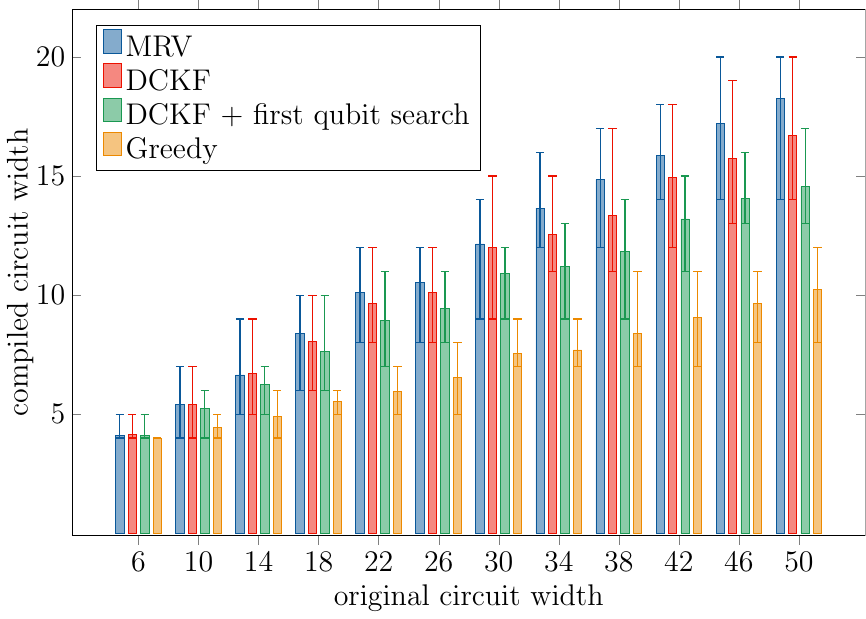}
        \vspace{-5mm}
        \caption*{\scriptsize (b)}
    \end{subfigure}
    \vspace{-1mm}
    \caption{(Color online) (a) Compiled circuit width against the original circuit width of quantum ripple carry adders; (b) Compiled circuit width against the original circuit width of the max-cut QAOA circuits with $p=1$. The plotted error bars correspond to the maximum and minimum compiled width over 20 instances.}
    \label{fig:quantum adder and qaoa}
\end{figure}

\subsubsection{QAOA circuits for max-cut problem}

QAOA \cite{farhi2014quantum} is a quantum algorithm designed to approximately solve classical combinatorial optimization problems and have the potential to run on near-term quantum devices. The QAOA unitary takes the form of the alternative application of a mixing unitary and a problem unitary for $p$ layers.
A max-cut problem is a combinatorial optimization problem in graph theory which involves to find a partition of vertices into two sets, such that the number of edges between the sets is maximized. 
In QAOA circuits designed for solving the max-cut problem, the number of qubits matches the number of vertices in the graph, and the connectivity of two-qubit gates corresponds to the edges in the graph. 

Here, we assess the performance of different algorithms applied to QAOA circuits for solving the max-cut problem on random unweighted three-regular (U3R) graphs with $p = 1$. For each experiment, we ran our greedy heuristic algorithm 10 times and recorded the best result. We evaluated four algorithms for each fixed qubit number on 20 random U3R graphs generated using the \textit{NetworkX} package~\cite{networkx}. The results are presented in Figure~\ref{fig:quantum adder and qaoa}(b). It is evident that the average compiled width achieved by our greedy algorithm is consistently lower than that obtained using the DCKF algorithms for all qubit numbers. Moreover, as the number of qubits increases, this advantage becomes increasingly pronounced.

\subsubsection{Random circuits}

In addition to the previously studied structured circuits, we conducted comprehensive numerical experiments involving random quantum circuits to assess algorithm performance across a broader spectrum of scenarios. These experiments involved fixing the ratio $r = m/n$, where $m$ represents the number of two-qubit gates, and $n$ represents the width of the original circuits. We uniformly and randomly selected a qubit number from the range between 10 and 80 and sampled the desired number of two-qubit gates to construct the circuit.

We evaluated the reducibility factor using both our greedy heuristic and the DCKF algorithms on these random circuits. For each fixed ratio, we sampled 300 random circuits and ran our greedy algorithm 15 times for each instance. The results in Figure~\ref{fig:random circuit}(a) demonstrate that our greedy heuristic (vertical) outperforms the DCKF algorithm (horizontal) in approximately $98.5\%$ of instances. 

To underscore the significance of handling circuits with commutable structures, we further evaluated the reducibility factor using our greedy heuristic and the improved DCKF algorithms on random IQP circuits.  We sampled 300 random IQP circuits for each fixed ratio and ran our greedy algorithm 10 times for each instance. As depicted in Figure~\ref{fig:random circuit}(b), our greedy algorithm outperforms in nearly $100\%$ of instances with an absolute advantage in $98.4\%$ of cases. More numerical analysis can be found in Appendix~\ref{app: Experimental Results Continued} of the supplementary material.

\begin{figure}[H]
    \begin{subfigure}[t]{0.3\textwidth}
        \centering
        \includegraphics[width=\textwidth]{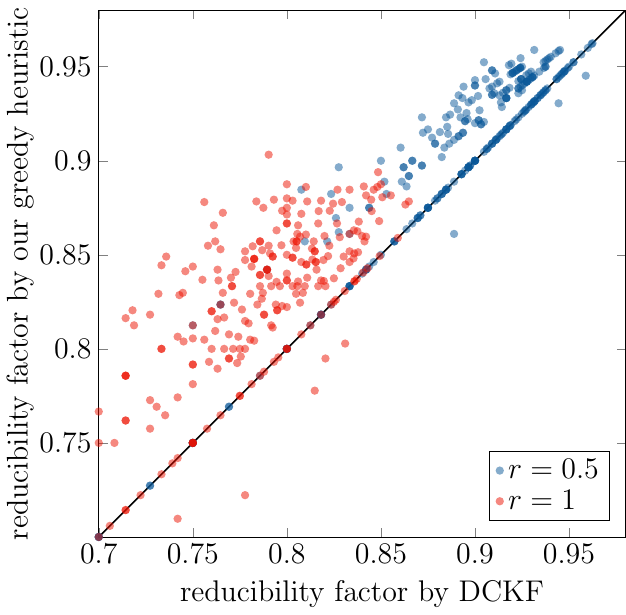}
        \vspace{-6mm}
        \caption*{\tiny \qquad (a)}
    \end{subfigure}
    \hspace{3mm}
    \begin{subfigure}[t]{0.32\textwidth}
        \centering
        \includegraphics[width=\textwidth]{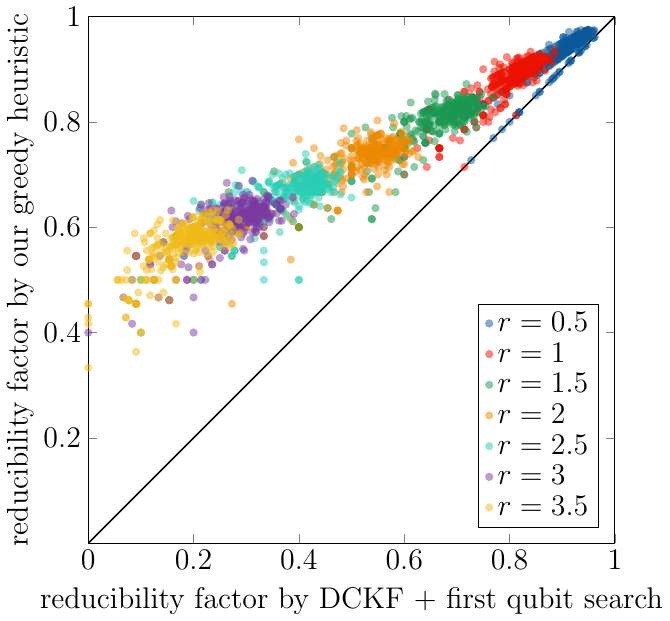}
        \vspace{-6mm}
        \caption*{\tiny (b)}
    \end{subfigure}
    \hspace{-2mm}
    \begin{subfigure}[t]{0.3\textwidth}
        \centering
        \includegraphics[width=\textwidth]{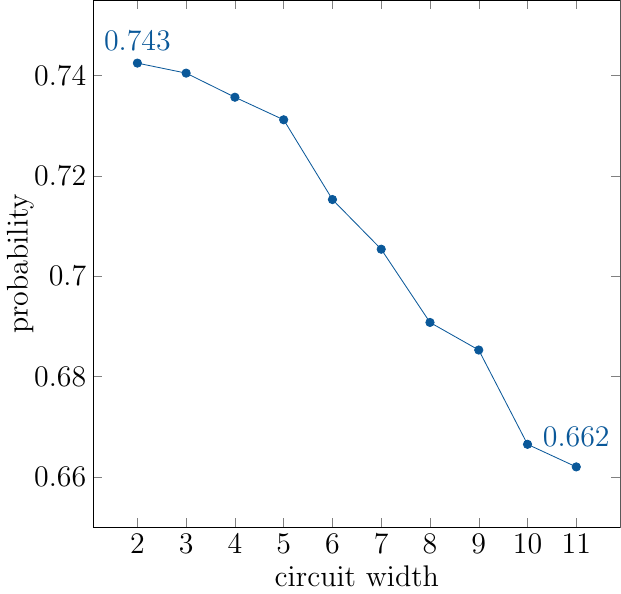}
        \vspace{-6mm}
        \caption*{\tiny \qquad \quad (c)}
    \end{subfigure}
    \vspace{-2mm}
    \caption{(Color online) (a) The reducibility factor of the randomly generated quantum circuits evaluated using our greedy heuristic algorithm~\ref{alg:greedy} and the DCKF algorithm. (b) The reducibility factor of the randomly generated IQP circuits evaluated using our greedy heuristic algorithm~\ref{alg:greedy} and the improved version of DCKF algorithm. The black diagonal line indicates the point at which the reducibility factors produced by both algorithms are equal. (c) The probability of obtaining a correct outcome (all-one string) against the compiled circuit width in the noisy simulation of an 11-qubit Bernstein-Vazirani algorithm.}
    \label{fig:random circuit}
\end{figure}

\subsubsection{Noisy simulation}

Error variability poses a challenge in near-term quantum hardware, making certain qubits perform better than others. By maximizing qubit reuse, we can consistently select qubits with superior performance, thereby enhancing the algorithm's performance. To further demonstrate the practical efficacy of the proposed methods, we design a noisy simulation of an 11-qubit Bernstein-Vazirani (BV) algorithm, specifically targeting the real-world 11-qubit trapped-ion quantum computer reported in~\cite{wright2019benchmarking}. The secret bitstring of the BV algorithm is set to an all-one string. In our simulation, we gradually reduce the number of qubits from 11 to 2 and map the logical qubits onto the physical qubits of the hardware, systematically eliminating a physical qubit with a higher error rate at each step. The resulting probability of obtaining the correct outcome, plotted against the circuit width, is depicted in Figure~\ref{fig:random circuit}(c). The utilization of dynamic quantum circuit compilation enables us to reduce qubit usage by up to $82\%$ while also improving the probability of achieving accurate results by $8\%$. Note that this example is for illustrative purposes, and the advantages of dynamic circuit compilation are expected to be even more prominent when applied to larger-scale algorithms and quantum computers. Detailed information about the noisy simulation is available in Appendix~\ref{app: Experimental Results Continued} of the supplemental material.

\section{Related Work}
\label{sec:related work}

The work~\cite{paler2016wire} studied quantum circuit compilation by wire recycling. They constructed a causal graph to represent the temporal ordering of quantum circuit operations and analyzed the lifetimes of qubits to exploit the potential of recycling wires. They developed two heuristic algorithms based on graph search for wire recycling. However, the proposed method is limited to recycling wires between pre-defined ancilla qubits.

The work~\cite{decross2022qubitreuse} investigated quantum circuit compilation by leveraging the causal structure of the circuits. They formulated the task of minimizing the required number of qubits as a constraint programming and satisfiability (CP-SAT) model. This model incorporates a bunch of constraints and is primarily utilized to numerically benchmark their heuristic algorithms on small-scale problems. In contrast, our approach utilizes a graph manipulation framework that induces straightforward binary integer programming for optimal compilation. This framework has been effectively employed to establish the optimal compilation of numerous quantum circuits. Additionally, \cite{decross2022qubitreuse} proposed greedy heuristic algorithms for approximate compilation. However, our comparative analysis highlights the superior performance of our methods across both structured and random quantum circuits. Notably, our framework successfully addresses an open challenge emphasized in their work, namely, the effective handling of quantum circuits with commutable structures and the ability to conduct compilation at the level of quantum algorithms, regardless of their specific quantum instruction sequences.

The work~\cite{Fei-CaQR} explored the tradeoff between qubit reuse, fidelity, gate count, and circuit duration. They established two conditions for qubit-reuse and designed two versions of compiler-assisted tools with one prioritizing qubit-saving and the other emphasizing SWAP reduction and fidelity improvement. Their empirical demonstrations on quantum hardware showcased notable improvements in qubit usage and circuit fidelity for specific applications, primarily focused on superconducting quantum computers.
Our approach, on the other hand, centers on minimizing the required number of qubits—a scenario well-motivated by trapped ion quantum systems. Additionally, we contribute a more adaptable framework capable of facile extensions to accommodate various optimization objectives. For instance, by fine-tuning the cost function within our greedy algorithm's scoring process, we can explore tradeoffs between circuit width, depth, and other related factors, encompassing their qubit-saving approach.
Furthermore, our study introduces efficient techniques for assessing a quantum circuit's reducibility, serving as a preprocessing step to screen circuits before actual compilation. The optimal compilations identified in our work can serve as benchmarks for other variants of qubit-reuse compilation methods.

The work~\cite{brandhofer2023optimal} introduced a formal SAT-based model for qubit reuse optimization on near-term quantum devices. This model ensured provably optimal solutions concerning quantum circuit depth, number of qubits, or number of swap gates. However, their approach may encounter serious computational challenges and scalability issues as the number of qubits increases. 
\section{Conclusion and Future work}
\label{sec:discussion}

We have conducted a comprehensive investigation into the dynamic circuit compilation problem, introducing the first characterization of this task through graph manipulation and a precise mathematical model for optimal compilation. Our framework primarily targets qubit savings but is general enough to be adapted to other scenarios. The effectiveness of our approach was demonstrated through theoretical analyses of reducibility and optimal compilation for various renowned quantum circuits, as well as numerical evaluation of our heuristic algorithms on a wide range of benchmark circuits. It is worth noting that the dynamic circuit compilation explored in this work offers a complementary strategy to other circuit optimization techniques and can be seamlessly integrated with existing methods. As we approach the point of demonstrating quantum advantage in practical applications, our results shall serve as timely contributions for bridging the gap between theoretical quantum algorithms and their physical implementation on quantum computers with limited resources.

In addition to the practical utility, our work also establishes the connection between dynamic circuit compilation and graph theory. We believe there are many other techniques from graph theory that could be used to extend our framework and be applied in the area of quantum circuit compilation and optimization. An in-depth discussion of related open problems can be found in Appendix~\ref{sec:open problems} of the supplementary material.

\section*{Acknowledgements}
We would like to thank Jingtian Zhao for part of the code implementation in QNET. This work was done when M. Z. and R. S. were research interns at Baidu Research. Y. L. is supported by the National Nature Science Foundation of China (No. 62302346) and supported by ``the Fundamental Research Funds for the Central Universities''.

\bibliographystyle{halpha}
\bibliography{Bib}

\clearpage

\newpage
\appendix
\noindent \textbf{\Large Supplementary Material}

\vspace{0.5cm}

This supplementary material provides a more detailed analysis and proof of the results in the main text. It also contains some further algorithms and their experimental results.

\begin{itemize}
    \item Appendix~\ref{app: further algorithms}: Further algorithms
    \item Appendix~\ref{app: Proofs of results}: Proofs of results
    \item Appendix~\ref{app: Optimal compilations for important quantum circuits}: Optimal compilations for important quantum circuits
    \item Appendix~\ref{sec:dckf implementation}: Implementation of DCKF algorithm
    \item Appendix~\ref{app: Experimental Results Continued}: Experimental results continued
    \item Appendix~\ref{sec:open problems}: Open problems
\end{itemize}

\section{Further Algorithms}
\label{app: further algorithms}

In this section, we provide further algorithms to support our compilation framework.

\subsection{Converting static quantum circuit to DAG}

Algorithm~\ref{alg:Converting static quantum circuit to DAG} below provides an efficient procedure for transforming a quantum circuit from its circuit instructions into the corresponding DAG representation. Particularly, to process circuits with commutable structure, each instruction is associated with a group tag, indicating to which group of gates it belongs. Operations within the same group are commutable, whereas a `None' tag denotes a non-commutable gate. The algorithm traverses the static circuit instructions, generating vertices in the directed graph for each quantum instruction encountered. It then iterates over the qubits involved in the instruction, examining the preceding operation on each qubit. In cases where the preceding operation is non-commutable, a directed edge is established from the previous operation to the current one. However, if the previous operation is commutable, it needs to check whether the previous operation and the current operation belong to the same commutable group. If they do, it traverses the \textit{CasualList} of this qubit in reversed order and identifies the previous commutable group. If this group is `None', a directed edge is created from the first non-commutable operation to the current one. Conversely, if a previous commutable group is identified, the algorithm connects all operations on this qubit belonging to the previous commutable group to the current one. 


\begin{algorithm}[!htb]
\small
\caption{Converting static quantum circuit to DAG}
\label{alg:Converting static quantum circuit to DAG}
\LinesNumbered

\KwIn{\\  \begin{tabular}{p{2.5cm}l}
     \textit{StaticCircuit} & a static quantum circuit instructions
\end{tabular}} 

\BlankLine

\KwOut{\\ \begin{tabular}{p{2.5cm}l} 
\textit{Digraph} & a DAG representation of the static quantum circuit
\end{tabular}}

\BlankLine

Let $n$ be the width of quantum circuit\;

Initialize an empty directed graph \textit{Digraph}\; 

Initialize a list \textit{CausalLists} of length $n$, with each element initialized as an empty list\;

\BlankLine

\ForEach{Instruction {\rm \textbf{in}} StaticCircuit}{
    Add a vertex \textit{Vertex} labeled as ID of \textit{Instruction} to \textit{Digraph}\;
    Identify the group tag of \textit{Instruction} as \textit{Group}\;
    \ForEach{q {\rm \textbf{in}} QUBIT value of instruction}{
        Find the last entry of \textit{CausalLists}[$q$] and record it as \textit{PreVertex}\;
        \If{PreVertex exits}{ 
            Identify the group tag of \textit{StaticCircuit}[\textit{PreVertex}] as \textit{PreGroup}\;
            \If{PreGroup is None}{
                Add a directed edge from \textit{PreVertex} to \textit{Vertex} to \textit{Digraph}\;
            }
            \Else{
                \If{Group is equal to PreGroup}{
                    \ForEach{V {\rm \textbf{in}} reversed(CausalLists{\rm [$q$]})}{
                        Identify the the group tag of \textit{StaticCircuit}[\textit{V}] as \textit{CurrentGroup}\;
                        \If{CurrentGroup is not equal to Group}{
                            Set \textit{PreGroup} to \textit{CurrentGroup}\;
                            \If{PreGroup is None}{
                                Add a directed edge from \textit{V} to \textit{Vertex} to \textit{Digraph}\;
                                \textbf{End else}\;
                            }
                            \textbf{Break foreach}\;
                        }
                    }
                }
                Identify all vertices in \textit{CausalLists}[$q$] belonging to \textit{PreGroup}\;
                Add directed edges from all identified vertices to \textit{Vertex} to \textit{Digraph}\;  
            }
        }
        Append ID of \textit{Instruction} to the end of list \textit{CausalLists}[$q$]\;
    }
}

\BlankLine

\Return {Digraph}

\end{algorithm}

The time complexity of this algorithm is analyzed as follows. Assume that the input static circuit has $n$ qubits and $m$ operations. The outermost `foreach' loops over the static circuit instruction, which takes $O(m)$ times. In cases where the previous operation on a qubit is not commutable, we only add an edge to the graph, which can be done in $O(1)$ times. The primary complexity arises when the previous operation and the current operation belong to the same commutable group, where we need to traverse the \textit{CasualList} of this qubit in reversed order, which takes $O(m / n)$ times in the worst-case ($m / n$ indicates the average number of operations on a qubit). Then adding edges from operations in the previous commutable group to the current operation also takes at most $O(m / n)$ times. Therefore, the overall time complexity of this algorithm can be calculated as $O(m \times (m / n + m /n)) = O(m^2 / n)$.


It's important to recognize that imposing dependencies between commutable operations may limit the opportunities for qubit-reuse. As an example, consider the quantum circuit in Figure~\ref{fig:circuits with commutable structure 1}, where all the $CZ$ gates are commutable. As depicted by Figure~\ref{fig:circuits with commutable structure 2}, ignoring the commutability will impose some unnecessary dependencies (edges marked in red), ultimately limiting the potential for qubit-reuse. The DAG with flexible dependencies is shown in Figure~\ref{fig:circuits with commutable structure 3}, where two qubits can be reused and the circuit can be reduced to 2 qubits.

\begin{figure}[!htb]
    \centering
    \begin{subfigure}[b]{.28\textwidth}
        \centering
        \includegraphics[width=0.7\textwidth]{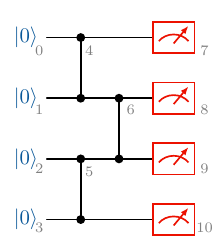}
        \caption{quantum circuit}
        \label{fig:circuits with commutable structure 1}
    \end{subfigure}
    \begin{subfigure}[b]{.34\textwidth}
        \centering
        \includegraphics[width=0.6\textwidth]{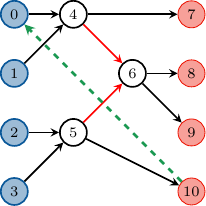}
        \caption{DAG with imposed dependencies}
        \label{fig:circuits with commutable structure 2}
    \end{subfigure}
    \hspace{-0.5em}
    \begin{subfigure}[b]{.34\textwidth}
        \centering
        \includegraphics[width=0.6\textwidth]{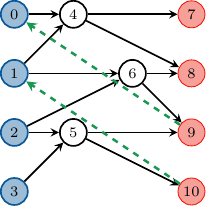}
        \caption{DAG with flexible dependencies}
        \label{fig:circuits with commutable structure 3}
    \end{subfigure}
    \caption{(Color online) An illustration of compiling quantum circuits with commutable structure.}
    \label{fig:circuit independent compilation}
\end{figure}

\subsection{Converting modified DAG to dynamic quantum circuit}

After getting a modified graph, Algorithm~\ref{alg:modified DAG to dynamic circuit} can efficiently convert it to a dynamic quantum circuit.


\begin{algorithm}[!htb]
\small
\caption{Converting modified DAG to dynamic quantum circuit}\label{alg:modified DAG to dynamic circuit}
\LinesNumbered
\BlankLine

\KwIn{\\  \begin{tabular}{p{2.5cm}l}
     \textit{StaticCircuit} & the instruction list of a static quantum circuit\\
     \textit{ModifiedGraph} & a modified DAG\\
     \textit{AddedEdges} & a list of added edges
\end{tabular}} 

\BlankLine

\KwOut{\\ \begin{tabular}{p{2.5cm}l} \textit{DynamicCircuit} & the compiled dynamic quantum circuit instructions
\end{tabular}}

\BlankLine

Initialize empty lists \textit{TopologicalOrder} and \textit{DynamicCircuit}\;

\BlankLine

Apply topological sorting to \textit{ModifiedGraph}, record the result in \textit{TopologicalOrder}\;

Rearrange the order of instructions in \textit{StaticCircuit} according to the order of vertices in \textit{TopologicalOrder}, record the result in \textit{DynamicCircuit}\;

\ForEach{Edge {\rm \textbf{in}} AddedEdges}{
    Let $v_i$ and $v_j$ be the head and tail vertices of \textit{Edge}, respectively\;
    Identify the circuit instructions corresponding to $v_i$ and $v_j$, record the QUBIT values of these two instructions as $q_i$ and $q_j$, respectively\;
    \ForEach{Instruction {\rm \textbf{in}} DynamicCircuit}{
        \ForEach{Qubit {\rm \textbf{in}} QUBIT value of Instruction}{
            \lIf{Qubit is equal to $q_i$}{update \textit{Qubit} to $q_j$}
        }
    }
}
\Return \textit{DynamicCircuit}
\end{algorithm}

\BlankLine

The time complexity of Algorithm \ref{alg:modified DAG to dynamic circuit} is analyzed as follows. Assume that the input static circuit has $n$ qubits and $m$ operations. Initially, the modified DAG is topologically sorted using the Depth-First Search (DFS) algorithm. Denoting the number of vertices and edges in the modified DAG as $|V|$ and $|E|$, it is clear that $|V| = m$. As each operation introduces at most two edges to the graph, $|E|$ scales linearly with $m$, resulting in $|E| = O(m)$. Consequently, the time complexity of topological sorting with DFS is $O(|V| + |E|) = O(m)$~\cite{bang2008digraphs}. Following this, a traversal of the topological order with $m$ vertices is performed to reorder the static circuit instructions. Note that the ID of an instruction and the label of the corresponding vertex are the same. Therefore, for each vertex label $i$, the corresponding instruction is accessed through \textit{StaticCircuit}[$i$] and appended to \textit{DynamicCircuit}, which allows the rearrangement to be completed in $O(m)$ time. Subsequently, for each added edge, we need to traverse the \textit{DynamicCircuit} list and update the qubit indices. For a static circuit with $n$ qubits, at most $n-1$ edges can be added to the DAG, therefore the updating step exhibits a time complexity of $O(mn)$. Consequently, the overall time complexity of Algorithm~\ref{alg:modified DAG to dynamic circuit} is computed as $O(m) + O(m) + O(mn) = O(mn)$.

\subsection{Hybrid Algorithm in details}
\label{sec:hybrid}

Given that the search space is the Cartesian product of candidate roots sets of all terminals in $T_E$, the algorithm initiates by identifying $L$ terminals with the least number of candidate roots, which significantly contracts the search space. Note that for each $t \in T_E$, we add a $\perp$ to the set of candidate roots to represent the situation where $t$ connects no root. Due to the constraint that each terminal can be connected to only one root, solutions with repeated items (except $\perp$) in the calculated search space are rendered unfeasible and are consequently eliminated. Subsequently, for each feasible solution, we check whether the DAG with these additional edges is acyclic. Once acyclicity is verified, Algorithm \ref{alg:reachability through dfs} is employed on the modified DAG to update the candidate matrix for the remaining terminals and roots, followed by Algorithm \ref{alg:mrv} to identify the further added edges. Finally, the solution featuring the highest number of added edges is selected to compile the input static circuit. The complete algorithm is as follows.

\BlankLine

\begin{algorithm}[H]
\small
\caption{Hybrid algorithm with hierarchy level $L$}\label{alg:hybrid}
\LinesNumbered

\KwIn{\\  \begin{tabular}{p{3cm}l}
     $L$                      & the hierarchy level\\
     \textit{StaticCircuit}   & the instruction list of a static quantum circuit to compile\\
\end{tabular}}

\BlankLine

\KwOut{\\  \begin{tabular}{p{3cm}l}
    \textit{DynamicCircuit} & the instruction list of the compiled dynamic quantum circuit\\
\end{tabular}}

\BlankLine

Run Algorithm~\ref{alg:Converting static quantum circuit to DAG} to get the DAG representation \textit{Digraph} of the circuit\;

Let \textit{Roots} and \textit{Terminals} be the set of roots and terminals, respectively\;

Run Algorithm~\ref{alg:reducibility through Boolean matrix} to get the biadjacency matrix $B$ of the simplified DAG of the circuit\;

Calculate the candidate matrix \textit{CandidateMatrix} as $^\neg (B^{\top})$\;

\BlankLine

Initialize empty lists \textit{MaxEdges}, \textit{CandidatesNum}, \textit{IdxTerminals}, \textit{IdxRoots} and \textit{SearchSpace}\;

\BlankLine

Let \textit{CandidatesNum} be the sum of each row in \textit{CandidateMatrix}\;

Let \textit{IdxTerminals} be indices of the first $L$ smallest non-zero values in \textit{CandidatesNum}\;

\ForEach{Index {\rm \textbf{in}} IdxTerminals}{
    Let $R$ be indices of non-zero entries in the \textit{Index}-th row of \textit{CandidateMatrix}\;
    Add an element $\perp$ to $R$, and append $R$ to \textit{IdxRoots}\;
}

Calculate the Cartesian product of sets in \textit{IdxRoots} and record it in \textit{SearchSpace}\;

Delete any elements with repeated items (except $\perp$) from \textit{SearchSpace}\;

\BlankLine

\ForEach{Solution {\rm \textbf{in}} SearchSpace}{
    Initialize an empty list \textit{AddedEdges}\;
    
    \For{$i=0$ {\rm \textbf{to}} $L-1$}{
        \If{\textit{Solution}[$i$] is not $\perp$}{
        Add edge (\textit{Terminals}[\textit{IdxTerminal}[$i$]], \textit{Roots}[\textit{Solution}[$i$]]) to \textit{AddedEdges}\;
        }
        
    }

    \If{Digraph with AddedEdges is acyclic}{
        Run Algorithm \ref{alg:reachability through dfs} with \textit{Digraph} and \textit{AddedEdges} to get a updated simplified DAG and then the updated candidate matrix \textit{CanSubMatrix}\;
        Run Algorithm \ref{alg:mrv} with \textit{CanSubMatrix} to get the added edges \textit{Edges}\;
        Append \textit{Edges} to \textit{AddedEdges}\;
        \If{the length of AddedEdges is larger than the length of MaxEdges}{
            Set \textit{MaxEdges} to \textit{AddedEdges}\;
        }
    }
}

\BlankLine

Add \textit{MaxEdges} to \textit{Digraph} and get the \textit{ModifiedGraph}\;

Run Algorithm \ref{alg:modified DAG to dynamic circuit} with the \textit{ModifiedGraph} to get the compiled circuit \textit{DynamicCircuit}\;

\Return \textit{DynamciCircuit}

\end{algorithm}

\BlankLine

\begin{proposition}\label{prop:time complexity of hybrid algorithm}
For a static quantum circuit with $n$ qubits and $m$ operations, Algorithm~\ref{alg:hybrid} with hierarchy level $L$ has a worst-case time complexity of $O(n^L m^2 (n-L)^2)$.
\end{proposition}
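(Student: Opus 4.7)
The plan is to bound the running time of Algorithm~\ref{alg:hybrid} by combining (i) the cost of the one-shot preprocessing performed before the main \emph{foreach} loop, (ii) the number of feasible tuples in the enumerated search space, and (iii) the per-iteration cost of the loop; the overall bound then comes from multiplying (ii) by (iii) and absorbing (i). First I would dispose of the preprocessing: constructing the DAG via Algorithm~\ref{alg:Converting static quantum circuit to DAG} costs $O(m^2/n)$, forming the biadjacency matrix $B$ via Algorithm~\ref{alg:reducibility through Boolean matrix} costs $O(mn)$, and computing the candidate matrix $\bar B$, the row-sums, and the indices of the $L$ smallest non-zero entries together cost $O(n^2)$, all of which will be dominated by the contribution from the main loop and the post-loop call to Algorithm~\ref{alg:modified DAG to dynamic circuit}.

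Next I would bound the search-space size. For each of the $L$ selected terminals in $T_E$, the candidate-root set augmented by the placeholder $\perp$ has at most $n+1$ elements, so the Cartesian product contains at most $(n+1)^L=O(n^L)$ tuples, and the subsequent removal of tuples with a repeated non-$\perp$ coordinate can only decrease this count. Hence the \emph{foreach} loop over \textit{SearchSpace} is executed at most $O(n^L)$ times, and the construction and filtering of \textit{SearchSpace} itself also fit within this budget.

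For the per-iteration cost, I would first observe that assembling \textit{AddedEdges} takes $O(L)$ and the acyclicity check on the modified digraph runs in $O(|V|+|E|)=O(m)$. The call to Algorithm~\ref{alg:reachability through dfs} on the modified DAG, applied to the $(n-L)\times(n-L)$ sub-block of pairs involving the remaining terminals and roots, is bounded using Proposition~\ref{prop:reducibility through DFS}, and the subsequent call to Algorithm~\ref{alg:mrv} on the resulting sub-candidate matrix is bounded by Proposition~\ref{prop:time complexity of mrv algorithm} at $O(m(n-L)+(n-L)^3)$. Combining these with a worst-case bound on the reachability update that may re-traverse the modified DAG for each of the $(n-L)^2$ remaining root--terminal pairs yields a per-iteration cost dominated by $O(m^2(n-L)^2)$.

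Multiplying the per-iteration cost by the number of iterations then gives the claimed $O(n^L m^2(n-L)^2)$, which also absorbs the preprocessing and the final call to Algorithm~\ref{alg:modified DAG to dynamic circuit}. The main obstacle is the careful bookkeeping in step (iii): one has to verify that the combined effect of Algorithm~\ref{alg:reachability through dfs} on the modified DAG and Algorithm~\ref{alg:mrv} on the resulting sub-candidate matrix stays within the stated $O(m^2(n-L)^2)$ per-iteration envelope, since everything else is a routine aggregation of the previously established complexities and a direct multiplication by the search-space bound $O(n^L)$.
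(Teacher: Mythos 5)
Your proposal is correct and follows essentially the same route as the paper's proof: bound the enumerated search space by $O(n^L)$, bound the per-iteration work by the acyclicity check, the call to Algorithm~\ref{alg:reachability through dfs} on the remaining $(n-L)\times(n-L)$ block, and the MRV call, then multiply and absorb the preprocessing. The only soft spot --- justifying the per-iteration envelope $O(m^2(n-L)^2)$ rather than the $O(m(n-L)^2)$ one gets by simply adding the $O(m)$ acyclicity check to the $O(m(n-L)^2)$ reachability update --- is equally present in the paper's own accounting, and since the claim is only an upper bound this looseness is harmless.
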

\begin{proof}
The primary complexity of this algorithm arises from the enumeration process (the second `foreach' loop). In the worst-case, each terminal $t \in T_E$ has $(n-1)$ candidate roots, leading to a search space of size at most $(n-1)^L$. Within the search space, each solution undergoes a topological sorting to check whether DAG with additional edges is acyclic. This operation leverages the DFS algorithm, which carries a time complexity of $O(m)$. Following the topological sorting step, Algorithm~\ref{alg:reachability through dfs} is executed on the modified DAG to update both the simplified DAG and the candidate matrix of the remaining $(n-L)$ terminals and roots, which demands $O(m(n-L)^2)$ time as outlined in Proposition~\ref{prop:reducibility through DFS}. Subsequently, the MRV heuristic algorithm is employed to identify edges that can be added between the remaining terminals and roots. Proposition~\ref{prop:time complexity of mrv algorithm} indicates that the MRV algorithm operating on a $(n-L) \times (n-L)$ candidate matrix exhibits a worst-case time complexity of $O(m(n-L) + (n-L)^3)$. Consequently, the total time complexity of the hybrid algorithm is given by: $O(n^L m^2 (n-L)^2) + O(n^L m^2 (n-L)) + O(n^L m (n-L)^3)$, where the dominant factor is $O(n^L m^2 (n-L)^2)$ since $m$ is typically larger than $n$.
\end{proof}

\section{Proofs of results}

\label{app: Proofs of results}

\subsection{Thm. \ref{thm: circuit compilation via graph manipulation}}

\begin{proof}
Dynamic quantum circuit compilation through qubit-reuse seeks to delay certain reset operations until after the measurements, which is illustrated by the addition of a directed edge from a terminal to a root in the DAG representation. However, when incorporating new edges, it is imperative to adhere to the following constraints.
\begin{enumerate}
    \item Resetting a qubit is possible only when all operations on it have been carried out. So the added edge should start from terminals. Moreover, the circuit width is determined by the number of roots in the DAG representation. To reduce the circuit width, the added edges should end at roots. Overall, the directed edges should be added from terminals to roots. This corresponds to the first condition in the asserted result.
    \item A reused qubit can only accommodate one reset operation. So the added edges should have no common tails. Moreover, since the circuit width is determined by the number of roots in the DAG representation, the added edges having common heads will not help. So we can restrict our attention to the case that the added edges share no common heads. This corresponds to the second condition in the asserted result.
    \item Since directed edges represent the execution order of operations, the presence of any cycle in the graph indicates a dependency of past operations on future ones, which violates the causal relation of the quantum circuit. Therefore, the addition of these directed edges must not introduce any cycles, indicating the compiled circuit is still well-defined. This corresponds to the third condition in the asserted result.
\end{enumerate}
Conversely, for any graph manipulation that complies with the specified conditions, we can demonstrate that it corresponds to a dynamic circuit compilation approach. Given that the DAG representation preserves the execution order of quantum operations within the corresponding quantum circuit, employing topological sorting on the modified DAG enables us to establish a viable execution sequence. This sequence is then utilized to rearrange the circuit instruction list. Since our actions solely involve rewiring the original quantum circuit, the resulting compiled circuit maintains its equivalence to the static circuit. Ultimately, it's worth noting that all the conditions mentioned remain independent of whether we are working with a DAG or a simplified DAG representation.

\end{proof}

\subsection{Prop. \ref{prop:reducibility from dag}}
\begin{proof}
If the simplified DAG $G = (R, T, E)$ is a complete bipartite graph, then any terminal is reachable from any root. Within such a graph, the inclusion of any additional edge from a terminal to a root necessarily introduces a directed cycle, thus violating the conditions. Therefore, the circuit width can not be reduced through qubit reuse. Conversely, if the simplified DAG is not complete, it indicates the absence of connections between certain roots and terminals. More specifically, there exists at least one terminal $t$ that is not reachable from a root $r$. In this case, the directed edge $(t, r)$ can be added to the graph, which does not introduce any cycles. Thereby, the corresponding circuit can be reduced to a circuit with a smaller width.
\end{proof}

\subsection{Prop. \ref{prop:reducibility through DFS}}
\begin{proof}
Note that this approach combines three parts.
First, we employ Algorithm \ref{alg:Converting static quantum circuit to DAG} to derive the DAG representation of the given static circuit, which exhibits a time complexity of $O(m)$. Then, we convert the DAG to a simplified DAG by using DFS. For each root-terminal pair $(r, t)$ in the DAG, DFS algorithm is applied to explore the existence of a path from $r$ to $t$. Let $|R|$, $|T|$, $|V|$, and $|E|$ denote the number of roots, terminals, vertices, and edges in the DAG, respectively. It is clear that $|R| = |T| = n$, and $|V| = m$, $|E| = O(m)$ (each instruction introduces at most two edges to DAG). Since running DFS once in such a graph takes $O(|V| + |E|) = O(m)$ steps, and the number of trials we need is $|R| \times |T| = n^2$. So this part takes at most $O(m n^2)$ steps. Finally, after obtaining the reachability relation between all roots and terminals, we get the biadjacency matrix of the simplified DAG, which can be used to conclude the reducibility within $n^2$ steps. Therefore, the overall time complexity required to determine the reducibility of a given static circuit is $O(m) + O(m n^2) + O(n^2)$, with the dominant term being $O(m n^2)$.
\end{proof}

\subsection{Prop. \ref{prop:reducibility from qubit reachability}}
\begin{proof}
Note that the sequence of qubit relations is equivalent to a subset of instructions involving interlacing qubits. This equivalence can further be represented as a directed path in the DAG representation, connecting the $i$-th root to the $j$-th terminal. Thus, any two qubits in the quantum circuit are mutually reachable if and only if a directed path exists in the DAG from any root to any terminal. This condition is equivalent to stating that the simplified DAG is complete. Finally, the proof is concluded by referring to Proposition \ref{prop:reducibility from dag}.
\end{proof}

\subsection{Prop. \ref{prop:reducibility through structure}}
\begin{proof}
To determine the reducibility of a static circuit, it is necessary to analyze the reachability relation for each qubit. These reachability relations are stored in a list named the \emph{ReachableSets}, which has a length of $n$, representing the circuit's width, with the $i$-th entry containing the reachable set $Q_i$. This initial process involves $O(n)$ steps.
Subsequently, a `foreach' loop is executed over the `StaticCircuit' list, which comprises $m$ elements. Throughout this loop, all operations, except the union set calculation, exhibit a constant time complexity of $O(1)$ as they are independent of the input size. However, the computational complexity for uniting two sets, $S_1$ and $S_2$, amounts to $O(|S_1| + |S_2|)$. It is worth noting that the reachable set of a qubit contains a maximum of $n$ elements, thus the union calculation is bounded by $O(n)$. Consequently, the overall time complexity of the `foreach' loop is $O(mn)$.
In the final step, which determines reducibility, another `foreach' loop is executed, iterating over the \emph{ReachableSets}, incurring an additional $O(n)$ steps. In summary, the time complexity of Algorithm \ref{alg:reducibility qubit reachability} is $O(mn) + O(n) + O(n)$, with the dominant factor remaining $O(mn)$.
\end{proof}

\subsection{Prop. \ref{prop: biadjacency matrix relation of composite circuit}}
\begin{proof}
Let $G_m$ be the simplified DAG of circuit $\cC_m$. Then the $(i, j)$ entry of $B(\cC_m)$ equals to one if and only if there is an edge from root $r_i^m$ to terminal $t_j^m$ in graph $G_m$. To study the simplified DAG of the composite circuit $\cC_1 \circ \cC_2$, we can connect all terminals $t_k^1$ in graph $G_1$ with their corresponding roots $r_k^2$ in graph $G_2$. Therefore, a path from $r_i^1$ to $t_j^2$ exists if and only if there is an edge from $r_i^1$ to $t_k^1$ in graph $G_1$, followed by an edge from $r_k^2$ to $t_j^2$ in graph $G_2$ for some intermediate $k$. This condition can be represented as $\vee_{k=1}^n (B(\cC_1)_{ik} \wedge B(\cC_2)_{kj})$, which is exactly the $(i, j)$ entry of the Boolean matrix product $B(\cC_1) \odot B(\cC_2)$.
\end{proof}

\begin{remark}
Consider a quantum circuit $\mathcal{C}$ comprising $m$ two-qubit gates. Denote $B_i$ as the biadjacency matrix of the subcircuits, each containing only a single two-qubit gate. Then the biadjacency matrix of $\cC$ is given by $B = B_1 \cdots B_m$. Since $B_i^\top= B_i$, we get
\begin{align}
    B = (B_mB_{m-1}\cdots B_1)^T.
\end{align}
This corresponds to the concept of a dual circuit as presented in~\cite{decross2022qubitreuse}. In this context, we reverse the sequence of gates and exchange the roles of state preparation and measurement. The equality mentioned above implies that the dual circuit shares the same biadjacency matrix as the original circuit. Therefore, the compilation strategies for both circuits can be the same.  
\end{remark}

\subsection{Prop. \ref{prop:reducibility through matrix}}
\begin{proof}
First, a `foreach' loop is executed over the \textit{StaticCircuit} list, encompassing $m$ elements. Throughout this loop, all operations, except for the calculation of the entrywise OR, exhibit a constant time complexity of $O(1)$, as they are independent of input size. Since each column of $B$ contains $n$ elements, the entrywise OR takes $O(n)$ steps. Consequently, the cumulative time complexity of the `foreach' loop equates to $O(mn)$.
The last part that determines if a Boolean matrix is all-one matrix takes $O(n^2)$ steps. In summary, the overall time complexity is $O(mn) + O(n^2)$, with the dominant factor remaining $O(mn)$ (typically, $m$ is larger than $n$).
\end{proof}

\subsection{Prop. \ref{prop: subcircuit reducibility}}

The following result explores the connection between the reducibility of a quantum circuit and its subcircuits.

\begin{proposition}\label{prop: subcircuit reducibility}
A static quantum circuit is irreducible if it contains an irreducible subcircuit. The reverse direction is not true. That is, there exists an irreducible static quantum circuit such that any of its subcircuit is reducible.
\end{proposition}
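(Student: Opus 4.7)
The plan is to prove the two assertions separately. The first direction will follow from monotonicity of the Boolean biadjacency matrix under circuit composition, and the converse will be established by an explicit small counterexample.

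For the forward direction, I would work with the Boolean matrix characterization from Section~\ref{subsec:reducibility Boolean matrix}. Let $\cC'$ be an irreducible subcircuit of $\cC$ obtained by removing some instructions. By Remark~\ref{rem:ignore single-qubit gates}, only the removal of two-qubit gates can change the biadjacency matrix, so I may assume the removed instructions are two-qubit gates. Using Proposition~\ref{prop: biadjacency matrix relation of composite circuit}, I would write $B(\cC)$ as an ordered Boolean product of single two-qubit gate biadjacency matrices of the form $B_{ij} = I_n + E^n_{ij} + E^n_{ji}$. The crucial observation is that every such factor satisfies $B_{ij} \ge I_n$ entrywise and that $\odot$ is monotone under entrywise $\le$. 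Consequently, deleting any subset of factors can only decrease the resulting matrix entrywise, giving $B(\cC') \le B(\cC)$. If $\cC'$ is irreducible then $B(\cC')$ equals the all-one matrix by Proposition~\ref{prop:reducibility from dag}, which forces $B(\cC)$ to be the all-one matrix as well, so $\cC$ is irreducible.

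For the converse, I would exhibit the three-qubit triangle circuit consisting of $\CNOT(0,1)$, $\CNOT(1,2)$, $\CNOT(2,0)$ applied in this order, framed by the standard resets and measurements. Using the qubit-reachability criterion of Proposition~\ref{prop:reducibility from qubit reachability}, I would check that all six ordered pairs of distinct qubits are reachable, the nontrivial case being $q_2 \myrightarrow{3} q_0$ which closes the cyclic dependency and enables $q_2 \to q_0$. Next I would argue that deleting any single CNOT breaks one of the directed reachability chains: removing $\CNOT(0,1)$ eliminates $q_0 \to q_1$, removing $\CNOT(1,2)$ eliminates $q_2 \to q_1$, and removing $\CNOT(2,0)$ eliminates $q_2 \to q_0$, in each case because no alternative chain respects the time-ordering constraint $k_1 \le k_2 \le \cdots$ in Definition~\ref{def:reachability of qubits}. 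Each single-CNOT removal therefore yields a reducible subcircuit, and by the contrapositive of the forward direction every further subcircuit of such a reducible subcircuit is also reducible. Hence every strict subcircuit of the triangle is reducible.

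The main obstacle is the reverse direction, specifically the explicit case analysis that no indirect reachability chain survives any of the three single-gate deletions. This requires tracking the time-ordering indices carefully, but it is a finite and routine check. The forward direction is essentially a one-line consequence of Proposition~\ref{prop: biadjacency matrix relation of composite circuit} combined with the monotonicity of $\odot$.
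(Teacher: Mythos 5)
Your proposal is correct and follows essentially the same route as the paper: the forward direction is exactly the paper's argument (decompose $\cC$ into single two-qubit-gate factors, use $B_{ij}\geq I_n$ and monotonicity of the Boolean product to get $B(\cC)\geq B(\cC')=J$), and the converse is settled by an explicit witness. The only difference is the witness itself — the paper exhibits a $4$-qubit, $5$-CNOT circuit, whereas your $3$-qubit triangle $\CNOT(0,1),\CNOT(1,2),\CNOT(2,0)$ is a smaller one; your reachability checks confirm it is irreducible while each single-gate deletion breaks exactly one time-ordered chain, so it works just as well.
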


\begin{figure}[!htb]
    \centering
    \includegraphics[width=0.4\textwidth]{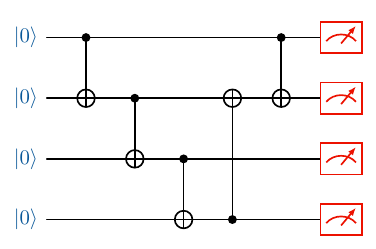}
    \caption{(Color online) An irreducible static quantum circuit such that any of its subcircuit is reducible.}
    \label{fig:counter example}
\end{figure}

\begin{proof}
Note that any quantum circuit $\cC$ can be seen as a composition of a sequence of subcircuits with a single quantum gate $\cC = \cC_1 \circ \cdots \circ \cC_M$. Let $B_i$ be the biadjacency matrix of the simplified DAG of $\cC_i$. The circuit $\cC$ has an irreducible subcircuit implies that there exists $\cC_{i_1}\circ \cdots \circ \cC_{i_m}$ such that $\bigodot_{j=1}^m B_{i_j} = J$. Then this implies $\bigodot_{i=1}^M B_i \geq  \bigodot_{j=1}^m B_{i_j} = J$, where the inequality signifies that each entry of the left matrix is greater than or equal to the corresponding entry in the right matrix. This proves the first statement.
The quantum circuit in Figure~\ref{fig:counter example} is an irreducible circuit with $4$ qubits and $5$ CNOTs. However, the removal of any CNOT in this circuit will result in a reducible circuit.
\end{proof}

\subsection{Prop. \ref{prop:compilation via BIP}}
\begin{proof}
Let $G = (R, T, E)$ be the simplified DAG of the circuit and $R = \{r_0, ..., r_{n-1}\}$, $T = \{t_0, ..., t_{n-1}\}$. Then the adjacency matrix of the directed graph $G$, denoted as $A(G)$, is a $2n \times 2n$ block anti-diagonal matrix, where the first $n$ rows/columns correspond to the $n$ roots, and the last $n$ rows/columns correspond to the $n$ terminals. Since all edges in the original bipartite graph are pointing from roots to terminals, only the $n \times n$ submatrix in the upper right corner has non-zero elements. Therefore, the adjacency matrix can be written in the form:
\begin{equation}
    A(G) = 
    \begin{pmatrix}
    O_n & B\\
    O_n & O_n
    \end{pmatrix}
\end{equation}
where $B$ is the biadjacency matrix of the bipartite graph. 

Note that the absence of directed edge from $r_i$ to $t_j$ in the original bipartite graph $G$ indicates a candidate edge from $t_j$ to $r_i$ in the modified DAG. All these candidate edges are pointing from terminals to roots, therefore can be represented by the $n \times n$ submatrix $\bar{B}$ in the lower left corner of the adjacency matrix $A(G)$. The submatrix $\bar{B}$, can be obtained by:
\begin{equation}
    \bar{B} = ^\neg (B^{\top})
\end{equation}
which is the logical NOT of the transpose of matrix $B$.

We use a $n \times n$ Boolean matrix $F$ to represent all added edges, where the entry $F_{ij} = 1$ if a directed edge from $t_i$ to $r_j$ is added to the graph, and $0$ otherwise. Suppose that $G^{\prime}$ is the modified DAG, then the adjacency matrix of $G^{\prime}$ can be written as:
\begin{equation}
    A(G^{\prime}) = 
    \begin{pmatrix}
        O_n & B \\
        F & O_n
    \end{pmatrix}
\end{equation}

The constraints imposed in Theorem~\ref{thm: circuit compilation via graph manipulation} when adding edges to $G$ can be translated into the following constraints on matrix $F$:
\begin{enumerate}
    \item The objective of maximizing the number of added edges is equivalent to maximizing the sum of all elements in matrix $F$:
    \begin{align}
        \max \sum_{i,j=0}^{n-1} F_{ij} 
    \end{align}
    \item All added edges should be selected from candidate edges, which means elements of matrix $F$ should be selected from elements of matrix $\bar{B}$ and can be expressed as:
    \begin{align}
        F_{ij} \leq \bar{B}_{ij}, \forall i,j \in \{0, 1, ..., n-1\}
    \end{align}
    \item The added edges must not share common vertices, which implies the sum of each row/column of $F$ can not exceed one:
    \begin{align}
        & \sum_{j=1}^{n} F_{ij} \le 1, \, \forall i \in \{0, 1, ..., n-1\}\\
        & \sum_{i=1}^{n} F_{ij} \le 1, \, \forall j \in \{0, 1, ..., n-1\}.
    \end{align}
    \item After incorporating these edges, the graph $G^{\prime}$ should remain acyclic. According to~\cite[Theorem 9-17]{deo2017graph}, this requirement is equivalent to the adjacency matrix of $G'$ being nilpotent.
\end{enumerate}
Finally, since the optimal value $\alpha$ gives the maximum number of terminal-root pairs in the modified DAG, the remaining number of roots is $n-\alpha$, which is the width of the compiled circuit. This completes the proof.
\end{proof}

\subsection{Prop. \ref{prop:time complexity of mrv algorithm}}
\begin{proof}
The MRV heuristic algorithm comprises three key steps. Initially, Algorithm~\ref{alg:Converting static quantum circuit to DAG} and Algorithm~\ref{alg:reducibility through Boolean matrix} are applied to derive the DAG and the biadjacency matrix of the simplified DAG of the input static circuit and calculate the candidate matrix, which exhibits a time complexity of $O(m) + O(mn) + O(n^2)= O(mn)$. Subsequently, our MRV heuristic is applied on the candidate matrix of size $n \times n$ to explore a feasible assignment. Since the maximum number of added edges is $n-1$, the `while' loop iterates at most $n-1$ times. Within the loop, several operations are adapted to identify a candidate edge, where the row summation of the candidate matrix takes $O(n^2)$ time, the candidate terminal identification takes $O(n)$ times, while the candidate root identification involves traversing all candidate roots and summing of the corresponding columns, resulting in a worst-case time complexity of $O(n^2)$. Following this, the candidate matrix needs to be updated. Given that sets $R$ and $T$ contain a maximum of $n-1$ elements, this encompasses the update of at most $(n-1)^2$ entries, which can be completed in $O(n^2)$ time. Consequently, the time complexity of the entire `while' loop is $O(n^3) + O(n^2) + O(n^3) + O(n^3) = O(n^3)$. Upon obtaining a list of added edges, Algorithm \ref{alg:modified DAG to dynamic circuit} is employed to compile the input static circuit, which requires $O(mn)$ time. In summary, the total time complexity of the MRV heuristic algorithm is calculated as: $O(mn) + O(n^3) + O(mn) = O(mn + n^3)$.
\end{proof}

\subsection{Prop. \ref{prop:time complexity of greedy algorithm}}
\begin{proof}
Similar to the MRV heuristic algorithm, both the initial step to get the candidate matrix and the final step to compile the input static circuit has a time complexity of $O(mn)$, while the primary `while' loop iterates at most $n-1$ times. However, the main difference between the greedy heuristic and the MRV heuristic lies in the candidate edge identification process. In the worst-case scenario, where the biadjacency matrix is an identity matrix and the candidate matrix has $(n^2 - n)$ non-zero entries. Each of these entries triggers the update of the candidate matrix and the summation of a matrix's elements, which both exhibit a time complexity of $O(n^2)$. As a result, the total time complexity of the scoring step becomes $O(n^4)$. Subsequently, both the identification of the entry with maximum scoring and the corresponding candidate matrix update can be completed within $O(n^2)$ time. In summary, the total time complexity of the greedy heuristic algorithm \ref{alg:greedy} is evaluated as $O(mn) + O(n^5) + O(n^3)$, with the dominating term being $O(mn) + O(n^5)$.
\end{proof}

\section{Optimal compilations for important quantum circuits}
\label{app: Optimal compilations for important quantum circuits}

\subsection{Frequently used quantum algorithms}

\subsubsection{Deutsch-Jozsa algorithm}
The Deutsch-Jozsa algorithm~\cite{deutsch1992rapid} was the pioneering example of a quantum algorithm that outperforms classical counterparts. It exemplifies the advantages of using quantum computing for specific problems. The algorithm aims to determine whether a Boolean function $f: \{0,1\}^n \to \{0,1\}$ is balanced or constant. This algorithm is represented by the following $(n+1)$-qubit quantum circuit. The initial state has the first $n$ qubits set to $\ket{0}$, and the last qubit initialized as $\ket{1} = X \ket{0}$. A Hadamard gate is subsequently applied to each qubit. Following this, a quantum oracle $U_f$ maps $\ket{x}\ket{y}$ to $\ket{x} \ket{y \oplus f(x)}$. Finally, Hadamard gates are reapplied to the first $n$ qubits, and all qubits are measured in the computational basis.

\begin{figure}[H]
    \centering
    \includegraphics[scale=0.8]{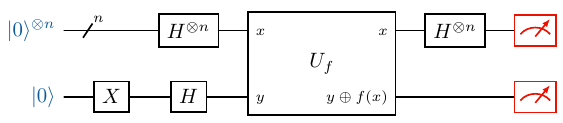}
    \caption{(Color online) The Deutsch-Jozsa algorithm for determing an $n$-bit Boolean function.}
    \label{fig:Deutsch-Jozsa algorithm}
\end{figure}

If the Boolean function $f$ is constant, the quantum oracle $U_f$ can be implemented using only single-qubit gates, making it trivially reducible to a quantum circuit with $1$ qubit. In the case where the Boolean function $f$ is balanced, multiple quantum circuit implementations are possible. For instance, the quantum oracle $U_f$ can be realized using a quantum circuit depicted in Figure~\ref{fig:Deutsch-Jozsa algorithm oracle}, that is, regardless of the single-qubit gates, applying a CNOT gate for each of the first $n$ qubits, with the $(n+1)$-th qubit as the target.

\begin{figure}[H]
    \centering
    \includegraphics[scale=0.8]{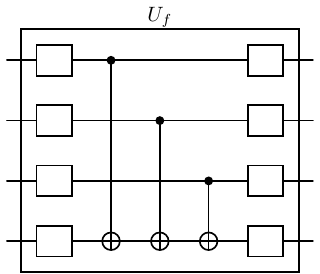}
    \caption{A quantum circuit implementation of a balanced oracle for determining a $3$-bit Boolean function. The empty box represents a single-qubit gate.}
    \label{fig:Deutsch-Jozsa algorithm oracle}
\end{figure}

The following result gives an optimal compilation of the Deutsch-Jozsa algorithm.

\begin{proposition}[Deutsch-Jozsa]\label{prop:Deutsch-Jozsa}
The quantum circuit of the Deutsch-Jozsa algorithm for determining an $n$-bit Boolean function $f$  contains $n+1$ qubits ($n \geq 2$). This circuit is always reducible. If $f$ is constant, the quantum circuit can be reduced to a dynamic quantum circuit with $1$ qubit. Otherwise, it can be reduced to a dynamic quantum circuit with $2$ qubits. In this case, the corresponding optimal solution for optimization~\eqref{eq: exact optimization} can be taken at $\sum_{i=0}^{n-2} E^{n+1}_{i,i+1}$.
\end{proposition}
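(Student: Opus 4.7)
My plan is to invoke Proposition~\ref{prop:compilation via BIP} and treat the two cases of the proposition separately. For constant $f$, the canonical oracle implementation $U_f \in \{I, I \otimes X\}$ contains only single-qubit gates, so by Remark~\ref{rem:ignore single-qubit gates} the simplified DAG reduces to $n+1$ isolated root-terminal pairs, giving $B = I_{n+1}$ and candidate matrix $\bar{B} = J_{n+1} - I_{n+1}$. Taking $F = \sum_{i=0}^{n-1} E^{n+1}_{i,i+1}$ produces an ascending chain $r_0 \to t_0 \to r_1 \to \cdots \to r_n \to t_n$ in the modified DAG, which is trivially acyclic and hence feasible. This realizes $\alpha = n$ and compiled width $1$.

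For balanced $f$, I would fix the implementation of $U_f$ depicted in Figure~\ref{fig:Deutsch-Jozsa algorithm oracle}, namely the sequence $\CNOT(q_i, q_n)$ for $i = 0, 1, \ldots, n-1$, with all single-qubit gates discarded. Let $C_i$ denote the vertex for $\CNOT(q_i, q_n)$. On the ``bus'' qubit $q_n$ the DAG contains the chain $r_n \to C_0 \to C_1 \to \cdots \to C_{n-1} \to t_n$, and on each other qubit $q_i$ it contains the short path $r_i \to C_i \to t_i$. From this one directly reads off that $r_n$ reaches every terminal, and that $r_i$ (with $i<n$) reaches $t_j$ exactly when $j \geq i$, since any such path must enter the bus chain at $C_i$ and can only move forward on it. Hence
\begin{equation}
    B_{i,j} = 1 \iff i = n \ \text{or}\ j \geq i,
\end{equation}
and consequently $\bar{B}_{i,j} = 1$ iff $i < j \leq n-1$. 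In particular $\bar{B}$ is strictly upper triangular on the top-left $n \times n$ block, with row $n$, column $n$, and column $0$ entirely zero.

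Next I would verify that $F^\star := \sum_{i=0}^{n-2} E^{n+1}_{i,i+1}$ is feasible for the integer program. Conditions~\eqref{eq: submatrix condition}--\eqref{eq: col sum condition} are immediate from $i+1 \leq n-1$ and the fact that each row and column of $F^\star$ contains at most one $1$. For the nilpotency condition~\eqref{eq: nilpotent condition}, any walk on the root/terminal bipartition in the modified DAG alternates an original edge $r_j \to t_k$ (which requires $k \geq j$ whenever $j < n$) with an added edge $t_i \to r_{i+1}$; composing the two yields $r_j \to t_k \to r_{k+1}$ with $k+1 > j$, while $r_n$ has no outgoing added edge. Thus the root index strictly increases along any such walk, ruling out cycles, and the block matrix in~\eqref{eq: nilpotent condition} is nilpotent.

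Finally, optimality follows from the sparsity of $\bar{B}$: rows $n-1$ and $n$ of $\bar{B}$ vanish identically, so condition~\eqref{eq: submatrix condition} forces the corresponding rows of any feasible $F$ to vanish as well, yielding $\alpha \leq n-1$. Combined with the value $n-1$ attained by $F^\star$, we conclude $\alpha = n-1$ and the compiled width is $(n+1) - (n-1) = 2$. The main delicate step is the reachability analysis that produces $B$ together with the acyclicity verification for $F^\star$; once those are in place, the matching upper bound is a one-line consequence of the zero rows in $\bar{B}$.
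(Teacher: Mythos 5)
Your proposal is correct and follows essentially the same route as the paper's proof: compute the biadjacency matrix $B$ for the oracle of Figure~\ref{fig:Deutsch-Jozsa algorithm oracle}, read off the candidate matrix $\bar{B} = {}^\neg(B^{\top})$, exhibit $F^\star = \sum_{i=0}^{n-2} E^{n+1}_{i,i+1}$, and bound $\alpha \le n-1$ from the vanishing rows of $\bar{B}$; you merely make explicit the reachability and nilpotency checks that the paper leaves as ``easily verified.'' One small phrasing slip: in the acyclicity argument you should say that no \emph{added} edge points \emph{into} $r_n$ (so a cycle cannot pass through $r_n$), rather than that $r_n$ has no outgoing added edge — the intended monotonicity argument is otherwise sound.
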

\begin{proof}
If the function is constant, the quantum circuit only contains single-qubit gate. So it can be trivially reduced to a quantum circuit with $1$. Now we prove the other case.
For the quantum circuit implementation in Figure~\ref{fig:Deutsch-Jozsa algorithm oracle}, the biadjacency matrix of the simplified DAG is given by
\begin{align}
    B = \sum_{i=0}^{n} \;\;\sum_{j= i}^{n} E^{n+1}_{i,j} + \sum_{i=0}^{n-1} E^{n+1}_{n,i}.
\end{align}
So the candidate matrix is
\begin{align}
    C = \sum_{i=0}^{n-2} \;\;\sum_{j=i+1}^{n-1} E^{n+1}_{i,j}.
\end{align}
For the optimal compilation, we need to select the maximum number of $1$ elements in such a matrix under the conditions stated in Proposition \ref{prop:compilation via BIP}. Without the nilpotent condition, the maximum number of $1$ elements that simultaneously satisfies conditions~\eqref{eq: submatrix condition},~\eqref{eq: row sum condition} and~\eqref{eq: col sum condition} is $n-1$. Let us choose
\begin{align}
    F = \sum_{i=0}^{n-2} E^{n+1}_{i,i+1},
\end{align} 
with the total sum of $n-1$. Then we can easily check that the adjacency matrix with such a selection is indeed nilpotent, making it an optimal solution in~\eqref{eq: exact optimization}. Finally, by Algorithm~\ref{alg:modified DAG to dynamic circuit}, the compiled circuit has a circuit width $2$.
\end{proof}

\subsubsection{Bernstein-Vazirani algorithm}
The Bernstein-Vazirani algorithm~\cite{bernstein1993quantum} can be considered an extension of the Deutsch-Jozsa algorithm, demonstrating the advantages of using a quantum computer as a computational tool for more complex problems than the Deutsch-Jozsa problem. Instead of distinguishing between two different classes of functions, it is designed to recover a string encoded within a function. Specifically, when provided with an oracle implementing a function $f: \{0,1\}^n \to \{0,1\}$ in which $f(x)$ is promised to be the dot product between $x$ and a secret string $s \in \{0,1\}^n$ modulo $2$, the objective is to determine the value of $s$. The Bernstein-Vazirani algorithm employs the same $(n+1)$-qubit quantum circuit as depicted in Figure~\ref{fig:Deutsch-Jozsa algorithm}. Furthermore, the quantum oracle can be realized by applying a CNOT gate to the corresponding qubit with the last qubit as the target for each bit in the string $s$ that equals one. For example, if the bit string is $s = 101$, the quantum oracle is implemented as

\begin{figure}[H]
    \centering
    \includegraphics[scale=0.8]{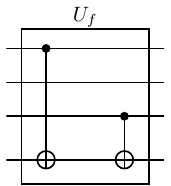}
    \caption{A quantum circuit implementation of an oracle for $f(x) = (x_0 + x_2)\mod 2$.}
    \label{fig:Bernstein-Vazirani algorithm oracle}
\end{figure}

The following result gives an optimal compilation of the Bernstein-Vazirani algorithm.

\begin{proposition}[Bernstein-Vazirani]\label{prop:Bernstein-Vazirani}
The quantum circuit of the Bernstein-Vazirani algorithm for determining an $n$-bit Boolean function $f$ contains $n+1$ qubits ($n \geq 2$). This circuit is always reducible. If $f$ is constant, the quantum circuit can be reduced to a dynamic quantum circuit with $1$ qubit. Otherwise, it can be reduced to a dynamic quantum circuit with $2$ qubits. In this case, the corresponding optimal solution for optimization~\eqref{eq: exact optimization} can be taken at $\sum_{i=0}^{n-2} E^{n+1}_{i,i+1}$.
\end{proposition}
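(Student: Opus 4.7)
The plan is to mirror the proof of Proposition~\ref{prop:Deutsch-Jozsa}, since the Bernstein-Vazirani circuit shares the same $(n+1)$-qubit template but with an oracle determined by the secret string $s$. For the trivial case, if $f$ is constant then $s = 0^n$, the oracle contains no CNOTs, and the circuit consists of only single-qubit gates; by Remark~\ref{rem:ignore single-qubit gates} it reduces to a dynamic circuit on one qubit. The remainder of the proof concerns the non-constant case, where $S := \{i \in \{0, \ldots, n-1\} : s_i = 1\}$ is non-empty.

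The first step is to compute the biadjacency matrix $B$ of the simplified DAG. Fix a canonical circuit realization in which the CNOTs are placed in ascending order of their control index and apply the column-OR update rule from Algorithm~\ref{alg:reducibility through Boolean matrix} (equivalently, Proposition~\ref{prop: biadjacency matrix relation of composite circuit}). A direct trace through this rule gives the explicit description
\begin{equation}
    B_{ij} = 1 \iff i = j,\ \text{or}\ (i \in S,\, j = n),\ \text{or}\ (i = n,\, j \in S),\ \text{or}\ (i, j \in S \text{ with } i < j),
\end{equation}
with all other entries vanishing. This specializes to the Deutsch-Jozsa form when $S = \{0, \ldots, n-1\}$. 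From this description one verifies that $\bar{B}_{i,i+1} = 1$ for every $i \in \{0, \ldots, n-2\}$: each of the four scenarios that would force $B_{i+1, i} = 1$ requires either $i = n-1$, $i = n$, or an impossible $S$-ordering, none of which occur when $i \leq n-2$.

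With the candidate matrix in hand, set $F = \sum_{i=0}^{n-2} E^{n+1}_{i,i+1}$. The submatrix condition~\eqref{eq: submatrix condition} follows from the previous step, and the row/column sum constraints~\eqref{eq: row sum condition}--\eqref{eq: col sum condition} hold trivially since each nonzero entry of $F$ occupies a distinct row and column. For the nilpotency condition~\eqref{eq: nilpotent condition}, it suffices to show that $FB$ is nilpotent. The shift form of $F$ yields $(FB)_{ij} = B_{i+1, j}$ for $i \leq n-2$ and zero otherwise; combined with the explicit form of $B$, every nonzero entry of $FB$ then satisfies $j > i$ (one checks this case-by-case against the description of $B$, noting that $i+1 = n$ is ruled out by $i \leq n-2$). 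Hence $FB$ is strictly upper-triangular and therefore nilpotent, establishing feasibility with $\alpha \geq n - 1$ and compiled width at most $2$.

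Optimality (compiled width at least $2$) follows because the non-constant oracle contains at least one CNOT, and any two-qubit gate forces its two operands to live on distinct qubits in the compiled circuit. The main obstacle I anticipate is producing the structural description of $B$ in a case-free manner that is robust to arbitrary choices of $S$; once it is available, both the submatrix check and the strict upper-triangularity of $FB$ follow mechanically, keeping the proof closely parallel to that of Proposition~\ref{prop:Deutsch-Jozsa}.
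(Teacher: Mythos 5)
Your proof is correct, but it takes a different route from the paper's. The paper's proof of Proposition~\ref{prop:Bernstein-Vazirani} is a two-line reduction: it observes that for any nonzero secret string the oracle is a subcircuit of the all-ones oracle, invokes a ``without loss of generality'' step to pass to the all-ones case, and then notes that this case coincides (up to single-qubit gates) with the Deutsch--Jozsa circuit, so the computation in Proposition~\ref{prop:Deutsch-Jozsa} applies verbatim. You instead work with an arbitrary support $S$ of the secret string throughout: you derive the explicit biadjacency matrix for general $S$, verify the submatrix, row/column-sum, and nilpotency conditions of Proposition~\ref{prop:compilation via BIP} directly for $F=\sum_{i=0}^{n-2}E^{n+1}_{i,i+1}$ (via strict upper-triangularity of $FB$), and close with the elementary observation that a single CNOT already forces width at least $2$. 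What the paper's route buys is brevity; what yours buys is self-containedness --- the paper's WLOG implicitly relies on the fact that a feasible $F$ for a circuit remains feasible for any subcircuit (monotonicity of the nilpotency constraint under entrywise domination of $B$), which is true but never spelled out, and the optimality argument inherited from the Deutsch--Jozsa proof (maximum matching in the all-ones candidate matrix) does not literally transfer to the candidate matrix of a general $S$, whereas your two-qubit-gate lower bound covers every case uniformly. Both proofs are valid; yours is longer but closes the small gaps the paper leaves to the reader.
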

\begin{proof}
If the secret string is all zero, then the function is constant and the quantum oracle contains no multi-qubit gates. This makes the circuit trivially reducible to one qubit. Otherwise, the secret string contains at least one non-zero element. So the quantum oracle has at least one CNOT gate. Without loss of generality, we can consider the case in which the secret string is all one. This is because the quantum circuit for all the other cases is a subcircuit of this one. In this case, the quantum circuit structure is the same as the Deutsch-Jozsa algorithm when omitting the single-qubit gates. So the rest of proof follows exactly the same as the proof of Proposition~\ref{prop:Deutsch-Jozsa}.
\end{proof}

\subsubsection{Simon's algorithm}
Simon's algorithm~\cite{simon1997power} marked a significant milestone as the first quantum algorithm to exhibit an exponential speed-up when compared to the best classical algorithm for a specific problem. This breakthrough served as a foundational inspiration for a family of quantum algorithms centered around the quantum Fourier transform, including the renowned Shor's factoring algorithm. In the problem Simon's algorithm addresses, we are provided with an oracle that is guaranteed to have either a one-to-one mapping (it maps a unique output to every input) or a two-to-one mapping (it maps two different inputs to a single unique output). The nature of this two-to-one mapping is determined by a secret bitstring $s$, where the function $f(x)$ equals $f(y)$ if and only if $y = x \oplus s$. The primary objectives are twofold: first, to decide whether the function $f$ is one-to-one or two-to-one; and second, in the event it is determined to be two-to-one, to unveil the secret bitstring $s$. The quantum circuit of Simon's algorithm is given by

\begin{figure}[H]
    \centering
    \includegraphics[scale=0.8]{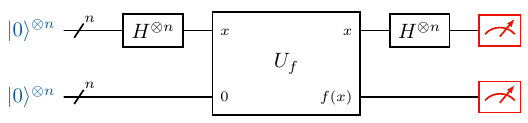}
    \caption{(Color online) The Simon's algorithm for determining an $n$-bit Boolean function.}
    \label{fig:Simon's algorithm}
\end{figure}

There are many possible ways of implementing a desired two-to-one function. Here we can consider a specific choice that
\begin{align}
    f(x) = \begin{cases}
        x, & x_j = 0,\\
        x \oplus s, & x_j = 1,
    \end{cases}
\end{align}
where $x_j$ is the $j$-th bit of $x$ and $j$ is the smallest index where the bit of $s$ equals to one. For example, if $s = 011$, then $j = 1$. In this case, the quantum oracle $U_f$ can be implemented by first performing a CNOT gate between $i$-th and $n+i$-th qubits for every $i \in \{0,1,\cdots, n-1\}$ and then performing a CNOT gate on the $j$-th qubit and the $n+k$-th qubit whenever the $k$-th bit of $s$ equals to one. For instance, given secret string $s = 011$, the quantum oracle $U_f$ can be implemented by

\begin{figure}[H]
    \centering
    \includegraphics[scale=0.8]{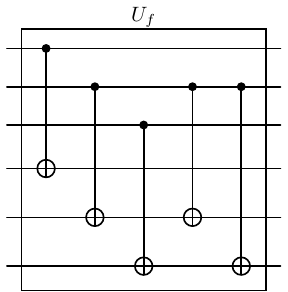}
    \caption{A quantum circuit implementation of an oracle for secret string $s = 011$.}
    \label{fig:Simon's algorithm oracle}
\end{figure}

\begin{proposition}[Simon]\label{prop:Simon}
The quantum circuit of the Simon's algorithm for determining an $n$-bit function $f$ contains $2n$ qubits ($n \geq 2$). This circuit is always reducible. It can be reduced to a dynamic quantum circuit with $3$ qubits. In this case, the corresponding optimal solution for optimization~\eqref{eq: exact optimization} can be taken at $E_{0,0}^2 \otimes \sum_{i=1}^{n-2} E_{i,i+1}^n + E_{1,1}^2 \otimes \sum_{i=0}^{n-2} E^{n}_{i,i+1}$.
\end{proposition}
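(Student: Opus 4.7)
My plan is to focus on the worst-case oracle $s = 1^n$ (so that the pivot index $j=0$ and Step~2 of Figure~\ref{fig:Simon's algorithm oracle} contributes the maximum number of CNOTs); for any other secret string the oracle is a subcircuit of this one, and by the entrywise monotonicity of the biadjacency matrix the compiled width can only decrease, so it suffices to handle this extremal case for the upper bound. I would then cast the problem in the matrix formulation of Proposition~\ref{prop:compilation via BIP}: compute the biadjacency matrix $B$ of the simplified DAG by composing single-CNOT biadjacency matrices via Proposition~\ref{prop: biadjacency matrix relation of composite circuit}, verify that the proposed $F$ satisfies the constraints of the binary integer program, and finally match it with a lower bound via mutual reachability.

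For the upper bound I would tabulate the reachability of each root through the oracle. The expected pattern is: $r_0$ reaches every terminal (since qubit $0$ controls every Step~2 CNOT); $r_n$ reaches $\{t_0, t_n, t_{n+1}, \ldots, t_{2n-1}\}$; and for each $k \in \{1, \ldots, n-1\}$ both $r_k$ and $r_{n+k}$ reach exactly $\{t_0, t_k, t_{n+k}, t_{n+k+1}, \ldots, t_{2n-1}\}$. Reading $\bar B = \neg(B^\top)$ off this table, the two summands of the proposed $F$ correspond respectively to chaining the input qubits $1, \ldots, n-1$ and the ancilla qubits $n, \ldots, 2n-1$; each edge $(t_i, r_{i+1})$ or $(t_{n+i}, r_{n+i+1})$ lies inside $\bar B$, and the row and column sum constraints hold by inspection of the two disjoint staircases. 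The total count is $(n-2)+(n-1) = 2n-3$, giving compiled width $2n - (2n-3) = 3$.

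The delicate technical point is the nilpotency condition~\eqref{eq: nilpotent condition}. I would handle it by tracking indices along any hypothetical cycle in the modified DAG: indices strictly increase within the input chain and within the ancilla chain, and any cross-chain excursion through $r_k \to t_{n+j}$ (which forces $j \geq k$) or through $r_{n+j} \to t_\ell$ (which restricts $\ell$ to $\{0, j\}$, with $t_0$ being a sink among added edges and $\ell = j$ landing strictly later in the ancilla chain) preserves the forward direction, so no cycle can close.

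The main obstacle is the matching lower bound, i.e., showing that no feasible $F$ has more than $2n - 3$ entries. The plan is to invoke Proposition~\ref{prop:reducibility from qubit reachability}: mutually reachable qubits must occupy distinct physical qubits in any valid compilation. Two structural facts about the worst-case oracle then suffice: (i) $r_0$ reaches every terminal and every root reaches $t_0$, so qubit $0$ is mutually reachable with every other qubit and is forced onto an isolated physical qubit; and (ii) the Step~1 CNOT $(k, n+k)$ alone makes qubits $k$ and $n+k$ mutually reachable for every $k \in \{1, \ldots, n-1\}$, prohibiting them from sharing a physical qubit. Consequently, the remaining $2n - 1$ logical qubits need at least two further physical qubits to resolve the $n-1$ forbidden pairs, yielding the total lower bound of $3$. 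The subtle concern is that this purely combinatorial reasoning ignores possible cross-chain acyclicity obstructions that could in principle inflate the bound; fortunately, the explicit three-chain compilation achieved by the proposed $F$ already witnesses realisability, closing the argument.
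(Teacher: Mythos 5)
Your upper-bound argument follows the paper's proof almost verbatim: restrict to the all-one secret string, compute the biadjacency matrix by composing single-CNOT factors via Proposition~\ref{prop: biadjacency matrix relation of composite circuit}, exhibit the stated $F$ with $2n-3$ entries, and check nilpotency; this half is sound, and your index-tracking justification of acyclicity is more explicit than the paper's ``easily check''. However, your reachability table contains an error that propagates into the lower bound: $r_0$ does \emph{not} reach every terminal. Qubit $0$ first touches the wire of $q_{n+j}$ ($j\ge 1$) only in Step~2, which occurs after the unique gate $\mathrm{CNOT}(j,n+j)$ acting on $q_j$, so $q_0\not\to q_j$ for $j\in\{1,\dots,n-1\}$; the correct row is $r_0\to\{t_0,t_n,\dots,t_{2n-1}\}$, in agreement with the paper's explicit formula for $B$. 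Consequently $q_0$ is mutually reachable only with the ancilla qubits, and your claim~(i) collapses. (This does not invalidate the exhibited $F$, whose added edges avoid $r_0$, $r_1$ and $r_n$.)

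The deeper problem is that the lower bound cannot be obtained from pairwise mutual reachability at all. With the corrected data, the conflict graph on the $2n$ logical qubits consists of the star $\{(q_0,q_{n+k})\}_{k}$ together with the matching $\{(q_k,q_{n+k})\}_{k\ge 1}$; this graph is triangle-free (indeed bipartite, with the input register on one side and the ancilla register on the other), so the principle ``mutually reachable qubits must occupy distinct physical qubits'' can certify at most $2$ physical qubits, never $3$. The genuine obstruction to a two-qubit compilation is the acyclicity constraint acting on reuse \emph{chains} of length at least two --- precisely the ``cross-chain acyclicity obstructions'' you set aside. Concretely, for $n=2$ the candidate matrix has exactly four nonzero entries $(t_1,r_0),(t_1,r_2),(t_2,r_1),(t_2,r_3)$, and every admissible two-edge selection closes a directed $4$-cycle through $B$, so at most one edge can be added and the width is $3$; the paper establishes the lower bound exactly this way (exhaustive enumeration at $n=2$, then lifting to general $n$ via the subcircuit relation). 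Your closing remark that the explicit three-chain compilation ``witnesses realisability, closing the argument'' is a non sequitur: exhibiting a feasible $3$-qubit solution is the upper bound and cannot substitute for the missing proof that $2$ qubits are impossible.
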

\begin{proof}
Note that any quantum circuit is a subcircuit of the one given by a secret string with all-one values. So we can restrict our consideration in the later case, where a CNOT is applied between $i$-th and $i+n$-th qubits for every $i\in \{0,1,\cdots, n-1\}$ and a CNOT is applied between $0$-th and $i+n$-th qubits for every $i\in \{0,1,\cdots, n-1\}$. In this case, the biadjacency matrix of the simplified DAG is given by
\begin{align}
    B = (E_{0,0}^2 + E_{1,0}^2) \otimes \left(I_n + \sum_{i=1}^{n-1} E^n_{i,0}\right) + (E_{0,1}^2 + E_{1,1}^2) \otimes \sum_{i=0}^{n-1} \sum_{j=i}^{n-1} E^n_{i,j}.
\end{align}
So the candidate matrix is
\begin{align}
    C = (E_{0,0}^2 + E_{0,1}^2) \otimes \left( \sum_{i=1}^{n-2} \sum_{j=i+1}^{n-1} E^n_{i,j} +  \sum_{i=1}^{n-1} \sum_{j=0}^{i-1} E^n_{i,j} \right) + (E_{1,0}^2 + E_{1,1}^2) \otimes \sum_{i=0}^{n-2} \sum_{j=i+1}^{n-1} E^n_{i,j}.
\end{align}
Without the nilpotent condition, we can choose
\begin{align}\label{eq: feasible solution simon}
    F = E_{0,0}^2 \otimes \sum_{i=1}^{n-2} E_{i,i+1}^n + E_{1,1}^2 \otimes \sum_{i=0}^{n-2} E^{n}_{i,i+1},
\end{align} 
with the total sum of $2n-3$. Then we can easily check that the adjacency matrix with such a selection is indeed nilpotent, making it a feasible solution in~\eqref{eq: exact optimization}. Finally, by Algorithm~\ref{alg:modified DAG to dynamic circuit}, the compiled circuit has circuit width $3$. Note that if we take $n = 2$, the quantum circuit is a subcircuit for any quantum circuit with larger $n$. In the case of $n = 2$, we can easily enumerate all possible compilation schemes and conclude that it requires at least $3$ qubits in the compiled quantum circuit. This implies that any larger circuit will also require at least $3$ qubits and concludes the optimality of our feasible solution in~\eqref{eq: feasible solution simon}.
\end{proof}

\subsubsection{Quantum Fourier transform}\label{sec: Quantum Fourier transform}
The Fourier transform is a fundamental concept in classical computing with various applications, including signal processing, data compression, and complexity theory. In quantum computing, the quantum Fourier transform (QFT) serves as the quantum analog of the discrete Fourier transform and operates on the amplitudes of a quantum wavefunction. The QFT plays a pivotal role in numerous quantum algorithms, with its most notable appearances in Shor's factoring algorithm and quantum phase estimation. The quantum Fourier transform acts on a quantum state $\ket{x} = \sum_{j=0}^{2^n-1} x_j \ket{j}$ and maps it to another quantum state $\ket{y} = \sum_{k=0}^{2^n-1} y_k \ket{k}$ where $y_k = \frac{1}{\sqrt{2^n}} \sum_{j=0}^{2^n-1} x_j e^{2\pi i jk/2^n}$. Let $R_k = \ket{0}\bra{0} + e^{2\pi i/2^k} \ket{1}\bra{1}$. Then the quantum Fourier transform can be implemented by the following quantum circuit. For every $j\in \{0,1,\cdots,n-1\}$, apply a Hadamard gate on the $j$-th qubit and then apply a controlled-$R_k$ gate between the $j$-th qubit and the $(j+k-1)$-th qubit for every $k\in \{2,\cdots, n-j\}$. Finally, apply SWAP gates to reverse the order of the qubits. For instance, the quantum circuit for $n = 4$ is illustrated below.

\begin{figure}[!htb]
    \centering
    \includegraphics[scale=0.8]{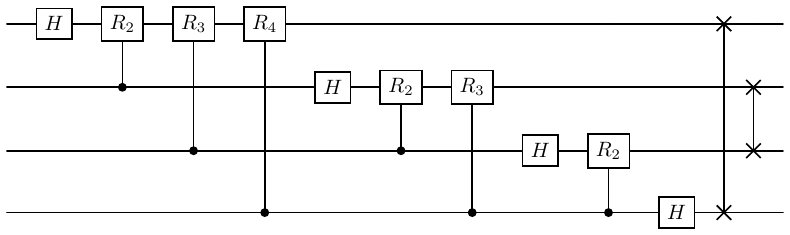}
    \caption{A quantum circuit implementation of the quantum Fourier transform for $n = 4$.}
    \label{fig:Quantum Fourier transform}
\end{figure}

Since there is a two-qubit gate on any two qubits in the quantum circuit of the quantum Fourier transform, this quantum circuit is clearly irreducible by Proposition~\ref{prop:reducibility from qubit reachability}.

\subsubsection{Quantum phase estimation}\label{sec: Quantum phase estimation}
The Fourier transform plays a crucial role in a general procedure known as phase estimation, which serves as a fundamental component in many quantum algorithms. Consider a unitary operator $U$ that possesses an eigenvector $\ket{u}$ with an associated eigenvalue of $e^{2\pi i \theta}$, where the precise value of $\theta$ remains unknown. The objective of the phase estimation algorithm is to estimate the value of $\theta$. To conduct this estimation, we assume the availability of oracles capable of preparing the state $\ket{u}$ and performing controlled-$U^{2^j}$ operations, with $j$ being a non-negative integer. The quantum phase estimation procedure involves two registers. The first register comprises $t$ qubits, initially set to $\ket{0}$. The choice of the value of $t$ depends on two factors: the level of precision required for estimating $\theta$, and the desired probability of a successful phase estimation procedure. The second register begins in the state $\ket{u}$ and accommodates a number of qubits sufficient to store $\ket{u}$. Phase estimation unfolds in two stages. First, the circuit commences by applying a Hadamard transform to the first register, followed by a series of controlled-$U$ operations on the second register, each involving $U$ raised to successive powers of two. The second stage of phase estimation involves the application of the inverse quantum Fourier transform. An example is provided in Figure~\ref{fig:Quantum phase estimation}.

\begin{figure}[H]
    \centering
    \includegraphics[scale=0.8]{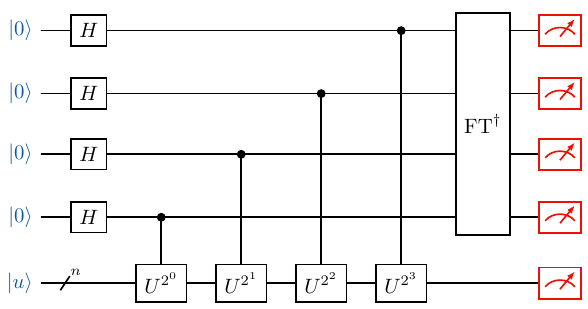}
    \caption{(Color online) A quantum circuit implementation of the quantum phase estimation for $t = 4$.}
    \label{fig:Quantum phase estimation}
\end{figure}

The original quantum circuit for quantum phase estimation necessitates the utilization of $t+n$ qubits. It is important to observe that the reducibility of a quantum circuit is equivalent to the reducibility of its dual circuit, which is essentially the circuit with a reversed gate sequence. In the context of the phase estimation algorithm, the dual circuit is structured as follows: it begins with the quantum Fourier transform applied to the first $t$ qubits, followed by a sequence of controlled-$U$ operations conducted on each qubit within the initial $t$ qubits and the subsequent $n$ qubits. The quantum Fourier transform gives the biadjacency matrix as
\begin{align}
B_{-1} = \begin{pmatrix}
    J_t & O \\
    O & I_n
\end{pmatrix}
\end{align}
For every controlled operation acting on the $k$-th qubit and the last $n$ qubits, the biadjacency matrix is given by
\begin{align}
    B_k = I_{t+n} + \sum_{\substack{i\neq j\\ i,j \in \{k, t,t+1,\cdots,t+n-1\}}} (E_{i,j}^{t+n} + E_{j,i}^{t+n}).
\end{align}
According to Proposition~\ref{prop: biadjacency matrix relation of composite circuit},  the biadjacency matrix for the dual circuit of quantum phase estimation is given by
\begin{align}
    B_{-1}\bigodot \left(\bigodot_{k=0}^{t-1} B_k\right),
\end{align}
which is an all-one matrix $J_{t+n}$. So the quantum circuit for quantum phase estimation is irreducible.

\subsubsection{Shor's algorithm}\label{sec: Shor's algorithm}
Shor's algorithm~\cite{shor1994algorithms} is renowned for its ability to factor integers in polynomial time. This algorithm holds particular significance because the best-known classical algorithm for factoring requires more than polynomial time, and RSA (Rivest–Shamir–Adleman), the widely-used cryptographic protocol, relies on the assumption that factoring large integers is computationally infeasible. The factoring problem is equivalent to order-finding problem, that is, given positive integers $x$ and $N$ with $x <N$ with no common factors, find the smallest positive integer $r$ such that $x^r = 1 \mod N$. The quantum algorithm for order-finding is just the phase estimation algorithm applied to the unitary operator $U\ket{y} = \ket{xy \mod N}$. Since the quantum circuit for quantum phase estimation is irreducible in general, the quantum circuits for order-finding and consequently Shor's algorithm are also irreducible. 

\subsubsection{Grover's algorithm}\label{sec: Grover's algorithm}
Grover's algorithm~\cite{grover1996fast}, also known as the quantum search algorithm, refers to a quantum algorithm for unstructured search that finds with high probability the unique input to a black box function that produces a particular output value, using just $O({\sqrt  {N}})$ evaluations of the function, where $N$ is the size of the function's domain.
Grover's algorithm consists of three main algorithms steps as illustrated in Figure~\ref{fig:Grover's algorithm}: state preparation, the oracle $U_f$, and the diffusion operator $U_s$. The first step, state preparation, involves creating the search space, encompassing all possible cases where the solution might exist. The oracle serves the crucial role of marking the correct answers we seek within the search space. It identifies the desired solutions, making them distinguishable for measurement in the final steps of the algorithm. The oracle and diffusion operator are iteratively applied in the quantum circuit, amplifying the presence of the marked solutions until they stand out for measurement.

\begin{figure}[!htb]
    \centering
    \includegraphics[scale=0.8]{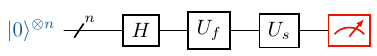}
    \caption{(Color online) The Grover's algorithm.}
    \label{fig:Grover's algorithm}
\end{figure}

In particular, the diffusion operator can be implemented by a multi-level controlled-CNOT gate wrapped by single-qubit gates on both sides. An illustration of the diffusion operator with $n = 4$ qubits is given below. Since the quantum circuit involves an $n$-qubit gate, it is clearly irreducible by Proposition~\ref{prop: biadjacency matrix relation of composite circuit}.

\begin{figure}[!htb]
    \centering
    \includegraphics[scale=0.8]{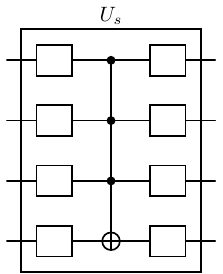}
    \caption{A quantum circuit implementation of the diffusion operator for $4$-qubit Grover's algorithm. Each empty box represents a single-qubit gate.}
    \label{fig:Grover's algorithm diffusion operator}
\end{figure}

\subsubsection{Quantum counting algorithm}
\label{sec: Quantum counting}

Quantum counting algorithm~\cite{brassard1998quantum} is a quantum algorithm designed to efficiently determine the number of solutions to a given search problem. While Grover's algorithm focuses on finding a specific solution within the oracle, the Quantum counting algorithm provides us with the ability to count how many solutions are present. The quantum circuit underlying the quantum counting algorithm essentially performs quantum phase estimation on the Grover iterator $U_sU_f$. As discussed in Section~\ref{sec: Quantum phase estimation}, the quantum circuit for quantum counting is inherently irreducible due to the irreducibility of quantum phase estimation.

\subsubsection{Quantum Ripple Carry Adder}
Here we focus on the Quantum Ripple Carry Adders initially proposed in~\cite{chakrabarti2008designing}. These circuits are composed of $3k + 1$ qubits where the additional qubit is used to store the carry bit and rely on CNOT and Toffoli gates as foundational components. The implementation of a 2-bit quantum adder circuit is illustrated in Figure~\ref{fig:2-bit quantum adder implementation}.

\begin{figure}[H]
    \centering
    \includegraphics[scale=0.8]{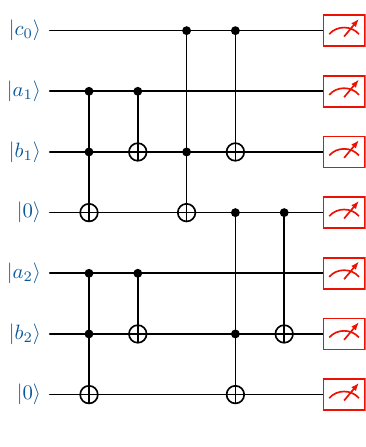}
    \vspace{-3mm}
    \caption{(Color online) The implementation of a 2-bit quantum ripple carry adder circuit to calculate $a_2 a_1 + b_2 b_1$.}
    \label{fig:2-bit quantum adder implementation}
\end{figure}

The optimal compilation of quantum ripple carry adder with $n = 3k+1$ qubits is given in the following proposition.

\begin{proposition}[Quantum ripple carry adder.]\label{prop:adder}
A quantum ripple carry adder circuit with $n$ qubits is always reducible. The corresponding optimal solution for optimization~\eqref{eq: exact optimization} can be taken at $\sum_{i=0}^{n-5} E_{i, i+4}^n$ for general cases and $E_{1, 0}$ for $n=4$. Consequently, the circuit can be compiled to an equivalent dynamic circuit with $4$ qubits for $n \geq 5$ and $3$ qubits for $n=4$. 
\end{proposition}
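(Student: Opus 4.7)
The plan is to apply the binary integer programming characterization of Proposition~\ref{prop:compilation via BIP}, deriving the biadjacency matrix $B$ of the simplified DAG of the $n$-qubit adder and verifying that the proposed $F$ is both feasible and optimal. To obtain $B$ for general $k \geq 2$, I would use Proposition~\ref{prop: biadjacency matrix relation of composite circuit} to express $B$ as the Boolean product of the elementary biadjacency matrices of the constituent CNOT and Toffoli gates in the order displayed for $k = 2$ in Figure~\ref{fig:2-bit quantum adder implementation}. Applying the column update rule $A \odot B_{ij} = (\ldots, a_i \lor a_j, \ldots, a_j \lor a_i, \ldots)$ gate by gate, and exploiting the fact that the carry chain propagates strictly forward through a local window of three consecutive qubits, I expect to establish the band characterization $B_{i,j} = 0$ whenever $i > j + 3$, equivalently $\bar B_{i,j} = 1$ only on the region $j \geq i + 4$, with equality attained on this region so that the shifted diagonal used in the proposition sits inside $\bar B$.

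With the band structure in hand, feasibility of $F = \sum_{i=0}^{n-5} E^n_{i, i+4}$ follows immediately: condition~\eqref{eq: submatrix condition} reduces to $\bar B_{i, i+4} = 1$, which is the band condition, while~\eqref{eq: row sum condition} and~\eqref{eq: col sum condition} hold because $F$ is a shifted diagonal. For the nilpotency condition~\eqref{eq: nilpotent condition}, I observe that each added $t \to r$ edge raises the qubit index by exactly $4$, while each original $r \to t$ edge satisfies $i \leq j + 3$ and so lowers the index by at most $3$; starting from any root and alternating edge types in a prospective cycle, the root index strictly grows after each round trip, which forbids cycles and hence ensures nilpotency. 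Optimality of $F$ then follows from a matching argument: constraints~\eqref{eq: row sum condition} and~\eqref{eq: col sum condition} force the selected entries to lie in distinct rows and distinct columns of $\bar B$, but the band structure leaves rows $n-4, \ldots, n-1$ and columns $0, 1, 2, 3$ of $\bar B$ empty, capping the size of any matching at $n-4$. Combined with feasibility, this yields $\alpha = n - 4$ and thus a compiled width of $4$.

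The main obstacle is the combinatorial bookkeeping required to rigorously prove the band identity $B_{i,j} = 0 \iff i > j + 3$ for arbitrary $k$. I plan to handle this by induction on $k$: assuming the band structure for the $k$-bit adder, I would analyze the three extra gates (two CNOTs and one Toffoli, together with a new carry qubit) that are introduced when passing to the $(k+1)$-bit adder, and verify that they extend the band by one row and column without creating additional reachabilities. The base case $n = 4$ falls outside this pattern, since the sum $\sum_{i=0}^{n-5} E^n_{i,i+4}$ is empty, and must be handled separately by direct inspection: the $1$-bit adder is small enough that $B$ and $\bar B$ can be written out explicitly, and a brief enumeration of feasible $F$ satisfying conditions~\eqref{eq: submatrix condition} through~\eqref{eq: nilpotent condition} confirms $F = E^4_{1,0}$ as an optimum, yielding a compiled width of $3$.
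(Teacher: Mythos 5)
Your overall strategy (compute $B$ via Proposition~\ref{prop: biadjacency matrix relation of composite circuit}, check feasibility of the shifted diagonal, bound the optimum by a matching argument) matches the paper's in spirit, and your feasibility/nilpotency argument for $F=\sum_{i=0}^{n-5}E^n_{i,i+4}$ is sound: it only needs the one-sided implication $B_{i,j}=1\Rightarrow i\le j+3$, which does hold. However, the key lemma you rely on --- the band characterization $B_{i,j}=0\iff i>j+3$, equivalently that $\bar B$ is supported exactly on $\{(i,j): j\ge i+4\}$ --- is false in both directions. The adder's reachability is not a clean band: the paper's computation gives
\begin{align}
B = \sum_{i=0}^3 \sum_{j=0}^{n-1} E_{i, j}^n + \sum_{m=1}^{k-1} \sum_{i=3m+1}^{3m+3} \sum_{j=3m}^{n-1} E_{i, j}^n - \sum_{j=0}^{k-1} \sum_{i=0}^{3j} E^n_{i, 3j+1},
\end{align}
so, for instance, $B_{0,1}=0$ (a zero well above your band, coming from the input qubits that only ever act as controls), and rows $i\equiv 1,2 \pmod 3$ already vanish at offsets $j<i-1$ and $j<i-2$ respectively. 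Consequently $\bar B$ contains entries at offsets $+2$ and $+3$ in rows $\equiv 2,1\pmod 3$, and, more importantly, it contains a whole family of entries in the lower-left corner, in particular all of row $n-3$ over columns $0,\dots,n-4$.

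This breaks your optimality argument. Rows $n-3$ and columns $0,1,2,3$ of $\bar B$ are \emph{not} empty, so the matching size under constraints~\eqref{eq: submatrix condition}--\eqref{eq: col sum condition} alone is not capped at $n-4$: the paper exhibits feasible-looking selections $\sum_{i=0}^{n-5}E^n_{i,i+4}+E^n_{n-3,l}$ of size $n-3$ that satisfy all three of those constraints. The bound $\alpha\le n-4$ genuinely requires the nilpotency condition~\eqref{eq: nilpotent condition}, which your optimality step never invokes; the paper closes this by showing that any such augmenting edge $E^n_{n-3,l}$ creates a directed cycle together with $E^n_{n-5,n-1}$ (via $r_{n-1}\to t_{n-3}\to r_l\to t_{n-5}\to r_{n-1}$). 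To repair your proof you would need to (i) replace the band identity with the correct piecewise description of $B$ (your proposed induction on $k$ is a reasonable vehicle for deriving it, but the target statement must change), and (ii) redo the upper bound as an acyclicity argument rather than an empty-row/column count. The $n=4$ base case is fine and agrees with the paper.
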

\begin{proof}
For a quantum ripple carry adder circuit with $4$ qubits, the candidate matrix only has one non-zero entry, therefore the optimal compilation gives $3$ qubits. For a quantum adder circuit with $n = 3k + 1 > 4$ qubits, the biadjacency matrix of the simplified DAG is given by
\begin{align}
    B = \sum_{i=0}^3 \sum_{j=0}^{n-1} E_{i, j}^n + \sum_{m=1}^{k-1} \sum_{i=3m+1}^{3m+3} \sum_{j=3m}^{n-1} E_{i, j}^n - \sum_{j=0}^{k-1} \sum_{i=0}^{3j} E^n_{i, 3j+1}.
\end{align}
The candidate matrix can be further computed as
\begin{align}
    C = \sum_{i=1}^{k} \sum_{j=0}^{n - 3i - 1} E_{n - 3i, j}^n + \sum_{m=0}^{k-2} \sum_{i=3m}^{3m+2} \sum_{j=3m+4}^{n-1} E_{i, j}^n.
\end{align}

We need to select as many as 1 elements within the conditions stated in Proposition~\ref{prop:compilation via BIP}. Without taking into account the nilpotent condition, we can choose
\begin{align}
    F = \sum_{i=0}^{n-5} E_{i, i+4}^n + E_{n-3, l}
\end{align}
where $l \in [0, n-4]$. However, upon closer examination, it becomes evident that each of these candidate edges $E_{n-3, l}$ conflicts with the edge $E_{n-5, n-1}$ due to the nilpotent condition. Therefore, our selection is constrained to
\begin{align}
    F = \sum_{i=0}^{n-5} E_{i, i+4}^n
\end{align}
we can verify that the adjacency matrix with such a selection is indeed nilpotent, making it the optimal solution in~\eqref{eq: exact optimization}. Finally, by Algorithm~\ref{alg:modified DAG to dynamic circuit}, the compiled circuit has circuit width $4$. This can be understood as the quantum ripple carry adder employs Toffoli gates, therefore requiring a minimum of three qubits, while an additional qubit serving as a carry-over between distinct adders.

\end{proof}

\subsection{Ansatzs in quantum machine learning}
\subsubsection{Frequently used entanglement structures}
\label{subsubsec:entanglement reducibility}

Parameterized Quantum Circuits (PQCs) serve as foundational components in quantum machine learning. These circuits usually consist of alternating rotation and entanglement layers. In the rotation layers, single-qubit rotation gates are applied to all qubits. Subsequently, the entanglement layers employ a predefined strategy to create entanglement among the qubits using two-qubit gates. Each of these layers can be repeated multiple times to achieve the desired level of entanglement. Several commonly used entanglement layers include the following~\cite{sim2019expressibility}:

\begin{itemize}
    \item A linearly entangled layer of $n$ qubits means that each qubit $i$ is entangled with the subsequent qubit $i+1$ for $i \in \{0, 1, \cdots, n-2\}$. An example with $4$ qubits is given in Figure~\ref{fig:linearly entangled};
    
    \item A circularly entangled layer of $n$ qubits is a linearly entangled layer with an additional entangling gate between qubit $n-1$ and qubit $0$. An example with $4$ qubits is given in Figure~\ref{fig:circularly entangled};
    
    \item A pairwisely entangled layer of $n$ qubits consists of two sub-layers. In the first sub-layer, qubit $i$ is entangled with qubit $i+1$ for all even values of $i$, and in the second sub-layer, qubit $i$ is entangled with qubit $i+1$ for all odd values of $i$. An example with $4$ qubits is illustrated in Figure~\ref{fig:pairwisely entangled};

    \item A fully entangled layer of $n$ qubits means that each qubit $i$ for $i \in \{0, 1,\cdots, n-1\}$ is entangled with qubit $j$ for $j \in \{i, i+1, ..., n-1\}$. An example with $4$ qubits is illustrated in Figure~\ref{fig:fully entangled}.
\end{itemize}

\begin{figure}[ht]
    \centering
    \begin{subfigure}[b]{.38\textwidth}
    \centering
        \includegraphics[scale=0.8]{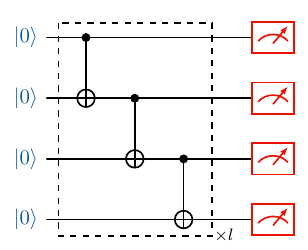}
        \caption{Linearly entangled layer.}
        \label{fig:linearly entangled}
    \end{subfigure}
    \begin{subfigure}[b]{.6\textwidth}
    \centering
        \includegraphics[scale=0.8]{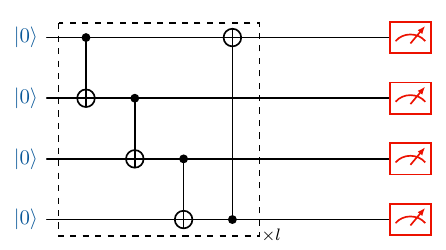}
        \caption{Circularly entangled layer.}
        \label{fig:circularly entangled}
    \end{subfigure}
    \begin{subfigure}[b]{.38\textwidth}
    \centering
        \includegraphics[scale=0.8]{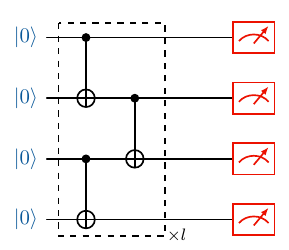}
        \caption{Pairwisely entangled layer.}
        \label{fig:pairwisely entangled}
    \end{subfigure}
    \begin{subfigure}[b]{.6\textwidth}
    \centering
        \includegraphics[scale=0.8]{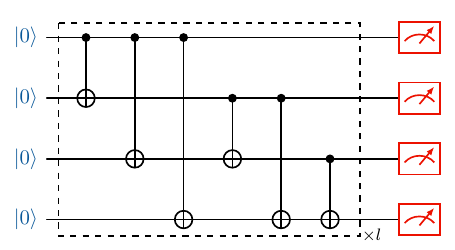}
        \caption{Fully entangled layer.}
        \label{fig:fully entangled}
    \end{subfigure}
    \caption{(Color online) Examples of four entanglement structures widely used in quantum machine learning. All single-qubit gates are omitted. Each layer in the dashed box can repeat $l$ times.}
    \label{fig:entanglement structures}
\end{figure}

Utilizing Proposition~\ref{prop: biadjacency matrix relation of composite circuit} and Theorem~\ref{prop:compilation via BIP}, we present the analysis of the reducibility and the optimal compilation for quantum circuits employing the four entanglement structures.

\begin{proposition}[Linearly entangled quantum circuit]\label{prop:linearly entangled}
A linearly entangled quantum circuit with $n \geq 2$ qubits and $l \geq 1$ linear layers is reducible if and only if $l \leq n-2$. The corresponding optimal solution for optimization~\eqref{eq: exact optimization} can be taken at $\sum_{i=0}^{n-l-2} E^{n}_{i,i+l+1}$. Consequently, the circuit can be compiled to an equivalent dynamic circuit with $l+1$ qubits.
\end{proposition}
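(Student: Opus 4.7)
The plan is to apply Proposition~\ref{prop:compilation via BIP} directly by first computing the biadjacency matrix $B$ of the simplified DAG for $l$ linear layers, then exhibiting a feasible $F$ achieving the claimed objective and matching this with an upper bound.

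First, I would trace through a single linear layer whose CNOTs $(0,1),(1,2),\ldots,(n-2,n-1)$ are applied in order. An easy induction on the gate index shows that the final dependency set of the $j$-th qubit is $\{0,1,\ldots,\min(j+1,n-1)\}$, so the single-layer biadjacency matrix satisfies $B^{(1)}_{ij} = 1$ iff $i \leq j + 1$. Applying Proposition~\ref{prop: biadjacency matrix relation of composite circuit} repeatedly, the $l$-layer biadjacency matrix is the $l$-fold Boolean product $B^{(l)} = B^{(1)} \odot \cdots \odot B^{(1)}$; a second induction on $l$ gives $B^{(l)}_{ij} = 1$ iff $i \leq j + l$. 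This matrix is all-one iff the extremal case $(i,j) = (n-1,0)$ satisfies $n-1 \leq l$, so by Proposition~\ref{prop:reducibility from dag} the circuit is reducible iff $l \leq n - 2$, settling the reducibility claim.

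Next, the candidate matrix is $\bar B_{ij} = 1$ iff $j \geq i + l + 1$. I would then propose $F = \sum_{i=0}^{n-l-2} E^n_{i, i+l+1}$ and verify the conditions of Proposition~\ref{prop:compilation via BIP}. The domination $F_{ij} \leq \bar B_{ij}$ is immediate at each nonzero entry, and the row- and column-sum conditions follow because the nonzero entries of $F$ sit on a single off-diagonal. For the nilpotent condition on the block matrix $\bigl(\begin{smallmatrix} O_n & B \\ F & O_n \end{smallmatrix}\bigr)$, I argue by contradiction that any directed cycle in the modified graph must alternate $B$-edges ($r \to t$ requiring $r \leq t + l$) with $F$-edges ($t \to t+l+1$). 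Chaining the two constraints forces $r_{k+1} = t_k + l + 1 \geq r_k + 1$ at each step, so the sequence of root indices along the cycle is strictly increasing, which is incompatible with periodicity.

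Finally, optimality follows from observing that $\bar B_{ij} = 0$ for all $j$ whenever $i > n - l - 2$, so terminals $n-l-1, \ldots, n-1$ admit no valid candidate roots; combined with the row-sum constraint this forces $\sum_{i,j} F_{ij} \leq n - l - 1$, which is achieved by the $F$ above. Hence $\alpha = n - l - 1$ and by Proposition~\ref{prop:compilation via BIP} the compiled width equals $n - \alpha = l + 1$. The main obstacle I anticipate is the nilpotent verification, which at first looks combinatorial but collapses neatly to the monotone-root-index argument above once the $B$- and $F$-constraints are written out together.
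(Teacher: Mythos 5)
Your proposal is correct and follows essentially the same route as the paper's proof: compute the single-layer biadjacency matrix $B^{(1)}_{ij}=1$ iff $i\le j+1$, take its $l$-fold Boolean power via Proposition~\ref{prop: biadjacency matrix relation of composite circuit} to get $B^{(l)}_{ij}=1$ iff $i\le j+l$, read off reducibility iff $l\le n-2$, and exhibit the same off-diagonal $F=\sum_{i=0}^{n-l-2}E^n_{i,i+l+1}$ with optimality forced by the $l+1$ all-zero rows of the candidate matrix. The only difference is that you spell out the nilpotency check (the strictly increasing root-index argument along any alternating cycle), which the paper leaves as "easily checked"; your argument there is valid.
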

\begin{proof}
For a linearly entangled quantum circuit with layer $l = 1$, the biadjacency matrix of the simplified DAG is given by
\begin{align}
    B = \sum_{i=0}^{n-1} \;\;\sum_{j=\max(i-1,0)}^{n-1} E^{n}_{i,j}.
\end{align}
For any integer $l \geq 1$, we can easily check that 
\begin{align}
    B^{\odot l} = \sum_{i=0}^{n-1} \;\;\sum_{j=\max(i-l,0)}^{n-1} E^{n}_{i,j}.
\end{align}
Therefore, $B^{\odot l}$ is an all-one matrix if and only if $l \geq n-1$. By Proposition~\ref{prop: biadjacency matrix relation of composite circuit}, the biadjacency matrix for the multi-layer quantum circuit is exactly given by $B^{\odot l}$. So the circuit is irreducible if and only if $l \geq n-1$, or equivalently, it is reducible if and only if $l \leq n-2$.

Moreover, we can compute the candidate matrix for the circuit with $l$ layers as
\begin{align}
    C = \sum_{i=0}^{n-l-2} \;\;\sum_{j=i+l+1}^{n-1} E^{n}_{i,j}.
\end{align}

For the optimal compilation, we need to select the maximum number of $1$ elements in such a matrix under the conditions in Proposition \ref{prop:compilation via BIP}. Without the nilpotent condition, the maximum number of $1$ elements that simultaneously satisfies conditions~\eqref{eq: submatrix condition},~\eqref{eq: row sum condition} and~\eqref{eq: col sum condition} is $n-l-1$. Let us choose
\begin{align}
    F = \sum_{i=0}^{n-l-2} E^{n}_{i,i+l+1},
\end{align} 
with the total sum of $n-l-1$. Then we can easily check that the adjacency matrix with such a selection is indeed nilpotent, making it an optimal solution in~\eqref{eq: exact optimization}. Finally, by Algorithm~\ref{alg:modified DAG to dynamic circuit}, the compiled circuit has circuit width $l+1$.
\end{proof}

\begin{proposition}[Circularly entangled quantum circuit]\label{prop:circularly entangled}
A circularly entangled quantum circuit with $n \geq 2$ qubits and $l \geq 1$ circular layers is reducible if and only if $n \geq 4$ and $l = 1$. The corresponding optimal solution for optimization~\eqref{eq: exact optimization} can be taken at $\sum_{i=1}^{n-3} E^{n}_{i,i+2}$. Consequently, the circuit can be compiled to an equivalent dynamic circuit with $3$ qubits.

\end{proposition}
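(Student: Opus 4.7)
The plan is to follow the template of Proposition~\ref{prop:linearly entangled}. I would first compute the biadjacency matrix $B$ for a single circular layer by iteratively applying the column-OR rule $A \odot B_{ij}$ to the gate sequence $(0,1), (1,2), \ldots, (n-2, n-1), (n-1, 0)$. The linear portion builds up a staircase pattern in which column $j$ for $1 \leq j \leq n-2$ becomes $e_0 + e_1 + \cdots + e_{j+1}$, and the closing gate $(n-1,0)$ lifts column $n-1$ (already all-ones after the gate $(n-2,n-1)$) and propagates this to column $0$. Thus, for $n \geq 4$, columns $0$, $n-2$, $n-1$ of $B$ are all-ones, while column $j$ for $1 \leq j \leq n-3$ equals $e_0 + e_1 + \cdots + e_{j+1}$. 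In particular, row $0$ of $B$ is also all-ones.

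Next, I would establish the reducibility dichotomy. For $l \geq 2$, Proposition~\ref{prop: biadjacency matrix relation of composite circuit} gives the biadjacency matrix of the full circuit as $B^{\odot l}$; since column $0$ and row $0$ of $B$ are each all-ones, $(B \odot B)_{ij} \geq B_{i,0} \wedge B_{0,j} = 1$ for every $i,j$, hence $B^{\odot l} = J_n$ for all $l \geq 2$. For $n \leq 3$ and any $l \geq 1$, a direct computation already gives $B = J_n$. In both cases Proposition~\ref{prop:reducibility from dag} yields irreducibility. Conversely, for $n \geq 4$ and $l = 1$, the candidate matrix $\bar B = \neg(B^{\top})$ inherits a staircase structure: $\bar B_{ij} = 1$ if and only if $1 \leq i \leq n-3$ and $j \geq i+2$.

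Finally, I would solve the binary integer program of Proposition~\ref{prop:compilation via BIP}. Since $\bar B$ is supported on at most $n-3$ rows, the row-sum constraint~\eqref{eq: row sum condition} forces $\sum_{i,j} F_{ij} \leq n-3$. The choice $F = \sum_{i=1}^{n-3} E^n_{i,i+2}$ saturates this bound and satisfies conditions~\eqref{eq: submatrix condition}--\eqref{eq: col sum condition} by inspection. To verify the nilpotency condition~\eqref{eq: nilpotent condition}, I would argue by contradiction: any directed cycle in the modified DAG traverses some sequence of added edges $t_{i_1} \to r_{i_1+2}, \ldots, t_{i_k} \to r_{i_k+2}$ linked by original paths $r_{i_s+2} \to t_{i_{s+1}}$. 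Using the explicit form of $B$, such a link with $i_{s+1} \in \{1,\ldots,n-3\}$ forces $i_s + 2 \leq i_{s+1} + 1$; chaining these inequalities around the cycle yields $i_1 \geq i_1 + k$, a contradiction. Invoking Algorithm~\ref{alg:modified DAG to dynamic circuit} then produces the compiled circuit on $n - (n-3) = 3$ qubits.

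The main obstacle in this plan is the nilpotency verification. While the column-OR computation of $B$ and the matching upper bound $|F| \leq n-3$ are routine, certifying that the staircase selection yields an acyclic modified DAG requires the index-chasing argument above, which leans on the precise shape of $B$ outside its all-ones columns. Every other step, including the handling of the $l \geq 2$ and $n \leq 3$ cases, reduces to direct substitution into the earlier propositions.
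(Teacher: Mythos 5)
Your proposal is correct and follows essentially the same route as the paper: compute the single-layer biadjacency matrix $B$, observe the candidate matrix is the staircase supported on rows $1,\dots,n-3$, take $F=\sum_{i=1}^{n-3}E^n_{i,i+2}$, and conclude width $3$. You additionally supply details the paper leaves as ``easily checked'' --- the clean $B^{\odot 2}=J_n$ argument via the all-ones row and column $0$, the explicit upper bound of $n-3$ from the row support (which also quietly corrects the paper's misstated bound of $n-l-1$), and the index-chasing certificate of nilpotency --- all of which are sound.
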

\begin{proof}
The proof is similar to the proof of Proposition~\ref{prop:linearly entangled}.
For a circularly entangled quantum circuit with layer $l = 1$, the biadjacency matrix of the simplified DAG is given by
\begin{align}
    B = \sum_{i=0}^{n-1} \;\;\sum_{j=\max(i-2,0)}^{n-1} E^{n}_{i,(j+1\Mod n)}.
\end{align}
which is an all-one matrix if $n \leq 3$. Moreover, for any $n\geq 4$ we can check that $B^{\odot 2}$ becomes an all-one matrix. So the circuit is reducible if and only if $n\geq 4$ and $l = 1$.

Moreover, we can compute the candidate matrix for the circuit with $l=1$ layer as
\begin{align}
    C = \sum_{i=1}^{n-3} \sum_{j = i+2}^{n-1} E^{n}_{i,j}.
\end{align}

For the optimal compilation, we need to select the maximum number of $1$ elements in such a matrix under the conditions stated in Proposition \ref{prop:compilation via BIP}. Without the nilpotent condition, the maximum number of $1$ elements that simultaneously satisfies conditions~\eqref{eq: submatrix condition},~\eqref{eq: row sum condition} and~\eqref{eq: col sum condition} is $n-l-1$. Let us choose
\begin{align}
    F = \sum_{i=1}^{n-3} E^{n}_{i,i+2}.
\end{align}
with the total sum of $n - 3$. Then we can easily check that the adjacency matrix with such a selection is indeed nilpotent, making it an optimal solution in~\eqref{eq: exact optimization}. Finally, by Algorithm~\ref{alg:modified DAG to dynamic circuit}, the compiled circuit has circuit width $3$.
\end{proof}

\begin{proposition}[Pairwisely entangled quantum circuit]\label{prop:pairwisely entangled}

A pairwisely entangled quantum circuit with $n \geq 2$ qubits and $l \geq 1$ pairwise layers is reducible if and only if $l \leq \lceil n/2 \rceil - 1$. In this case, a feasible solution for optimization~\eqref{eq: exact optimization} can be taken at $\sum_{i=0}^{n-(2l+2)} E^{n}_{i,i+2l+1}$ which is optimal when $l > (n-2)/4$. Consequently, the circuit can be compiled to an equivalent dynamic circuit with $2l+1$ qubits.

\end{proposition}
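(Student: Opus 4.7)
The plan is to mirror the template used in the preceding propositions on linearly and circularly entangled circuits: compute the biadjacency matrix $B$ of the simplified DAG via Proposition~\ref{prop: biadjacency matrix relation of composite circuit}, deduce reducibility from Proposition~\ref{prop:reducibility from dag}, exhibit a feasible $F$ satisfying the binary integer program~\eqref{eq: exact optimization}, and argue optimality in the stated regime. First, I would compute $B$ after $l$ pairwise layers using the column-OR description of the Boolean product. An induction on $l$, whose base case is settled by tracing the two sub-layers of a single pairwise layer, should establish that root $r_i$ reaches exactly the terminals in $[\max(0, i - 2l + 1), \min(n-1, i + 2l)]$ when $i$ is even and in $[\max(0, i - 2l), \min(n-1, i + 2l - 1)]$ when $i$ is odd; intuitively, each pairwise layer extends the reach symmetrically by two indices on each side.

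From this reach formula, $B$ is an all-one matrix precisely when the most extreme root-terminal pair is connected. The binding constraints are $r_0$ reaching $t_{n-1}$, requiring $2l \geq n-1$, and for odd $n$ also $r_{n-1}$ (which is then even) reaching $t_0$, requiring $2l \geq n$. Both constraints coincide with $l \geq \lceil n/2 \rceil$, so by Proposition~\ref{prop:reducibility from dag} the circuit is reducible iff $l \leq \lceil n/2 \rceil - 1$.

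For feasibility of $F = \sum_{i=0}^{n-(2l+2)} E^n_{i, i+2l+1}$, each edge $(t_i, r_{i+2l+1})$ is a candidate edge because the minimum index in the backward reach of $r_{i+2l+1}$ is $i+2$ or $i+1$ (depending on parity), so $t_i$ lies outside it. The row and column sum conditions hold since $F$ uses distinct terminals $t_0, \ldots, t_{n-2l-2}$ and distinct roots $r_{2l+1}, \ldots, r_{n-1}$. For the nilpotency condition~\eqref{eq: nilpotent condition}, I would observe that each alternating step $r_a \to t_b \to r_{b+2l+1}$ in the modified DAG strictly increases the root index (by at least $1$ if $a$ is odd, at least $2$ if $a$ is even), so no directed cycle can exist; hence the adjacency matrix is nilpotent by the criterion recalled in the preliminaries. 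The matching contains $n - 2l - 1$ edges, yielding compiled width $2l + 1$.

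The main obstacle is proving optimality when $l > (n-2)/4$, i.e., $n \leq 4l + 1$. My plan is to split the candidate entries into forward ($j - i \geq 2l+1$) and backward ($i - j \geq 2l$) types, note that a greedy matching within the forward entries alone already attains $n - 2l - 1$ edges, and rule out larger feasible $F$ by a cycle-closure argument: in the narrow regime $n - 1 \leq 4l$, the index range is too short for a forward edge $(t_i, r_j)$ and a backward edge $(t_{i'}, r_{j'})$ to coexist without the pair $r_j \to t_{i'} \to r_{j'} \to t_i \to r_j$ closing into a cycle through original edges (each of which spans distance at most $2l$, forcing the intermediate indices $i'$ and $j'$ into the reach of their partner roots). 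Thus any valid $F$ must use only one direction of edges, and the forward-only maximum $n - 2l - 1$ is attained by the proposed solution. The delicate step will be exhaustively verifying the cycle-closure across all legal index configurations and parity cases.
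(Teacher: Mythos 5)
Your proposal follows essentially the same route as the paper's proof: the reach intervals you derive coincide with the paper's closed form for $B^{\odot l}$, the reducibility threshold and the feasible solution $F$ are identical, and your forward/backward split of the candidate entries is exactly the paper's decomposition of the candidate matrix into an upper-right block $C_R$ and a lower-left block $C_L$, together with its claim that acyclicity forbids mixing the two corners when $l > (n-2)/4$. Your explicit monotone-root-index argument for nilpotency and your flagged ``delicate'' cycle-closure verification are, if anything, spelled out more carefully than the paper's corresponding one-line assertions.
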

\begin{proof}
The proof is also similar to the proof of Proposition~\ref{prop:linearly entangled}.
For a pairwisely entangled quantum circuit with layer $l = 1$, the biadjacency matrix of the simplified DAG is given by
\begin{align}
    B = \sum\limits_{\substack{i,j=0 \\ |i-j| \leq 1}}^{n-1} E^{n}_{i,j} + \sum\limits_{\substack{i=0 \\ i \text{ even}}}^{n-3} E^{n}_{i,i+2} + \sum\limits_{\substack{i=0 \\ i \text{ odd}}}^{n-3} E^{n}_{i+2,i}
\end{align}
For any integer $l\geq 1$, we can check that
\begin{align}
    B^{\odot l} = \sum\limits_{\substack{i,j=0 \\ |i-j| \leq 2l-1}}^{n-1} E^{n}_{i,j} + \sum\limits_{\substack{i=0 \\ i \text{ even}}}^{n-(2l+1)} E^{n}_{i,i+2l} + \sum\limits_{\substack{i=0 \\ i \text{ odd}}}^{n-(2l+1)} E^{n}_{i+2l,i}.
\end{align}
Therefore, $B^{\odot l}$ is an all-one matrix if and only if $l \geq \lceil n/2 \rceil$. By Proposition~\ref{prop: biadjacency matrix relation of composite circuit}, the biadjacency matrix for the multi-layer quantum circuit is exactly given by $B^{\odot l}$. So the circuit is irreducible if and only if $l \geq \lceil n/2 \rceil$, or equivalently, it is reducible if and only if $l \leq \lceil n/2 \rceil - 1$.

Moreover, we can compute the candidate matrix for the circuit with $l$ layers as $C = C_L + C_R$ where
\begin{align}
    C_L & = \sum\limits_{\substack{i,j=0 \\ i-j \geq 2l+1}}^{n-1} E^{n}_{i,j} + \sum\limits_{\substack{i=0 \\ i \text{ odd}}}^{n-(2l+1)} E^{n}_{i+2l,i}\\
    C_R & = \sum\limits_{\substack{i,j=0 \\ j-i \geq 2l+1}}^{n-1} E^{n}_{i,j} + \sum\limits_{\substack{i=0 \\ i \text{ even}}}^{n-(2l+1)} E^{n}_{i,i+2l}
\end{align}
Note that $C_L$ and $C_R$ have non-zero values only at the lower-left and upper-right corners of the matrix $C$. If $l >  (n-2) /4 $, the row and column indices of the non-zero entries of $C_L$ and $C_R$ have no intersection. So any candidate edge in $C_R$ will contradict to any candidate edge in $C_L$ because of the acyclic condition. If the optimal solution takes any value in $C_R$ (or $C_L$), then all candidate edges are taken from $C_R$ (or $C_L$). 

For the optimal compilation, we need to select the maximum number of $1$ elements in such a matrix under the conditions stated in Proposition~\ref{prop:compilation via BIP}. Without the nilpotent condition, the maximum number of $1$ elements that simultaneously satisfies conditions~\eqref{eq: submatrix condition},~\eqref{eq: row sum condition} and~\eqref{eq: col sum condition} is $n-(2l+1)$. Let us choose
\begin{align}
    F = \sum_{i=0}^{n-(2l+2)} E^{n}_{i,i+2l+1}.
\end{align}
with the total sum of $n - (2l+1)$. Then we can easily check that the adjacency matrix with such a selection is indeed nilpotent, making it an optimal solution in~\eqref{eq: exact optimization}.  Finally, by Algorithm~\ref{alg:modified DAG to dynamic circuit}, the compiled circuit has circuit width $2l+1$. Note that when $l \leq (n-2)/4$, the above choice of $F$ is also a feasible solution. This completes the proof.
\end{proof}

\begin{proposition}[Fully entangled quantum circuit]\label{prop:fully entangled}
A fully entangled quantum circuit with $n \geq 2$ qubits and $l \geq 1$ full layers is always irreducible.
\end{proposition}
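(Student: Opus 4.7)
The plan is to show that even a single fully entangled layer already produces the all-one biadjacency matrix $J_n$, after which the claim follows immediately from Proposition~\ref{prop:reducibility from dag} (or equivalently from Proposition~\ref{prop:reducibility from qubit reachability}) together with the composition law in Proposition~\ref{prop: biadjacency matrix relation of composite circuit}. This is morally the opposite extreme from the linearly/circularly/pairwisely entangled cases analyzed above: there, a single layer connects only local pairs and the reducibility question depends on whether $l$ layers of local propagation can span the whole register; here, a single layer already contains a two-qubit gate on every pair, so no propagation across layers is needed.

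First I would spell out the biadjacency matrix of one fully entangled layer. By the definition of the structure, for every ordered pair $(i,j)$ with $i\neq j$ there is a two-qubit gate acting on qubits $i$ and $j$, so in the DAG representation both $r_i \to t_j$ and $r_j \to t_i$ become reachable via the vertex representing that gate (together with the trivial self-reachability $r_i \to t_i$). Therefore the simplified DAG of one layer is the complete bipartite graph $(R,T,R\times T)$, i.e.\ $B_1 = J_n$.

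Next I would lift this from one layer to $l$ layers using Proposition~\ref{prop: biadjacency matrix relation of composite circuit}. The biadjacency matrix of the $l$-layer circuit is the Boolean product
\begin{equation}
    B = B_1 \odot B_1 \odot \cdots \odot B_1 = J_n^{\odot l},
\end{equation}
and a direct computation of the Boolean product gives $(J_n \odot J_n)_{ij} = \bigvee_{k=0}^{n-1}(1 \wedge 1) = 1$, so $J_n$ is idempotent and hence $B = J_n$ for every $l \geq 1$.

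Finally I would invoke Proposition~\ref{prop:reducibility from dag}: since $B = J_n$ the simplified DAG is a complete bipartite graph, so the circuit is irreducible. The only subtlety I anticipate is making sure the $l=1$ matrix computation is stated carefully when the gates of the layer are applied in a specific temporal order (as in Figure~\ref{fig:fully entangled}); but since reachability is monotone and each unordered pair is covered by at least one gate, the particular ordering is irrelevant and no genuine obstacle arises.
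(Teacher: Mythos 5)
Your proposal is correct and follows essentially the same route as the paper: the paper's proof simply asserts that the biadjacency matrix of the simplified DAG is the all-one matrix and concludes irreducibility (mentioning, as you do, the alternative via Proposition~\ref{prop:reducibility from qubit reachability}). You merely fill in the details the paper leaves implicit, namely that a single fully entangled layer already gives $B_1 = J_n$ and that $J_n$ is idempotent under the Boolean product, so the argument is sound and no further comparison is needed.
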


\begin{proof}
It is clear that for fully entangled quantum circuit, its biadjacencdiamond-structuredy matrix of the simplified DAG is always an all-one matrix. Therefore, the circuit is irreducible. Alternatively, the irreducibility can be easily seen from Proposition~\ref{prop:reducibility from qubit reachability}.
\end{proof}

\subsubsection{Diamond-structured quantum circuits}
\label{sec: Diamond-structured circuits}

The Hartree-Fock method is a widely used approach in quantum chemistry for determining the electronic structure and energy of molecules and atoms. In~\cite{google2020hartree} the authors performed a VQE simulation of the binding energy of $H_6, H_8, H_{10}, H_{12}$ hydrogen chains where they used a variational ansatz based on basis rotations to prepare the Hartree-Fock state. Figure~\ref{fig:Diamond-structured quantum circuit} illustrates an example of the diamond-structured basis rotation circuit used for the $H_6$ chain. 

\begin{figure}[H]
    \centering
    \includegraphics[scale=0.8]{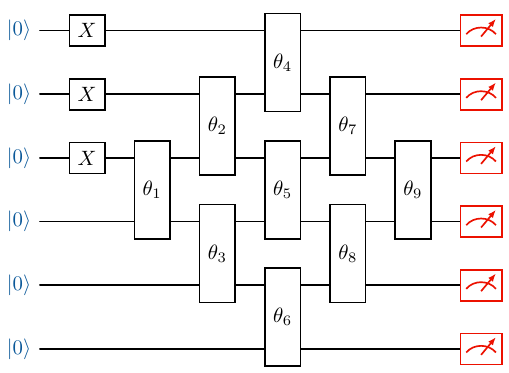}
    \caption{(Color online) The diamond-structured quantum circuit for the $H_6$ chain. Each box with a rotation angle $\theta$ represents a double qubit rotation gate.}
    \label{fig:Diamond-structured quantum circuit}
\end{figure}

The subsequent proposition provides the optimal compilation of the diamond-structured circuit designed to prepare the Hartree-Fock state.

\begin{proposition}[Diamond-structured quantum circuit]\label{prop:diamond-structured circuit}
A diamond-structured circuit for linear chain of $2n$ hydrogen atoms with $2n$ qubits is always reducible. The corresponding optimal solution for optimization~\eqref{eq: exact optimization} can be taken at $\sum_{i=0}^{n-2} E^{2n}_{i, i+n+1}$. Consequently, the circuit can be compiled to an equivalent dynamic circuit with $n+1$ qubits. 
\end{proposition}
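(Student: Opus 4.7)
The plan is to mirror the strategy used for the entanglement-structure propositions (e.g., Propositions~\ref{prop:linearly entangled} and~\ref{prop:pairwisely entangled}): first write down the biadjacency matrix $B$ of the simplified DAG of the diamond circuit, then derive the candidate matrix $\bar{B} = {}^\neg(B^\top)$, verify that the proposed $F = \sum_{i=0}^{n-2} E^{2n}_{i,i+n+1}$ is feasible for the binary integer program~\eqref{eq: exact optimization}, and finally establish its optimality.

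First I would extract $B$ directly from the diamond structure illustrated in Figure~\ref{fig:Diamond-structured quantum circuit}, using Proposition~\ref{prop: biadjacency matrix relation of composite circuit} to compose the single-gate biadjacency matrices $B_{ij} = I_{2n} + E^{2n}_{ij} + E^{2n}_{ji}$ in the order the gates appear. The diamond pattern induces an ``activity window'' for each qubit: the outermost top qubits $0,1,\ldots,n-2$ participate only in the early expansion stage, the outermost bottom qubits $n+1,n+2,\ldots,2n-1$ participate only in the late contraction stage, and the central qubits are active throughout. Tracing directed paths from roots to terminals, I expect $B$ to have exactly two triangular ``missing'' regions, corresponding precisely to the pairs $(\text{root } i+n+1, \text{terminal } i)$ for $0 \le i \le n-2$ whose activity windows do not overlap. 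This yields an explicit form for $\bar{B}$ whose only non-zero entries lie at the target positions of $F$.

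Next I would verify that the candidate $F = \sum_{i=0}^{n-2} E^{2n}_{i,i+n+1}$ satisfies all four conditions of Proposition~\ref{prop:compilation via BIP}. Conditions~\eqref{eq: row sum condition} and~\eqref{eq: col sum condition} are immediate from the form of $F$, which has at most one non-zero entry per row and column. Condition~\eqref{eq: submatrix condition} follows directly from the identification of $\bar{B}$ in the previous step. For the nilpotent condition~\eqref{eq: nilpotent condition}, I would argue acyclicity of the modified DAG directly: each added edge points from a top-half terminal $i \in \{0,\ldots,n-2\}$ to a bottom-half root $i+n+1 \in \{n+1,\ldots,2n-1\}$, and by the activity-window structure of $B$, no bottom-half root reaches any top-half terminal, so the added edges cannot close a cycle with the original edges of $B$.

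The main obstacle will be the optimality argument. I would show that in $\bar{B}$ the terminals indexed $n-1,n,\ldots,2n-1$ and symmetrically the roots indexed $0,1,\ldots,n$ are unreachable targets for additional edges, because these qubits lie in the central region of the diamond whose activity window overlaps every other qubit's window. Consequently any feasible $F$ is supported on an $(n-1)\times(n-1)$ submatrix, and together with the row/column sum constraints this forces $\sum_{i,j} F_{ij} \le n-1$. Since the proposed $F$ attains this bound, it is optimal, and by Proposition~\ref{prop:compilation via BIP} the compiled circuit has width $2n-(n-1) = n+1$, with the explicit dynamic circuit produced by Algorithm~\ref{alg:modified DAG to dynamic circuit}.
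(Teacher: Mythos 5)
Your overall strategy is the paper's: compute $B$, derive the candidate matrix, check feasibility of $F=\sum_{i=0}^{n-2}E^{2n}_{i,i+n+1}$, then bound the objective from above. But your execution contains a genuine gap in the optimality step, rooted in a miscomputation of the candidate matrix. The candidate matrix is \emph{not} supported only near the positions of $F$: it consists of two disjoint triangular blocks, an upper-right block $C_R=\sum_{i=0}^{n-2}\sum_{j=i+n+1}^{2n-1}E^{2n}_{i,j}$ (terminals $t_0,\dots,t_{n-2}$ paired with roots $r_{n+1},\dots,r_{2n-1}$) \emph{and} a lower-left block $C_L=\sum_{i=n+1}^{2n-1}\sum_{j=0}^{i-n-1}E^{2n}_{i,j}$ (terminals $t_{n+1},\dots,t_{2n-1}$ paired with roots $r_0,\dots,r_{n-2}$). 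Your claim that terminals $t_{n-1},\dots,t_{2n-1}$ and roots $r_0,\dots,r_n$ admit no candidate edges is therefore false, and with it the conclusion that any feasible $F$ lives in a single $(n-1)\times(n-1)$ submatrix. Since $C_L$ and $C_R$ occupy disjoint rows and disjoint columns, conditions~\eqref{eq: submatrix condition}--\eqref{eq: col sum condition} alone would allow up to $2(n-1)$ added edges (a matching of size $n-1$ in each block), which would compile the circuit to width $2$, not $n+1$. The missing idea — and the crux of the paper's proof — is that the nilpotency condition~\eqref{eq: nilpotent condition} makes every edge of $C_L$ incompatible with every edge of $C_R$: for $(t_a,r_b)\in C_R$ and $(t_c,r_d)\in C_L$ one checks from $B$ that $r_b$ reaches $t_c$ and $r_d$ reaches $t_a$, closing a $4$-cycle. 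Hence the optimum is attained within a single block, giving the bound $n-1$.

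A smaller but real flaw: your acyclicity argument for the feasibility of $F$ rests on the claim that ``no bottom-half root reaches any top-half terminal,'' which is false — e.g.\ $r_{n+1}$ reaches $t_1,\dots,t_{2n-1}$, including top-half terminals. The conclusion survives for a different reason: $r_{i+n+1}$ reaches $t_j$ only for $j\ge i+1$, so any would-be cycle through the added edges visits strictly increasing indices and cannot close.
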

\begin{proof}
For a diamond-structured circuit with $2n$ qubits, the biadjacency matrix of the simplified DAG is given by
\begin{align}
    B = \sum_{i=0}^{n-1} \sum_{j=0}^{i+n} E^{2n}_{i, j} + \sum_{i=n}^{2n-1} \sum_{j = i - n}^{2n-1} E^{2n}_{i, j}
\end{align}
We can further compute the candidate matrix for the circuit as $C = C_L + C_R$ where
\begin{align}
    C_L & = \sum_{i= n + 1}^{2n-1} \sum_{j=0}^{i - n - 1} E^{2n}_{i, j} \\
    C_R & = \sum_{i=0}^{n - 2} \sum_{j= i + n + 1}^{2n-1} E^{2n}_{i, j}
\end{align}

It is worth noting that the matrices $C_L$ and $C_R$ contain non-zero entries only at the lower-left and upper-right corners. Since there is no overlap in the row and column indices of the non-zero entries between $C_L$ and $C_R$, any candidate edge in $C_L$ would conflict with any candidate edge in $C_R$ because of the acyclic condition. Consequently, if the optimal solution selects any edge from $C_R$ (or $C_L$), it necessitates that all candidate edges are drawn exclusively from $C_R$ (or $C_L$).

To achieve the optimal compilation, our objective is to maximize the number $1$ elements in such a matrix under the conditions stated in Proposition~\ref{prop:compilation via BIP}. In the absence of the nilpotent condition, we have the flexibility to select
\begin{align}
    F = \sum_{i=0}^{n-2} E^{2n}_{i, i+n+1}
\end{align}
with the total sum of $n - 1$. Then we can easily check that the adjacency matrix with such a selection is indeed nilpotent, making it an optimal solution in~\eqref{eq: exact optimization}. Finally, by Algorithm~\ref{alg:modified DAG to dynamic circuit}, the compiled circuit has circuit width $n+1$, which completes the proof.
\end{proof}

\subsection{Measurement-based quantum computation}
Measurement-based quantum computation (MBQC) constitutes an alternative quantum computation model~\cite{briegel2009measurement}. This model guides the computation process by measuring a portion of the qubits within an entangled state, while the remaining unmeasured qubits evolve correspondingly. The MBQC computation can be divided into three primary steps.
The initial step involves preparing a resource state, which is a highly entangled many-body quantum state. This state can be pre-generated offline and is independent of specific computational tasks. Subsequently, the second step entails sequentially performing single-qubit measurements on each qubit of the prepared resource state. These measurements might be adaptive, meaning that later measurement choices can depend on the outcomes of prior measurements.
In the third step, classical data processing is applied to the measurement outcomes in order to derive the necessary computational results. An illustration of the resource state is provided in Figure~\ref{fig:mbqc}(a), where the grid signifies a widely used quantum resource state referred to as the cluster state. Within this grid, each vertex represents a qubit (typically initialized as a plus state $\ket{+} = (\ket{0}+\ket{1})/\sqrt{2}$), while edges connecting the vertices symbolize controlled-phase gates operating on the linked qubits.

\begin{figure}[ht]
    \centering
    \includegraphics[width=0.8\textwidth]{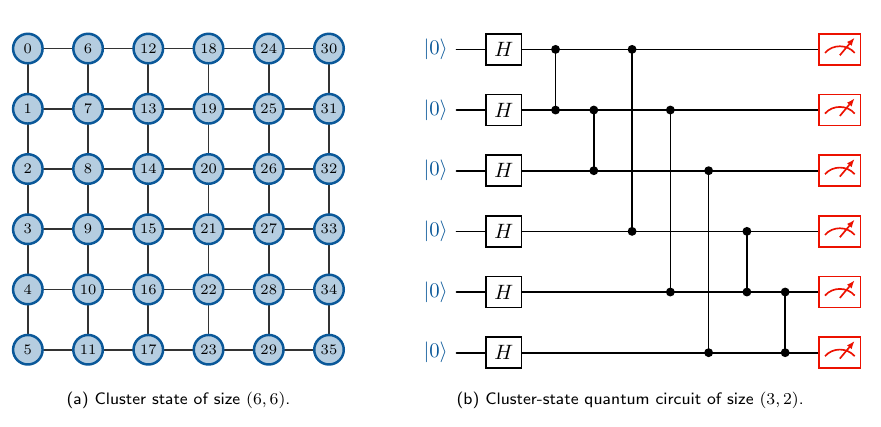}
    \vspace{-4mm}
    \caption{(Color online) An example of cluster state and cluster-state quantum circuit.}
    \label{fig:mbqc}
\end{figure}

Measurement-based quantum computation plays a pivotal role in blind quantum computation, a cloud-based quantum computing scheme that enables users to conduct calculations privately in a quantum network without disclosing any information to the server~\cite{fitzsimons2017private}. This capability holds great promise for the future quantum internet. Nonetheless, the standard approach of quantum computation in cloud servers typically operates in the quantum circuit model. Integrating an MBQC algorithm into this model necessitates a considerable number of qubits. Specifically, MBQC applied to a cluster state of size $(w, d)$—where $w$ signifies the number of rows and $d$ signifies the number of columns—can be translated directly into a quantum circuit of size $w d$. We refer to this circuit as a \emph{cluster-state quantum circuit} of size $(w, d)$. As shown in Figure~\ref{fig:mbqc}(b), an initial Hadamard gate is applied to each qubit to prepare a plus state. Subsequently, controlled-phase gates are employed based on the topology of the grid. Finally, a measurement is taken on each qubit. It is important to note that measurements are typically applied sequentially to the qubits column by column. Measurements on the same column are independent, while measurements on later columns may depend on the results of the previous columns.

The reducibility of cluster-state quantum circuits depends on the order of the controlled-phase gates. The following result is based on the specific order in which we implement the controlled-phase gates. First, we implement the gates in the first column, then those between the first and second columns, and so on. Additionally, the qubits in the first column of a cluster state can be initialized in an entangled quantum state.

The following result demonstrates that this circuit can be effectively reduced to a dynamic quantum circuit with $w + 1$ qubits, which is independent of the parameter $d$. This reduction significantly reduces the resource requirements for running MBQC algorithms on quantum computers that are primarily designed for circuit-based operations.

\begin{proposition}[Cluster-state quantum circuit.]\label{prop:mbqc}
A cluster-state quantum circuit of size $(w, d)$ is reducible for any $d \geq 2$. The corresponding optimal solution for optimization~\eqref{eq: exact optimization} can be taken at $\sum_{i=0}^{wd - (w+2)} E^{wd}_{i,i+w+1}$ in general. Consequently, the circuit can be compiled to an equivalent dynamic circuit with $w+1$ qubits. 
\end{proposition}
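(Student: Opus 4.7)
The plan is to mirror the strategy of the preceding propositions in this appendix, closing with a direct physical lower bound for optimality. First I would derive the biadjacency matrix $B$ of the simplified DAG from the prescribed gate order, in which for each column $c$ one applies the vertical CZs $\CZ(cw+k, cw+k+1)$ for $k = 0, \ldots, w-2$ followed (when $c < d-1$) by the horizontal CZs $\CZ(cw+k, (c+1)w+k)$ for $k = 0, \ldots, w-1$. Labeling qubit $i$ by $(c_i, r_i) = (\lfloor i/w\rfloor, i \bmod w)$ and tracing the causal cone in the DAG, I expect root $i$ to reach terminal $j$ exactly when either (i) $c_j = c_i - 1$ with $r_j = r_i$, via the first horizontal CZ acting on qubit $i$ immediately followed by the measurement of qubit $i-w$; or (ii) $c_j \geq c_i$ with $r_j \geq \max(0, r_i - (c_j - c_i) - 1)$, reflecting a staircase-shaped forward region whose lower row boundary relaxes by one per crossed column. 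The candidate matrix then follows as $\bar B = \neg(B^\top)$.

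Next I would verify that the proposed $F = \sum_{i=0}^{wd-w-2} E^{wd}_{i,i+w+1}$ satisfies all the constraints of \eqref{eq: exact optimization}. Writing $i = cw+r$, the edge $(i, i+w+1)$ links $(c, r)$ to $(c+1, r+1)$ when $r < w-1$ and to $(c+2, 0)$ when $r = w-1$; in either case $\bar B_{i, i+w+1} = 1$, fulfilling \eqref{eq: submatrix condition}. The row and column sums of $F$ are at most one by construction, yielding \eqref{eq: row sum condition} and \eqref{eq: col sum condition}. For nilpotency \eqref{eq: nilpotent condition}, I plan a clean potential-function argument: define $\Phi(v) = w c_v + r_v$, so that $\Phi$ coincides with the qubit index of $v$. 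Every $F$-edge increases $\Phi$ by exactly $w+1$, while every $B$-edge $R_a \to T_b$ satisfies $b \geq a - w$ and hence $\Phi(b) - \Phi(a) \geq -w$. A directed cycle of length $2m$ in the modified DAG would therefore yield $0 = \sum \Delta\Phi \geq m(w+1) - mw = m > 0$, a contradiction.

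The main obstacle is the optimality bound $\alpha \leq wd - w - 1$. Since the unconstrained bipartite maximum matching of $\bar B$ is strictly larger in general, the acyclicity constraint is binding, and I will obtain the compiled-width lower bound via a direct \emph{simultaneous aliveness} argument rather than a combinatorial case analysis of $\bar B$. Fix any $c \in \{0, \ldots, d-2\}$ (possible since $d \geq 2$) and consider the moment in any valid dynamic schedule at which the final horizontal CZ between columns $c$ and $c+1$ is applied. By this moment, each of the $w$ column-$(c+1)$ qubits has already been initialized for one of the horizontal CZs in this block and cannot yet have been measured, since each still has pending vertical-$(c+1)$ or later horizontal CZs (or, if $c+1 = d-1$, the vertical-$(d-1)$ CZs); meanwhile, the last column-$c$ qubit is still alive awaiting this very CZ. Hence $w+1$ logical qubits must coexist on distinct physical qubits, so the compiled width is at least $w+1$, giving $\alpha \leq wd - w - 1$. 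Combined with the attained $|F| = wd - w - 1$ of the proposed feasible $F$, we conclude $\alpha = wd - w - 1$, and Algorithm~\ref{alg:modified DAG to dynamic circuit} returns a dynamic circuit of width $wd - \alpha = w + 1$.
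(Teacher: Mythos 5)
Your construction of $B$, your verification that $F=\sum_{i=0}^{wd-(w+2)}E^{wd}_{i,i+w+1}$ consists of candidate edges satisfying \eqref{eq: submatrix condition}--\eqref{eq: col sum condition}, and your nilpotency argument are all sound; in fact the potential-function proof of acyclicity (every $F$-edge raises the qubit index by exactly $w+1$ while every $B$-edge lowers it by at most $w$, so an alternating cycle of length $2m$ would force $0\geq m$) is cleaner than the paper's ``one can easily check'' and is a nice self-contained replacement for it.

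The gap is in the optimality bound. Your ``simultaneous aliveness'' claim --- that at the moment the \emph{last} horizontal $\CZ$ between columns $c$ and $c+1$ is applied, none of the $w$ column-$(c+1)$ qubits can yet have been measured --- fails for $w\geq 3$. Take the last horizontal $\CZ$ of the block to be the one on row $w-1$. The measurement of qubit $(c+1,0)$ depends only on: its own horizontal $\CZ$ with $(c,0)$, its vertical $\CZ$ with $(c+1,1)$ (which requires only the row-$1$ horizontal $\CZ$ of the block), and its horizontal $\CZ$ with $(c+2,0)$ (which requires only the initialization of $(c+2,0)$). None of these depend on the row-$(w-1)$ horizontal $\CZ$, so a valid schedule can fully process and measure $(c+1,0)$ before the ``final'' horizontal $\CZ$ of the block is executed; you have conflated ``has pending gates at the time its own horizontal $\CZ$ fires'' with ``has pending gates at the time the last horizontal $\CZ$ of the block fires.'' The conclusion is still true, but the paper anchors the lower bound at the \emph{first} column instead, using the assumption (stated just before the proposition) that the first column may be initialized in a global entangled state: this forces all $w$ first-column qubits to coexist, and measuring any one of them first requires its horizontal $\CZ$ with column $1$, hence a $(w+1)$-st live qubit. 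Your argument needs to be replaced by this (or by some other argument that does not rely on all of column $c+1$ surviving until the block's last horizontal $\CZ$).
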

\begin{proof}
For a cluster-state quantum circuit of size $(w, d)$, its biadjacency matrix of the simplified DAG is given by a block matrix:
\begin{align}
    B = \sum_{i=1}^{d-1} E^{d}_{i,i-1}  \ox I_{w} + \sum_{i=0}^{d - 1} \sum_{j=0}^{d - i - 1} E^{d}_{i,i+j} \ox D_j
\end{align}
where each block
\begin{align}
    D_k = \sum_{i=0}^{w - 1} \sum_{j = \max(i-k-1, 0)}^{w-1} E^{w}_{i,j}.
\end{align}
We can further compute the candidate matrix as
\begin{align}
    C = \sum_{i=0}^{d-1} \sum_{j = 0}^{i} E^{d}_{i,i-j} \ox G_j + \sum_{i=0}^{d-2} E^{d}_{i,i+1} \ox (J_w - I_w) +  \sum_{i=0}^{d-1}\sum_{j=i+2}^{d-1} E^{d}_{i,j} \ox J_w,
\end{align}
where
\begin{align}
    G_k = \sum_{i=0}^{w-1}\sum_{j=i+k+2}^{w-1} E^{w}_{i,j}.
\end{align}
We need to select as many as $1$ elements within the conditions stated in Proposition \ref{prop:compilation via BIP}. Without the nilpotent condition, we can choose
\begin{align}
    F = \sum_{i=0}^{wd - (w+2)} E^{wd}_{i,i+w+1}.
\end{align}
which simultaneously satisfies conditions~\eqref{eq: submatrix condition},~\eqref{eq: row sum condition} and~\eqref{eq: col sum condition}. Then we can easily check that the adjacency matrix with such a selection is indeed nilpotent, making it a feasible solution in~\eqref{eq: exact optimization}. Finally, by Algorithm~\ref{alg:modified DAG to dynamic circuit}, the compiled circuit has circuit width $w+1$. Since the quantum state in the first column of qubits can be assigned to an arbitrary global quantum state in general, this requires at least $w$ qubits. Moreover, no matter which qubit in the first column to measure first, we need at least one more qubit to apply for the controlled-phase gate between the qubit to measure and the adjacent qubit on its right. This makes the circuit has at least $w+1$ qubits, indicating the optimality of our compilation.
\end{proof}

\begin{remark}\label{rmk: brickwork}
Similar compilations also work for measurement-based quantum computation with other graph states, such as the brickwork state used in~\cite{broadbent2009universal}.
\end{remark}

\section{Implementation of DCKF algorithm}
\label{sec:dckf implementation}

Since the DCKF algorithm described in~\cite{decross2022qubitreuse} is not publicly available, we implemented this algorithm based on our interpretation of their results. We observed that the causal cone of an output qubit $q_i$ in their algorithm corresponds exactly to the set of roots that can reach the terminal of $q_i$. We effectively obtained this information by identifying all non-zero indices in the $i$-th column of the biadjacency matrix of the quantum circuit.

Our initial step involved executing Algorithm~\ref{alg:reducibility through Boolean matrix} to obtain the biadjacency matrix and compute causal cones for all outputs. Subsequently, we applied their greedy strategy to determine the measurement order and identify how qubits are reused. It is important to note that, in order to complete a measurement, all qubits within its causal cone that have not yet been measured should be activated, while registers of all previously measured qubits are available for reuse.

Therefore, when a qubit $q_i$ needed to be activated, we first identified all unoccupied registers from the existing ones. If an available register existed, $q_i$ was loaded onto the first available one. Conversely, if all registers were occupied, a new register was initialized to accommodate $q_i$. Additionally, after the measurement of a qubit, the register it occupied was recycled for subsequent use. Upon the completion of all measurements in the input static circuit, the sequence in which registers were occupied was translated into a list of edges, which were subsequently incorporated into the DAG representation of the circuit. Finally, the compilation process was completed by applying Algorithm~\ref{alg:modified DAG to dynamic circuit}. The full implementation of the DCKF algorithm is provided in Algorithm~\ref{alg:dckf implementation}.

\BlankLine

\begin{algorithm}[!htb]
\small
\caption{DCKF Algorithm}
\label{alg:dckf implementation}
\LinesNumbered

\KwIn{\\  \begin{tabular}{p{3.5cm}l}
     \textit{StaticCircuit}  &  the instruction list of a static quantum circuit\\ 
\end{tabular}}

\BlankLine

\KwOut{\\  \begin{tabular}{p{3.5cm}l}
    \textit{DynamicCircuit}  &  the instruction list of the compiled dynamic quantum circuit\\ 
\end{tabular}}

\BlankLine

Let $n$ be the width of the static quantum circuit\;

Run Algorithm~\ref{alg:Converting static quantum circuit to DAG} to get the DAG representation \textit{Digraph} of the circuit\;

Let \textit{Roots} and \textit{Terminals} be the set of roots and terminals, respectively\;

Run Algorithm~\ref{alg:reducibility through Boolean matrix} to get the biadjacency matrix $B$ of the simplified DAG of the circuit\;

\BlankLine

Initialize empty lists \textit{CausalCones}, \textit{Register}, \textit{Occupation}, \textit{AddedEdges} and an empty set $C_q$\;

Initialize a set \textit{UnmeasuredQubits} as $\{0, 1, ..., n-1\}$\;

\BlankLine

\For{$i=0$ {\rm \textbf{to}} $n-1$}{
    Let $C$ be the set of row indices of non-zero entries in the $i$-th column of $B$\;
    Append $C$ to \textit{CausalCones}\;
}

\For{$t=0$ {\rm \textbf{to}} $n-1$}{
    Initialize \textit{ConeSize} as $n + 1$\;
    \ForEach{Qubit {\rm \textbf{in}} UnmeasuredQubits}{
        Calculate \textit{Union} $= C_q$ $\cup$ \textit{CausalCones}[\textit{Qubit}]\;
        \If{the length of Union < ConeSize}{
            Set \textit{ConeSize} to the length of \textit{Union}\;
            Set \textit{NextMeasure} to \textit{Qubit}\;
        }
    }
    Calculate \textit{ActiveQubits} as \textit{CausalCones}[\textit{NextMeasure}] $-$ \textit{MeasureOrder}\;
    \ForEach{Qubit {\rm \textbf{in}} ActiveQubits}{
        \If{Qubit not {\rm \textbf{in}} Register}{
            Record all indeices of None in \textit{Register} in \textit{Available}\;
            \uIf{Available is not empty}{
                Record \textit{Available}[$0$] as \textit{Address}\;
                Record the last element in \textit{Occupation}[\textit{Address}] as $t$\;
                Append an edge (Terminals[$t$], Roots[\textit{Qubit}]) to \textit{AddedEdges}\;
                Append \textit{Qubit} to \textit{Occupation}[\textit{Address}]\;
            }
            \Else{
                Set \textit{Address} to the length of \textit{Register}\;
                Append a list [\textit{Qubit}] to \textit{Occupation}\;
            }
            Set \textit{Register}[\textit{Address}] to \textit{Qubit}\;
        }
    }
    Identify the index of \textit{NextMeasure} in \textit{Register} as $m$ and set \textit{Register}[$m$] to None\;
    Append \textit{NextMeasure} to \textit{MeasureOrder}\;
    Remove \textit{NextMeasure} from \textit{UnmeasuredQubits}\;
    Calculate $C_q = C_q$ $\cup$ \textit{CausalCones}[\textit{NextMeasure}]\;
}

Add \textit{AddedEdges} to \textit{Digraph} and get the \textit{ModifiedGraph}\;

Run Algorithm~\ref{alg:modified DAG to dynamic circuit} to get the compiled circuit \textit{DynamicCircuit}\;

\Return \textit{DynamicCircuit}

\end{algorithm}

\section{Experimental Results Continued}
\label{app: Experimental Results Continued}

\subsection{Experimental evaluation of hybrid algorithm}

To demonstrate the tradeoff between solution optimality and computational time complexity, we assessed the performance and runtime of the hybrid algorithm (Algorithm~\ref{alg:hybrid}) at various hierarchy levels (0, 1, 2, 3) for max-cut QAOA circuits on U3R graphs with $p=1$. The numerical experiments were conducted on a MacBook Pro (2020) with an Intel i5 processor and 16GB memory.
Figure~\ref{fig:hybrid mrv qaoa p1} depicts the average compiled circuit width and algorithm runtime over 20 random graphs for each qubit number. As expected, the results show that as the hierarchy level increases, we attain better compilations. However, the algorithm runtime also increases proportionally. Therefore, the hybrid algorithm shall be a suitable choice for scenarios where minimizing the compiled circuit width is of top priority, provided that the runtime remains acceptable.

\begin{figure}[!htb]
    \centering
    \begin{subfigure}{0.48\textwidth}
        \includegraphics[width=0.9\textwidth]{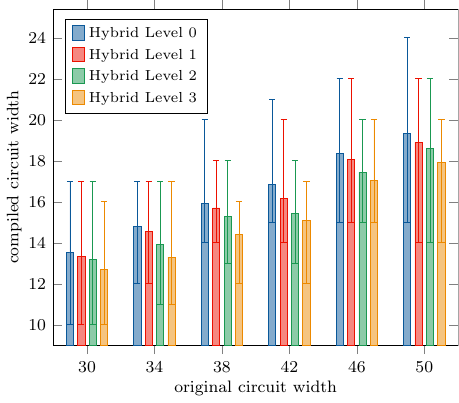}
    \end{subfigure}
    \begin{subfigure}{0.495\textwidth}
        \includegraphics[width=0.9\textwidth]{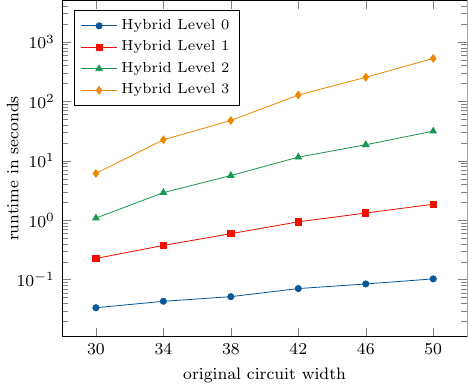}
    \end{subfigure}
    \vspace{-2.5mm}
    \caption{(Color online) Compiled circuit width (left) and algorithm runtime in seconds (right) as a function of the original circuit width of max-cut QAOA with $p=1$, compiled by hybrid algorithm~\ref{alg:hybrid} with different hierarchy levels. The ploted error bars in the left figure correspond to the maximum and the minimum compiled width over 20 evaluated instances. The runtime in the right figure represents the average runtime.}
    \label{fig:hybrid mrv qaoa p1}
\end{figure}

\subsection{Experimental evaluation of random circuits}

Figure~\ref{fig:further result on random circuit}(a) depicts the complete results of the reducibility factor of random circuits. Notably, for all evaluated cases, our greedy heuristic outperforms the DCKF algorithm in approximately $87.6\%$ of cases, with a strict advantage over $49.6\%$ of random circuits. Additionally, a comparative analysis of the performance between our greedy heuristic and the improved DCKF algorithm with the first qubit search technique on random circuits is depicted in Figure~\ref{fig:further result on random circuit}(b). In cases where the reducibility factor is relatively large, our algorithm showcases superiority in approximately $91.2\%$ of instances, maintaining a strict advantage in over $36\%$ of random circuits.

\begin{figure}[!htb]
    \begin{subfigure}[t]{0.48\textwidth}
        \centering
        \includegraphics[width=0.9\textwidth]{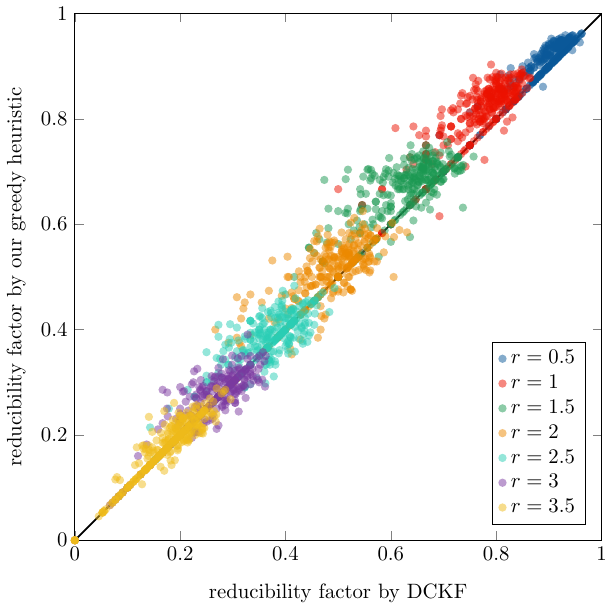}
        \vspace{-1mm}
        \caption*{\scriptsize (a)}
    \end{subfigure}
    \hspace{1mm}
    \begin{subfigure}[t]{0.48\textwidth}
        \centering
        \includegraphics[width=0.9\textwidth]{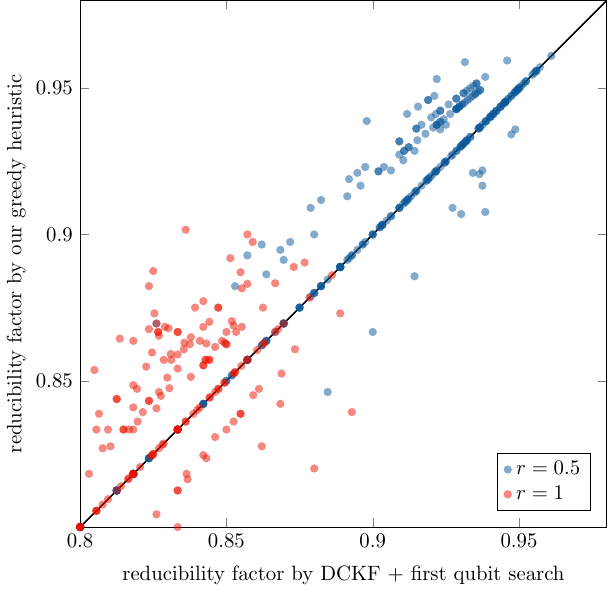}
        \vspace{-1mm}
        \caption*{\scriptsize (b)}
    \end{subfigure}
    \vspace{-1.5mm}
    \caption{(Color online) (a) The reducibility factor of the random circuits evaluated using our greedy heuristic algorithm~\ref{alg:greedy} and the DCKF algorithm; (b) The reducibility factor of the random circuits evaluated using our greedy heuristic algorithm~\ref{alg:greedy} and the improved DCKF algorithm (DCKF + first qubit search).}
    \label{fig:further result on random circuit}
\end{figure}

\subsection{Noisy simulation}

To further demonstrate the practical efficacy of the proposed methods, we design a noisy simulation of an 11-qubit Bernstein-Vazirani (BV) algorithm, specifically targeting the real-world 11-qubit trapped-ion quantum computer reported in~\cite{wright2019benchmarking}. The logical-physical qubit mapping used and the probability of getting the correct outcome in each circuit are listed in Table~\ref{tab:qubit mapping} below. For dynamic circuit compilation, we always reuse the logical qubit $q_0$ for other qubits. This example is for an illustrative purpose under certain assumptions. We assume that CNOT gates admit the same performance as the native X-X gates. In addition, we utilize depolarizing noise as the noise model to simulate the reported error rate, which may not perfectly replicate real-world conditions. Nonetheless, the simulation can be more fine-grained for interested readers and the experiments can be readily conducted on quantum hardware when available. 

\setlength\extrarowheight{1pt}
\begin{table}[!htb]
\scriptsize	
\caption{The mapping of logical qubits ($q$) to physical qubits ($Q$) and the probability of obtaining the correct outcome in each circuit.}
\label{tab:qubit mapping}
\centering
\begin{tabular}{ccccccccccccc}
\toprule[1pt]
              & $q_0$ & $q_1$ & $q_2$ & $q_3$ & $q_4$ & $q_5$ & $q_6$ & $q_7$ & $q_8$ & $q_9$    & $q_{10}$ & probability \\
\midrule
BV\_11        & $Q_3$ & $Q_0$ & $Q_2$ & $Q_4$ & $Q_5$ & $Q_6$ & $Q_7$ & $Q_8$ & $Q_9$ & $Q_{10}$ & $Q_1$    & $66.2\%$    \\
BV\_10        & $Q_3$ & $Q_0$ & $Q_4$ & $Q_5$ & $Q_6$ & $Q_7$ & $Q_8$ & $Q_9$ & $Q_{10}$ & $Q_1$ & -        & $66.7\%$    \\
BV\_9         & $Q_3$ & $Q_0$ & $Q_4$ & $Q_6$ & $Q_7$ & $Q_8$ & $Q_9$ & $Q_{10}$ & $Q_1$ & -     & -        & $68.5\%$    \\
BV\_8         & $Q_3$ & $Q_0$ & $Q_4$ & $Q_6$ & $Q_7$ & $Q_8$ & $Q_9$ & $Q_1$ &  -       & -     & -        & $69.1\%$    \\
BV\_7         & $Q_3$ & $Q_0$ & $Q_4$ & $Q_7$ & $Q_8$ & $Q_9$ & $Q_1$ &  -    &  -       & -     & -        & $70.5\%$    \\
BV\_6         & $Q_3$ & $Q_0$ & $Q_4$ & $Q_7$ & $Q_9$ & $Q_1$ &  -    &  -    &  -       & -     & -        & $71.5\%$    \\
BV\_5         & $Q_3$ & $Q_0$ & $Q_7$ & $Q_9$ & $Q_1$ &  -    &  -    &  -    &  -       & -     & -        & $73.1\%$    \\
BV\_4         & $Q_3$ & $Q_0$ & $Q_7$ & $Q_1$ &  -    &  -    &  -    &  -    &  -       & -     & -        & $73.6\%$    \\
BV\_3         & $Q_3$ & $Q_0$ & $Q_1$ &  -    &  -    &  -    &  -    &  -    &  -       & -     & -        & $74.1\%$    \\
BV\_2         & $Q_3$ & $Q_1$ &  -    &  -    &  -    &  -    &  -    &  -    &  -       & -     & -        & $74.3\%$    \\
\bottomrule[1pt]  
\end{tabular}
\end{table}

\section{Open Problems}
\label{sec:open problems}
In this section, we provide some detailed discussion of several open problems related to our work.

\subsection{Compiling dynamic quantum circuit}
\label{subsec:compiling dynamic quantum circuit}

The most significant open problem pertains to extending the approach outlined in the main text to dynamic quantum circuits. Specifically, we can consider to begin with a dynamic quantum circuit and seek to compile it into a circuit with a smaller width. However, this extension introduces greater complexities, as it necessitates determining whether a given dynamic quantum circuit can be further reduced in terms of qubit count. This problem is intricately linked to the optimality of quantum circuit compilation. Specifically, consider the decision problem of whether a given static circuit can be compiled into a dynamic circuit of a given size $k$. Then the optimization version of the problem reduces to this decision problem through a binary search running logarithmically in the size of the original circuit. Moreover, the task of determining if a dynamic quantum circuit can be further reduced is equivalent to questioning if the corresponding static circuit can be reduced to a circuit of size smaller than the current compilation. Combining these arguments, we can conclude that determining the reducibility of a dynamic circuit is as hard as finding the optimal compilation of a static circuit.

Therefore, it is crucial to emphasize that the applicability of our reducibility checking approaches, as proposed in Propositions~\ref{prop:reducibility through DFS},~\ref{prop:reducibility from qubit reachability} and~\ref{prop:reducibility through matrix}, is currently limited to static circuits. We provide an illustrative example below to highlight the challenges and considerations involved in compiling a dynamic quantum circuit. Examining the dynamic circuit illustrated in Figure~\ref{fig:reducible dynamic circuit} and the corresponding DAG representation in Figure~\ref{fig:reducible dynamic dag} it becomes apparent that any terminal vertex is reachable from any root vertex. Consequently, it shall be concluded that this dynamic circuit is irreducible.

\begin{figure}[H]
    \centering
    \begin{subfigure}[b]{.45\textwidth}
        \centering
        \includegraphics[scale=0.8]{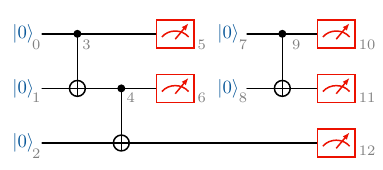}
        \caption{An `irreducible' dynamic circuit.}
        \label{fig:reducible dynamic circuit}
    \end{subfigure}
    \begin{subfigure}[b]{.45\textwidth}
        \centering
        \includegraphics[scale=0.8]{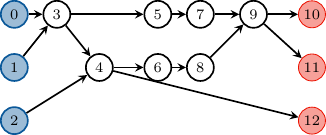}
        \vspace{2.5mm}
        \caption{DAG representation of the dynamic circuit}
        \label{fig:reducible dynamic dag}
    \end{subfigure}
    \caption{(Color online) A dynamic quantum circuit and its DAG representation.}
    \label{fig:reducible dynamic circuit and its dag}
\end{figure}

Nevertheless, further exploration reveals that the dynamic circuit can actually be compiled into the dynamic circuit shown in Figure~\ref{fig:minimum dynamic circuit}, with the number of qubits reduced by one.

\begin{figure}[ht]
    \centering
    \includegraphics[width=0.5\textwidth]{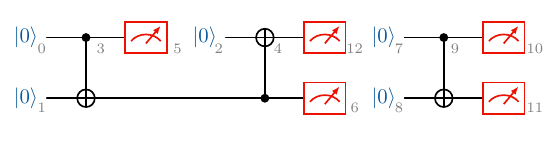}
    \vspace{-6mm}
    \caption{(Color online) The dynamic quantum circuit reduced from the dynamic circuit in Figure~\ref{fig:reducible dynamic circuit}.}
    \label{fig:minimum dynamic circuit}
\end{figure}

Interestingly, the dynamic circuit depicted in Figure~\ref{fig:minimum dynamic circuit} can be regarded as a dynamic circuit compilation outcomes of the static circuit in Figure~\ref{fig:reducible static circuit}. As shown in Figure~\ref{fig:different schemes}, these two compilation schemes can be implemented by adding dashed orange edges and dashed green edges to the DAG representation of the static circuit in Figure~\ref{fig:reducible static circuit}, respectively. Furthermore, when dealing with the compilation of dynamic quantum circuits, special consideration is imperative for classically controlled gates as these may introduce additional dependencies within the DAG of quantum circuits. Therefore, a reasonable starting point in the compilation of a dynamic circuit is to initiate the conversion of the dynamic circuit into an equivalent static circuit. Following this conversion, the methodology in the main text can be employed to explore the existence of an enhanced compilation scheme in comparison to the original dynamic circuit.

\begin{figure}[!htb]
    \centering
    \begin{subfigure}[b]{.4\textwidth}
        \centering
        \includegraphics[width=0.7\textwidth]{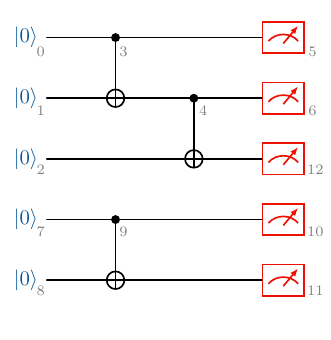}
        \vspace{-5mm}
        \caption{The original reducible static circuit.}
        \label{fig:reducible static circuit}
    \end{subfigure}
    \begin{subfigure}[b]{.44\textwidth}
        \centering
        \includegraphics[width=0.6\textwidth]{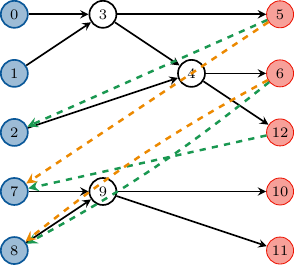}
        \caption{Different compilation schemes.}
        \label{fig:different schemes}
    \end{subfigure}
    \caption{(Color online) A static quantum circuit that can be reduced to both the dynamic circuit in Figure~\ref{fig:reducible dynamic circuit} and~\ref{fig:minimum dynamic circuit}. The compilation scheme indicated by the dashed orange edges results in the dynamic circuit shown in Figure~\ref{fig:reducible dynamic circuit}, while the compilation scheme by the dashed green edges leads to the circuit in Figure~\ref{fig:minimum dynamic circuit}.}
    \label{fig:different compilation schemes on static circuit}
\end{figure}

The complete algorithm for converting a dynamic circuit to a static one is provided in Algorithm~\ref{alg:expand algorithm}. The central concept involves assigning a new quantum register for each reset operation in the dynamic circuit. To elaborate, we traverse the dynamic circuit's instructions, and whenever we encounter a reset operation, we allocate a new qubit based on the current circuit width. All operations performed on the reset qubit after this reset operation are subsequently transferred to the newly allocated qubit. Through this process, the dynamic circuit can be transformed into an equivalent static circuit, essentially representing the reverse of our dynamic circuit compilation procedure. It is worth noting that Algorithm~\ref{alg:expand algorithm} can be adapted to accommodate dynamic circuits that feature classically controlled quantum gates with slight modification.

\BlankLine

\begin{algorithm}[!htb]
\small
\caption{Converting dynamic circuit to static circuit}
\label{alg:expand algorithm}
\LinesNumbered

\KwIn{\\  \begin{tabular}{p{3.5cm}l}
     \textit{DynamicCircuit}  & the instruction list of a dynamic quantum circuit\\ 
\end{tabular}}

\BlankLine

\KwOut{\\  \begin{tabular}{p{3.5cm}l}
     \textit{StaticCircuit}      & the instruction list of the compiled quantum circuit\\
\end{tabular}}

\BlankLine

Let $n$ be the dynamic quantum circuit width\;

Initialize two empty lists \textit{StaticCircuit} and \textit{Measurements}\;

\BlankLine

\ForEach{instruction {\rm \textbf{in}} DynamicCircuit}{
    \uIf{instruction is a measurement}{
    Append \textit{instruction} to \textit{Measurement}\;
    }
    \uElseIf{instruction is a reset}{
    Record the value in QUBIT of \textit{instruction} as \textit{ResetQubit}\;
    Initialize an empty list \textit{PostResets}\;
    Record all instructions subsequent to \textit{instruction} in \textit{DynamicCircuit} to \textit{PostResets}\;
    Remove all instructions subsequent to \textit{instruction} from \textit{DynamicCircuit}\;
        \ForEach{PostInstruction {\rm \textbf{in}} PostResets}{
            \ForEach{Qubit {\rm \textbf{in}} QUBIT of PostInstruction}{
                \lIf{Qubit is equal to ResetQubit}{
                Update \textit{Qubit} to $n$}
            }
        }
    Update $n$ to $(n+1)$\;
    Append \textit{PostReset} to \textit{DynamicCircuit}\;
    }
    \ElseIf{instruction is other single-/two-qubit gate}{
    Append \textit{instruction} to \textit{StaticCircuit}\;
    }
}

Append \textit{Measurement} to \textit{StaticCircuit}\;

\Return \textit{StaticCircuit}

\end{algorithm}

\BlankLine

\subsection{NP-hardness of the dynamic circuit compilation problem}

We have noted a remarkable similarity between the graph optimization problem explored in this work and the well-established Maximum Acyclic Subgraph (MAS) problem in graph theory. The MAS problem for a graph $(V, E)$ involves identifying a subgraph $(V, E')$ with $E' \subseteq E$, such that it contains no cycles and has the maximum number of edges. Our problem can be reformulated as a constrained version of the MAS problem. Let $(V, E)$ be the simplified DAG and $\bar{E}$ be the set of candidate edges. We can first incorporate all candidate edges into the simplified DAG and try to identify the maximum acyclic subgraph in the resultant graph $(V, E \cup \bar{E})$ under the additional constraint that only edges in set $\bar{E}$ can be removed. Suppose the solution to this problem yields a subgraph $(V, E \cup \bar{E}')$ where $\bar{E}' \subseteq \bar{E}$, then the maximum matching in the graph $(V, \bar{E}')$ corresponds to a feasible solution to our problem. It is worth noting that the MAS problem has been well-established as an NP-hard problem~\cite{karp1972reducibility}. Therefore, we expect that the optimal compilation of a quantum circuit may similarly belong to the class of NP-hard. However, the formal proof of its NP-hardness remains open. Additionally, several algorithms have been proposed in the literature to find approximate solutions for the MAS problem~\cite{hassin1994approximations, cvetkovic2020maximal}, which may offer valuable insights for seeking approximate solutions to the dynamic circuit compilation.

\end{document}